\renewenvironment{thebibliography}[1]{
  \begin{oldthebibliography}{#1}
    \setlength{\itemsep}{0.42em} 

    \setlength{\parskip}{0em}
}
{
  \end{oldthebibliography}
}
\renewcommand{\th}{\theta}
\numberwithin{equation}{section}
\newtheorem{theorem}{Theorem}[section]
\newtheorem{lemma}[theorem]{Lemma}
\newtheorem{proposition}[theorem]{Proposition}
\theoremstyle{definition}
\newtheorem{definition}[theorem]{Definition}
\newtheorem{procedure}[theorem]{Procedure}
\newenvironment{example}
{\pushQED{\qed}\examplex}
{\popQED\endexamplex} 
\newenvironment{remark}
{\pushQED{\qed}\remarkx}
{\popQED\endremarkx} 
\newtheoremstyle{citing}
{}
{}
{\itshape}
{}
{\bfseries}
{\textbf{.}}
{.5em}
{\thmnote{#3}}
{\theoremstyle{citing}
}
\DeclareMathOperator{\init}{in}
\DeclareMathOperator{\ind}{ind}
\DeclareMathOperator{\rank}{rank}
\DeclareMathOperator{\Ann}{Ann}
\DeclareMathOperator{\Char}{Char}
\DeclareMathOperator{\Sol}{\mathcal{S}ol}
\DeclareMathOperator{\Sing}{Sing}
\DeclareMathOperator{\Start}{Start}
\DeclareMathOperator{\Li}{Li}
\DeclareMathOperator{\ord}{ord}
\newcommand{\NN}{\mathbb{N}}
\newcommand{\CC}{\mathbb{C}}
\newcommand{\RR}{\mathbb{R}}
\newcommand{\ZZ}{\mathbb{Z}}
\newcommand{\QQ}{\mathbb{Q}}
\newcommand{\KK}{\mathbb{K}}
\renewcommand{\d}{\mathrm{d}}
\newcommand{\dx}{\partial}
\newcommand{\dy}[1]{\partial_{y_{#1}}}
\newcommand{\thy}[1]{\theta_{y_{#1}}}
\newcommand{\pt}{\partial}
\pgfplotsset{compat=1.16}
\newtcolorbox{tbox}[1][]{%
    breakable,
    enhanced,
    colframe=black,
    coltitle=white,
    #1
}
\title{$D$-Module Techniques for Solving Differential Equations in the Context of Feynman Integrals}
\author{Johannes Henn, Elizabeth Pratt, Anna-Laura Sattelberger, Simone Zoia}
\date{}
\begin{document}

\maketitle \thispagestyle{empty}

\begin{abstract}
Feynman integrals are solutions to linear partial differential equations with polynomial coefficients. Using a triangle integral with general exponents as a case in point, we compare $D$-module methods to dedicated methods developed for solving differential equations appearing in the context of Feynman integrals, and provide a dictionary of the relevant concepts. In particular, we implement an algorithm due to Saito, Sturmfels, and Takayama to derive canonical series solutions of regular holonomic $D$-ideals, and compare them to asymptotic series derived by the respective Fuchsian systems.
\end{abstract}

\vfill

{\small
\noindent{\bf Authors' addresses:}

\medskip

\noindent Johannes Henn, MPP$^{\dagger}$ \hfill {\tt henn@mpp.mpg.de}  

\noindent Elizabeth Pratt, UC Berkeley$^{\diamond}$ \hfill {\tt epratt@berkeley.edu} 

\noindent Anna-Laura Sattelberger, MPI~MiS$^{\flat}$ and KTH$^{\sharp}$
\hfill {\tt anna-laura.sattelberger@mis.mpg.de}  

\noindent Simone Zoia, INFN and University of Turin$^{\circ}$, and CERN$^{\tiny \triangle}$ {\em (current)}
 \hfill{\tt simone.zoia@cern.ch}

\bigskip

\noindent $^{\dagger}$ Max Planck Institute for Physics, Boltzmannstra{\ss}e~8,
85748 Garching, Germany

\noindent $^{\diamond}$ Dept.\ of Mathematics, UC Berkeley, 970 Evans Hall \# 3840, Berkeley, CA 94720-3840 USA

\noindent $^{\flat}$ Max Planck Institute 
for Mathematics in the Sciences, Inselstra{\ss}e~22, 04103~Leipzig, Germany

\noindent $^{\sharp}\,$  \hspace{-.35mm}Department of Mathematics, KTH Royal Institute of Technology, 100~44~Stockholm, Sweden

\noindent $^{\circ}\,\,$Dipartimento di Fisica and Arnold-Regge Center, Università di Torino, and INFN, Sezione di\linebreak \hspace*{1.5mm} Torino, Via P.\ Giuria 1, 10125 Torino, Italy

\noindent \hspace*{-1.75mm} $^{\tiny \triangle}$\hspace*{.5mm}CERN, Theoretical Physics Department, 1211 Geneva 23, Switzerland
}

\newpage 

{\hypersetup{linkcolor=black}
\setcounter{tocdepth}{2}
\tableofcontents
}

\newpage
\section{Introduction}
Differential equations are ubiquitous in physics. Their solutions are often expressed in terms of special functions. Well-known examples are the hypergeometric function, or the Hermite polynomials that appear in the solution of the quantum mechanical hydrogen atom.
In perturbative quantum field theory, a crucial bottleneck is the evaluation of Feynman integrals. In particular, obtaining analytic or numerical results for higher-loop Feynman integrals is paramount for deriving more precise theory predictions for processes at the Large Hadron Collider. Because of this, substantial research efforts are dedicated to evaluating Feynman integrals. These integrals can be seen as solutions to linear partial differential equations (PDEs) with polynomial coefficients.
Analyzing these PDEs and extracting useful information from them is an important problem in this field. In this paper, we analyze this problem from two perspectives. The first one consists of $D$-module techniques from algebraic analysis in mathematics, and the second one consists of dedicated tools developed for solving the specific differential equations satisfied by Feynman integrals.

\medskip

In the first approach, one uses the fact that systems of linear PDEs generate ideals in the Weyl algebra, which is denoted by $D$. Powerful $D$-module techniques~\cite{SST00,SatStu19} allow us to extract precious information from the PDEs. For instance, the singular locus of a $D$-ideal $I$ describes where the solutions to the system of PDEs encoded by $I$ may have singularities. The notion of a $D$-ideal being regular singular is inherently linked to its solutions having moderate growth---which is expected for Feynman integrals. The holonomic rank of a holonomic \mbox{$D$-ideal} tells us about the number of linearly independently solutions. It corresponds to the number of master integrals in the physics terminology. Finally, Gr\"obner basis methods allow us to compute canonical series solutions. These are solutions of a very special form: they are polynomials in the variables $x_i$'s and $\log(x_i)$'s, with coefficients in formal power series in the $x_i$. In the regular holonomic case, the solution space is fully spanned by solutions of that particular form. We compute these solutions with the help of algebro-geometric properties and the Gr\"{o}bner data of the $D$-ideal via an algorithm of Saito, Sturmfels, and Takayama~(SST). 
Given a regular holonomic \mbox{$D$-ideal} $I$
and a real weight vector $w$, this algorithm allows us to compute all terms up to a specified $w$-weight $k$ in the canonical series solutions to~$I$.
These truncated series take the~form
\begin{align}\label{eq:Fkintro}
F_k(x) \,=\ x^A \cdot \hspace*{-5mm} \sum _{\substack{0\leq p\cdot w\leq k,\, p\in C^*_{\mathbb{Z}}, \\ 0 \leq b_j < \rank(I)}} c_{pb} \, x^p\log(x)^b \, ,
\end{align}
where we use the multi-index notation $x^a = x_1^{a_1} \cdots x_n^{a_n}$, and similarly for the logarithms. 
The index set $C^*_{\mathbb{Z}}$ over which the sum in \eqref{eq:Fkintro} runs can be read completely from the $D$-ideal; this will be made precise in the article.

The second approach builds on various techniques that have been developed for evaluating Feynman integrals. The latter satisfy Picard--Fuchs equations~\cite{LV22,Muller-Stach:2012tgj}, which are higher-order linear PDEs for individual Feynman integrals, or, equivalently, systems of first-order equations for 
master integrals, see \cite{Argeri:2007up,Henn:2014qga} for reviews. The Fuchsian nature of the singularities---which is expected for Feynman integrals---allows one to derive asymptotic expansions around singular points using a method of Wasow~\cite{Wasow}.

\medskip 

There have been various interactions between the relevant mathematics and physics communities already, resulting in several interesting works, for example on determining the number of master integrals \cite{AFST22,FeynmanAnnihilator,Lee:2013hzt}, on deriving Picard--Fuchs equations for Feynman integrals \cite{LV22,Muller-Stach:2012tgj}, and on relating Feynman integrals and GKZ systems \cite{Ananthanarayan:2022ntm,MacaulayFeynman,Cruz19,MellinBarnes,KK1976,Walther22}. 
However, to our knowledge, a systematic synthesis and comparison of \mbox{$D$-module} techniques and methods employed in high energy physics has not yet been done. In this work, we initiate such a comparison, with the aim of providing insights for both communities. For this initial study, we focus on a particular system of physically motivated PDEs
and show how one obtains canonical series solutions of the form~\eqref{eq:Fkintro}, both using the SST algorithm and Wasow's method. In the latter, we capture the  weight vector from the Gr\"{o}bner techniques via an auxiliary variable.

\medskip

\textbf{Outline.} 
\Cref{sec:confdiffeq} introduces the systems of PDEs we are investigating here, together with their origin in physics. \Cref{sec:Weylalgebra} 
provides background on ideals in the Weyl algebra and their (multi-valued) holomorphic solutions that will be needed throughout this article. It also outlines how to compute canonical series solutions of regular holonomic \mbox{$D$-ideals} via the SST algorithm, which is based on Gr\"{o}bner basis computations in the Weyl algebra. 
In \Cref{sec:solI3Groebner}, we compute canonical series solutions for the three-particle case using the SST algorithm. In \Cref{sec:solWasow}, we recover these solutions via a method of Wasow for computing solutions of systems that are in Fuchsian form. We conclude with an outlook to future work in \Cref{sec:outlook}. Appendices \ref{appendix:conformal} and \ref{appendix:IntegrPfaf} contain some background on conformal symmetry as well as on integrable~connections. {In  Appendix~\ref{sec:fourloop}, we provide an application of this method to a four-loop Feynman integral.}

\section[Running example: a triangle integral \& conformal differential equations]{Running example: a triangle integral and conformal differential equations}\label{sec:confdiffeq}
As a test case for this work, we study a particular class of Feynman integrals: the one-loop ``triangle'' Feynman integrals,
\begin{align} \label{eq:triangle_mom}
J^{\text{triangle}}_{d;\nu_1,\nu_2,\nu_3} \,=\, \int_{\RR^d} \frac{\mathrm{d}^d k}{\mathrm{i} \pi^{\frac{d}{2}}} \frac{1}{
(-|k|^2)^{\nu_1} \, 
(-|k+p_1|^2)^{\nu_2} \, (-|k+p_1+p_2|^2)^{\nu_3} } \,,
\end{align}
which correspond to the Feynman graph shown in \Cref{fig:triangle}. 
These integrals are relevant to describe the scattering of three particles with momenta $p_1,p_2,p_3 \in \mathbb{R}^d$ (with $p_3=-p_1-p_2$ due to momentum conservation) in a $d$-dimensional Minkowski spacetime. We denote by $|v|^2$ the Minkowski norm of the $\mathbb{R}^d$-vector $v$, i.e., $|v|^2 \coloneqq v^{\top} \cdot g \cdot v$, where $g = \operatorname{diag}(1,-1,\ldots,-1)$ is the metric tensor of Minkowski spacetime. 
The triangle Feynman integrals depend on the momenta of the scattering particles only through three independent variables $x=\{x_1,x_2,x_3\}$,
\begin{align}
x_1 \,=\, |p_1|^2 \,, \qquad x_2 \,=\, |p_2|^2 \,, \qquad x_3 \,=\, |p_1+p_2|^2 \,.
\end{align}
This becomes apparent in the Feynman parameterization,
\begin{align}
J^{\text{triangle}}_{d;\nu_1,\nu_2,\nu_3} \ =\ \Gamma\left(\nu-\frac{d}{2}\right) \int_{\alpha_i \ge 0}  
\left(\prod_{i=1}^3 \mathrm{d}\alpha_i \,  \frac{\alpha_i^{\nu_i-1}}{\Gamma(\nu_i)} \right)  \delta\left(\alpha_1 + \alpha_2 + \alpha_3 -1 \right) \, \frac{\mathcal{U}^{\nu-d}}{\mathcal{F}^{\nu-\frac{d}{2}}} \,,
\end{align}
where $\nu=\nu_1+\nu_2+\nu_3$, $\delta$ is the Dirac delta function,
\begin{align}
\mathcal{U} \,=\, \alpha_1+\alpha_2+\alpha_3  \,,
\qquad \text{and} \qquad \mathcal{F} \,=\, -\alpha_1 \alpha_2 x_1 - \alpha_2 \alpha_3 x_2 - \alpha_3 \alpha_1 x_3 \,.
\end{align}
The triangle integrals with integer exponents $\nu_i$ were computed analytically in~\cite{Boos:1987bg,Davydychev:1992xr,Usyukina:1992jd,Usyukina:1992wz,Usyukina:1994iw}. 
Expressions valid for generic values of the exponents $\nu_i$ have been obtained in terms of Appell’s hypergeometric function $F_4$~\cite{Anastasiou:1999ui,Boos:1990rg,Coriano:2013jba,Cruz19} or triple-$K$ integrals~\cite{BMSconf}.

We approach the computation of these integrals by exploiting symmetries, which are encoded via differential equations. The triangle integrals are in fact solutions to a system of linear second-order~PDEs,
\begin{align} \label{eq:Pf}
    P_i \bullet f(x) \,=\, 0 \, \qquad \text{for} \ i=1,2,3\,,
\end{align}
originating from conformal symmetry, where
\begin{align}\label{I3system}
\begin{split}
P_1 &\,=\, 4(x_1\partial_1^2 - x_3\partial_3^2) + 2(2 + c_0 - 2c_1)\partial_1 -2(2 + c_0 - 2c_3)\partial_3 \, ,\\
P_2  & \,=\, 4(x_2\partial_2^2 - x_3\partial_3^2) + 2(2 + c_0 - 2c_2)\partial_2 -2(2 + c_0 - 2c_3)\partial_3 \, , \\
P_3 & \,=\, (2c_0 - c_1 - c_2 - c_3) + 2(x_3\partial_3 + x_2\partial_2 + x_1\partial_1)\, .
\end{split}
\end{align}
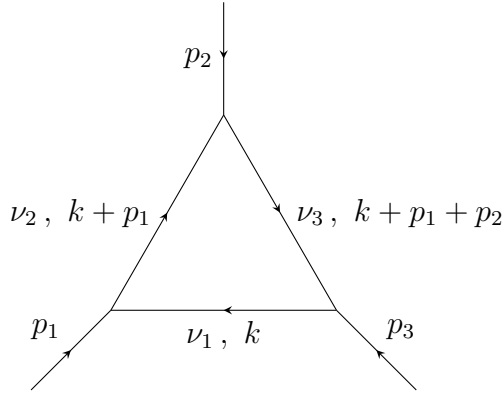
\begin{figure}[t!]
\begin{tikzpicture}[decoration={
    markings,
    mark=at position 0.5 with {\arrow{stealth}}}
    ] 
\coordinate (v1) at (0,0);
\coordinate (v2) at (1.5,2.598);
\coordinate (v3) at (3,0);
\coordinate (p1) at (-1.0607, -1.0607);
\coordinate (p2) at (1.5, 4.098);
\coordinate (p3) at (4.0607, -1.0607);
\draw[postaction={decorate}] (v1) -- (v2) node[left,midway,anchor=east] {$\nu_2 \, , \ k+p_1 \,$};
\draw[postaction={decorate}] (v2) -- (v3) node[right,midway,anchor=west] {$\, \nu_3 \,, \ k+p_1+p_2$};
\draw[postaction={decorate}] (v3) -- (v1) node[below,midway] {$\nu_1 \,, \ k$};
\draw[postaction={decorate}] (p1) -- (v1) node[above left,midway,anchor=south east] {$p_1$};
\draw[postaction={decorate}] (p2) -- (v2) node[left,midway,anchor=east] {$p_2$};
\draw[postaction={decorate}] (p3) -- (v3) node[above right,midway,anchor=south west] {$p_3$};
\end{tikzpicture}
\caption{The Feynman graph 
representing the one-loop triangle Feynman integrals defined in \Cref{eq:triangle_mom}. Due to momentum conservation, $p_1+p_2+p_3=0$. Next to the internal edges, we record the corresponding exponent $\nu_i$ as well as the loop momentum.}
\label{fig:triangle}
\end{figure}

\noindent Here, $c_0=d$ is the number of spacetime dimensions, and $c_1,c_2,c_3$ are the conformal weights. 
We give some background on conformal symmetry and derive these PDEs in \Cref{appendix:conformal}.~For 
\begin{align} \label{eq:c2nu}
  c_0 \,=\, d\,, \qquad c_1 \,=\, d - \nu_2 - \nu_3 \,, \qquad c_2 \,=\, d - \nu_1 - \nu_3 \,, \qquad c_3 \,=\, d - \nu_1 - \nu_2 \,,
\end{align}
the triangle integrals $J^{\text{triangle}}_{d;\nu_1,\nu_2,\nu_3}$ are solutions to the conformal PDEs~\eqref{eq:Pf}.

Let us review a few useful properties of the operators $P_i$ in~\eqref{I3system}. First of all, the operator~$P_3$ implies that the solutions are homogeneous of degree $(2 c_0-c_1-c_2-c_3)/2$. 
Secondly, the system~\eqref{eq:Pf} is symmetric under permutations of the variables $x_i$ (together with the corresponding conformal weight $c_i$). While the operator $P_3$ is manifestly symmetric, the permutations map $P_1$ and $P_2$ into $\mathbb{Q}$-linear combinations of themselves. As a result, the solution space is symmetric as well.
For computing the solutions, we will focus on the physically motivated case $c_0=4$, $c_1 = c_2 = c_3=2$, which corresponds to the (classically) conformal $\phi^4$-theory in four spacetime dimensions.\footnote{Other choices can also be of physical interest, and may be treated similarly. See e.g.~\cite{Coriano:2013jba} for a solution for general values of the indices in terms of Appell's hypergeometric function $F_4$. See also~\cite{Barnes:2010jp,Bautista:2019qxj,Bzowski:2020lip,BMSconf,Bzowski:2015yxv,Coriano:2013jba,Gillioz:2019lgs,Isono:2019ihz} for related work.}
In this case, the operators from~\eqref{I3system}~become 
\begin{align}\label{eq:p1p2p3tilde}
\begin{split}
\tilde{P}_1 &\,=\, x_1\partial_1^2 - x_3\partial_3^2 + \partial_1 -\partial_3 \,,\\
\tilde{P}_2 &\,=\,x_2\partial_2^2 - x_3\partial_3^2 + \partial_2 -\partial_3 \,,\\
\tilde{P}_3 &\,=\,  x_1\partial_1 + x_2\partial_2 + x_3\partial_3 + 1 \,.
\end{split}
\end{align}
Later on, we will consider the $D_3$-ideal generated by the operators in \eqref{eq:p1p2p3tilde} and will denote it by $I_3.$
Through \Cref{eq:c2nu}, we see that the relevant triangle integral has $d=4$ and unit exponents, $\nu_1=\nu_2=\nu_3=1$. A convenient analytic expression for the latter is
\begin{align} \label{eq:triangle}
\begin{split}
J^{\text{triangle}}_{4;1,1,1}(x_1,x_2,x_3) & \,=\, \frac{1}{\sqrt{\lambda}} 
\Big[ 2 \, \Li_{2}\left( \tau_2 \right) + 2 \, {\rm Li}_{2}\left( \tau_3 \right) + \frac{\pi^2}{3}  \\
& \qquad + \log \left( \frac{\tau_3}{\tau_2} \right) \log \left( \frac{1-\tau_3}{1-\tau_2} \right)  
+ \log \left(-{\tau_2} \right) \log \left(-\tau_3 \right) \Big] \,,
\end{split}
\end{align}
from~\cite{Zoia:2021zmb}, where the function $\Li_2$ denotes the {\em dilogarithm} (cf.~\cite{higherlog}),  $\lambda$ is the polynomial
\begin{align} \label{eq:lambda}
\lambda \,\coloneqq \,  x_1^2 + x_2^2 + x_3^2 - 2(x_1x_2 + x_1x_3 + x_2x_3) \,\in \, \CC[x_1,x_2,x_3] \,,
\end{align}
and
\begin{align} \label{eq:tau}
\tau_2 \,=\, - \, \frac{2 x_2}{x_1-x_2-x_3-\sqrt{\lambda}}  \quad \text{and} \quad \tau_3 \,=\,- \, \frac{2 x_3}{x_1-x_2-x_3-\sqrt{\lambda}} \, .
\end{align}
The function in \Cref{eq:triangle} is a solution of $I_3$. It is a multivalued function and its analytic continuations are solutions of $I_3$ as well. The $\CC$-vector space of the analytic continuations of $J^{\text{triangle}}_{4;1,1,1}$ is spanned by the four linearly independent functions
\begin{align}\label{eq:f1234}
\begin{split}
 f_1(x_1,x_2,x_3) & \,=\, J^{\text{triangle}}_{4;1,1,1}(x_1,x_2,x_3) \,, \\
f_2(x_1,x_2,x_3) & \,=\, \frac{1}{\sqrt{\lambda}} \log\left( \frac{x_1-x_2-x_3 - \sqrt{\lambda}}{x_1-x_2-x_3 + \sqrt{\lambda}} \right) , \\ 
f_3(x_1,x_2,x_3) & \,=\, \frac{1}{\sqrt{\lambda}} \log\left( \frac{x_2-x_1-x_3 - \sqrt{\lambda}}{x_2-x_1-x_3 + \sqrt{\lambda}} \right) , \\
f_4(x_1,x_2,x_3) & \,=\, \frac{1}{\sqrt{\lambda}} \,.
\end{split}
\end{align}
The functions $f_2,\, f_3, \, f_4$ are called {\em discontinuities} of $f_1 $ in physics, i.e., they are differences of $f_1$ and an analytic continuation of $f_1.$
This can be shown conveniently using the {\em symbol method}~\cite{Goncharov:2010jf} (for a review, see e.g.~\cite{Duhr:2014woa} and the  references therein). 
We will see in~\Cref{sec:solI3Groebner} that the holonomic rank of $I_3$ is $4$ and that this implies that the four functions in~\eqref{eq:f1234} span the whole space of holomorphic solutions to the system of PDEs encoded by $I_3$. All these functions have moderate growth when approaching points of the singular locus
\begin{align}\label{eq:SingIfs}
\left\{x_1 =0 \right\} \,\cup\, \left\{x_2 =0\right\} \,\cup\, \left\{x_3=0\right\} \,\cup\, \left\{ \lambda = 0 \right\},
\end{align}
and we will argue later that the $D$-ideal $I_3$ is indeed regular singular. As anticipated, the solution space is symmetric under permutations of the variables $x$: the functions $f_1$ and~$f_4$ are invariant, while $f_2$ and $f_3$ transform into $\mathbb{Q}$-linear combinations of themselves. In \Cref{sec:Fuchsian}, we will recover these solutions by solving the Pfaffian system associated with $I_3$.

\section{\texorpdfstring{$D$}{D}-ideals and canonical series solutions}\label{sec:Weylalgebra}
We here recall basics about ideals in the Weyl algebra and their (multi-valued) holomorphic solutions that will be needed for the computation of canonical series solutions later on. This theory will also allow us to derive crucial information about our system of PDEs, such as the singular locus and number of holomorphic solutions, before even computing the solutions. For further details about the theory of $D$-modules and holonomic functions, we refer our readers to \cite{HTT08,SST00,SatStu19} and the references~therein.

\subsection{\texorpdfstring{$D$}{D}-ideals and their solutions}\label{sec:Didealsols}
The ($n$-th) {\em Weyl algebra}, denoted $D_n$ or just $D$, is the free $\CC$-algebra 
\begin{align}
D \, \coloneqq \, \CC[x_1,\ldots,x_n]\langle \pt_1,\ldots,\pt_n \rangle
\end{align}
generated by $x_1,\ldots,x_n,\partial_1,\ldots,\partial_n$ modulo the following relations: all generators are assumed to commute, except $\partial_i$ and $x_i$.
Their commutator is 
\begin{align}
[\partial_i,x_i] \,=\, \pt_i x_i -x_i \pt_i \,=\, 1 \, .
\end{align}
Elements of $D$ are linear differential operators with polynomial coefficients:
\begin{align}
D \,=\, \left\{  \sum_{k \in \NN^n} a_k \partial_1^{k_1}\cdots \pt_n^{k_n} \, | \, a_k \in \CC[x_1,\ldots,x_n], \text{ only finitely many }a_k \text{ non-zero}  \right\}  \, .
\end{align}
The {\em rational Weyl algebra} $R_n=\CC(x_1,\ldots,x_n)\langle \pt_1,\ldots,\pt_n\rangle$ is the ring of linear differential operators with coefficients in the field of rational functions $\CC(x_1,\ldots,x_n) = \{ p/q \, | \, p,q \in \CC[x_1,\ldots,x_n], \, q\neq 0 \}$. We will denote the action of a differential operator $P$ on a function $f(x)$ by the symbol~$\bullet$. For instance, $\partial\bullet f$ denotes ${\partial f}/{\partial x}$.
\begin{example}
The Airy equation 
\begin{align}
f''(x) - xf(x) \,=\, 0
\end{align}
is encoded by the operator $P=\partial^2 -x\in D$. The Airy functions $\operatorname{Ai}$ and $\operatorname{Bi}$ are solutions of~$P$, i.e., $P\bullet \operatorname{Ai}=P \bullet \operatorname{Bi}=0$. They span the $2$-dimensional solution space of $P$.
\end{example}
A {\em (left) $D$-ideal} is a subset $I\subset D$ that is closed under addition and multiplication by elements of~$D$ from the left. They encode systems of linear PDEs. For a vector $(u,v)\in\RR^{2n}$, the $(u,v)$-{\em weight} of $x^{\alpha}\partial^{\beta}$, $\alpha,\beta \in \NN^n$, is $u\cdot \alpha + v\cdot \beta$. By $\operatorname{ord}_{(u,v)}(P)$, we denote the largest $(u,v)$-weight among the monomials appearing in $P$. The {\em characteristic ideal}
of a $D$-ideal $I$ is the ideal $\init_{(0,1)}(I)\subset \CC[x_1,\ldots,x_n][\xi_1,\ldots,x_n]$ generated by the parts of highest $(0,1)$-weight of all differential operators in $I$, where $(0,1)$ denotes the vector $(0,\ldots,0,1,\ldots,1)\in \RR^{2n}$, i.e., one puts weight $0$ to all $x_i$ and weight $1$ to all $\pt_i$. The characteristic ideal lives in the polynomial ring, where one replaces $\partial_i$ by $\xi_i$ to stress that they are now commuting variables. This is due to the fact that the commutator of two operators $P,Q$ has $(0,1)$-weight strictly smaller than $\deg(P)+\deg(Q)$. For $n=1$, the initial $\init_{(0,1)}(P)$ is precisely the principal symbol of the differential operator $P$. The {\em characteristic variety} of $I$ is 
\begin{align}
\Char(I) \, \coloneqq \, V(\init_{(0,1)}(I)) = \left\{ (x,\xi) \,|\, p(x,\xi) = 0 \text{ for all } p\in \init_{(0,1)}(I) \right\} \, \subset \, \CC^{2n} \, .
\end{align}
Here, $\CC^{2n}$ is the affine $2n$-space $\CC_x^n\times \CC_{\xi}^n$ with coordinates $x_1,\ldots,x_n,\, \xi_1,\ldots,\xi_n$. A $D$-ideal is {\em holonomic} if $\dim(\Char(I))=n$. By Bernstein's inequality, all components of $\Char(I)$ have dimension at least $n$; hence, a $D$-ideal $I$ is holonomic if the dimension of its characteristic variety is smallest possible. The Zariski closure of the projection of \mbox{$\Char(I)\setminus \{\xi_1 = \cdots = \xi_n = 0 \}$} to the $x$-coordinates is the {\em singular locus} of $I$ and is denoted by $\Sing(I) \subset \CC_x^n$. Algebraically, $\Sing(I)$ is computed as the elimination ideal
\begin{align}
\Sing(I) \, = \, \left( \init_{(0,1)}(I) \, \colon \langle \xi_1,\ldots,\xi_n \rangle^{(\infty)}\right) \, \cap \, \CC[x_1,\ldots,x_n] \, ,
\end{align}
where $(I\colon J^{(\infty)})$ denotes the saturation of an ideal $I$ by an ideal $J$, the definition of which we recall now. Let $I,J$ be ideals in  a polynomial ring $ \CC[x_1,\ldots,x_n]$. For $k\in \NN$, the {\em ideal quotient} is $(I\colon J^k) \coloneqq \{ p\in \CC[x_1,\ldots,x_n ] \, | \,  pJ^k \subset I\}$. Then, the {\em saturated ideal} $(I\colon J^\infty)$ with respect to $J$ is the $\CC[x_1,\ldots,x_n]$-ideal
\begin{align}
\left(I\colon J^\infty\right) \ \coloneqq \ \bigcup_{k\geq 1} \left(I\colon J^k\right).
\end{align}

\begin{definition}\label{def:holrank}
The {\em holonomic rank} of a $D$-ideal $I$ is 
\begin{align}
\rank(I) \coloneqq \dim_{\CC(x_1,\ldots,x_n)} \left( R_n/R_nI \right) \, .
\end{align}
\end{definition}
If $I$ is holonomic, it follows that $\rank(I)<\infty$. The reverse implication is not true. The holonomic rank can be computed as the number of standard monomials for a Gr\"{o}bner basis~of~$I$ in the rational Weyl algebra, see \Cref{sec:intPfaffian} for more details.

\begin{theorem}[Cauchy--Kovalevskaya--Kashiwara]\label{thm:CKK}
Let $I$ be a holonomic $D_n$-ideal.
On a simply connected domain $U\subset \CC^n\setminus \Sing(I)$, the $\CC$-vector space of holomorphic solutions to~$I$ on~$U$ has dimension~$\rank(I)$.
\end{theorem}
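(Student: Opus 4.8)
The plan is to convert the $D$-ideal $I$ into an equivalent first-order Pfaffian system on $\CC^n\setminus\Sing(I)$ and then invoke the classical existence theory for flat connections. First I would fix a term order on $\partial_1,\dots,\partial_n$ and compute a Gr\"obner basis $G$ of $R_nI$ in the rational Weyl algebra; its standard monomials $\partial^{\beta_1},\dots,\partial^{\beta_r}$ form a $\CC(x_1,\dots,x_n)$-basis of $R_n/R_nI$, so that $r=\rank(I)$. Reducing $\partial_i\,\partial^{\beta_j}$ modulo $G$ writes it as $\sum_k (A_i)_{jk}\,\partial^{\beta_k}$ with $(A_i)_{jk}\in\CC(x_1,\dots,x_n)$, producing matrices $A_1,\dots,A_n$. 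Reducing $\partial_i\partial_j\partial^{\beta}=\partial_j\partial_i\partial^{\beta}$ in the two possible orders yields the same class modulo $R_nI$, which forces the integrability conditions ensuring flatness of the connection $\nabla=\d-\sum_i A_i\,\d x_i$.

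The crucial geometric input — and the step I expect to be the main obstacle — is that the entries of $A_1,\dots,A_n$ are holomorphic on $\CC^n\setminus\Sing(I)$, i.e.\ that the denominators introduced when clearing fractions in the Gr\"obner reduction are confined to the singular locus. This is where holonomicity is genuinely used: by Bernstein's inequality $\dim\Char(I)=n$, and over $\CC^n\setminus\Sing(I)$ the characteristic variety reduces to the zero section $\{\xi_1=\cdots=\xi_n=0\}$; hence the localization $\mathcal{O}_{\CC^n\setminus\Sing(I)}\otimes_{\mathcal{O}}(D/I)$ is $\mathcal{O}$-coherent, and by Kashiwara's theorem an $\mathcal{O}$-coherent $D$-module is a vector bundle equipped with an integrable connection, necessarily of rank $\rank(I)$. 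Concretely this means the $A_i$ are holomorphic off $\Sing(I)$ and $\nabla$ is a flat connection of rank $r$ there. (Alternatively, one could argue via the constructibility of $\Sol(D/I)=\RHom_D(D/I,\mathcal{O})$, which restricts to a local system of rank $\rank(I)$ on the smooth locus, but the Pfaffian route is the one aligned with the rest of this article.)

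Next I would match solutions of $I$ with horizontal sections of $\nabla$. Since $I$ is proper, $\partial^0=1$ is a standard monomial, so the assignment $f\mapsto F=(\partial^{\beta_1}\bullet f,\dots,\partial^{\beta_r}\bullet f)^{\top}$ sends holomorphic solutions of $I$ on $U$ to holomorphic solutions of $\nabla F=0$: if $f$ solves $I$ it solves $G$, and then $\partial_i\bullet(\partial^{\beta_j}\bullet f)=\sum_k (A_i)_{jk}\,\partial^{\beta_k}\bullet f$ is exactly the Pfaffian system. This map is injective because its first component is $f$ itself, and surjective because, given a horizontal $F$, one sets $f=F_1$ and checks by induction along the term order that $\partial^{\beta_j}\bullet f=F_j$ and that $f$ is annihilated by every element of $G$, hence by all of $R_nI$, hence by $I$ (here the hypothesis $U\cap\Sing(I)=\emptyset$ ensures $R_nI$ and $DI$ have the same holomorphic solutions on $U$).

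Finally, on the simply connected domain $U\subset\CC^n\setminus\Sing(I)$ a flat connection of rank $r$ has a $\CC$-vector space of horizontal sections of dimension exactly $r$: locally this is the holomorphic Frobenius theorem (equivalently, iterated Cauchy--Kovalevskaya applied to the compatible system $\partial_iF=A_iF$, $i=1,\dots,n$), and simple connectivity kills the monodromy obstruction, so local flat frames glue to $r$ globally defined horizontal sections. Transporting back through the bijection of the previous paragraph, the space of holomorphic solutions of $I$ on $U$ has dimension $r=\rank(I)$, which is the assertion of the theorem.
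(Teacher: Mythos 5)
The paper states this result without proof: it is a classical theorem (cf.\ Theorem~1.4.19 in the cited \cite{SST00} or \cite{HTT08}), so there is no in-paper argument to compare against. Your proposal is the standard proof — pass to a Pfaffian system, show it is holomorphic off $\Sing(I)$, invoke Frobenius and simple connectivity — and its overall architecture is correct.

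The one step you should tighten is the sentence "Concretely this means the $A_i$ are holomorphic off $\Sing(I)$." The $\mathcal{O}$-coherence/Kashiwara argument guarantees that the restriction of $D/I$ to $\CC^n\setminus\Sing(I)$ is a rank-$\rank(I)$ vector bundle with integrable connection, hence admits \emph{local trivializations} whose connection matrices are holomorphic. It does not by itself guarantee that the specific matrices $A_i$ produced by Gr\"obner reduction in $R_n$ are among them: the standard monomials $\partial^{\beta_1},\dots,\partial^{\beta_r}$ form a basis of the \emph{generic} fibre $R_n/R_nI$, but whether their images give a frame of the bundle at every point off $\Sing(I)$ is a separate claim that requires proof or citation. (The paper's own Section~5.2 shows that Gr\"obner reduction can already produce apparent non-Fuchsian behaviour, which should caution against treating those matrices as automatically "nice.") The clean fix is to bypass the explicit $A_i$ entirely: near each point of $U$ pick \emph{any} local trivializing frame of the bundle, get holomorphic connection matrices, apply Frobenius to obtain an $r$-dimensional local solution space, and use simple connectedness of $U$ to kill monodromy and glue these into a globally $r$-dimensional space. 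Your final bijection between solutions of $I$ and horizontal sections then goes through with the local frame in place of the fixed standard monomials, and the computation with the $A_i$ can be relegated to its proper role as an algorithmic device valid on a dense open subset.
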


\begin{remark}
In \Cref{thm:CKK}, finite holonomic rank is sufficient for the statement to hold. One arrives at a holonomic $D$-ideal by taking the Weyl closure (see \Cref{def:WeylClosure}).
\end{remark}

A function $f$ is called {\em holonomic} if its annihilator
\begin{align}
    \Ann_{D}(f) \, \coloneqq \, \left\{ P\in D \, | \, P\bullet f = 0\right\} \, 
\end{align}
is a holonomic $D$-ideal. Numerous functions in the sciences are holonomic, e.g., hypergeometric functions, many trigonometric functions, some probability distributions, and many special functions 
such as Airy's or Bessel's functions, and polylogarithms. Zeilberger~\cite{Zei90} was the first to study holonomic functions in an algorithmic way. 

If all solutions to a $D_1$-ideal $I$ have moderate growth (cf.~\cite[p.~146]{vdPS}) when approaching the singular locus of~$I$, including at $\infty$ (and at boundary components of a compactification, respectively), the $D_1$-ideal is called {\em regular (singular)}, and {\em irregular singular} otherwise. For~$n>1$, it is more involved to read if a $D_n$-ideal is regular singular. 

Given an annihilating $D$-ideal of some functions, there is a straightforward way to slightly enlarge it, which is at the same time of theoretical interest: the Weyl closure.

\begin{definition}\label{def:WeylClosure}
The {\em Weyl closure} of a $D_n$-ideal $I$, denoted $W(I)$ is the $D_n$-ideal
\begin{align}
W(I) \, \coloneqq \, R_n I \cap D_n \, .
\end{align}
\end{definition} 
Clearly, $I\subseteq W(I)$. Hence, for the singular locus and the space of holomorphic solutions to the system of PDEs encoded by $I$, one has 
\begin{align}
\Sing(I) \,\supseteq\, \Sing(W(I)) \quad \text{and} \quad \Sol(I) \,\supseteq\, \Sol(W(I)) \, .
\end{align}  
Moreover, $\rank(I)=\rank(W(I))$, since $R_n I = R_nW(I)$. Since every element $Q$ of $W(I)$ can be written as $Q=r\cdot P$ for some $r\in \CC(x_1,\ldots,x_n)$ and $P\in I$, we also have the inclusion $\Sol(I)\subseteq \Sol(W(I)).$ Hence, $\Sol(I)=\Sol(W(I))$. In summary, a $D$-ideal $I$ and its Weyl closure $W(I)$ have the same solution space, but $W(I)$ might contain additional operators that annihilate all solutions of $I$.
If $\rank(I)< \infty$, it does not follow that $I$ itself is holonomic; but there is a rescue: taking the Weyl closure turns $I$ into a holonomic $D$-ideal. Hence, techniques that are valid for holonomic $D$-ideals can be applied to non-holonomic $D$-ideals of finite holonomic rank by passing to their Weyl closure.

\subsection{Pfaffian systems}\label{sec:intPfaffian}
If $I$ is a holonomic $D$-ideal, the $D$-module $D/I$ gives rise to a vector bundle of rank $\rank(I)$, say $\rank(I)=m$, with an integrable connection induced by the action of~$D$ on $D/I$. We refer the interested readers to \Cref{appendix:IntegrPfaf}, where we explain this geometric interpretation. In particular, the origin of the integrability conditions of Pfaffian systems becomes visible there. Here, we stick to what is needed for the sake of our computations in~this~paper. 

Let $S= \{ s_1,s_2,\ldots,s_m \}$ be the set of {\em standard monomials} of $I$ in $R_n$ for a chosen term ordering $\prec$, i.e., those monomials $ \partial^b$, $b \in \NN^n$, that are not contained in the initial ideal of~$I$; cf.~\cite[p.~33]{SST00} for more details. Without loss of generality, we can assume $s_1=1$. Let $f\in \Sol(I)$ be a solution to $I$ and let $F=(f,s_2\bullet f, \ldots, s_m \bullet f)$. Since $\rank(I)=m$, there exist unique matrices $P_1,\ldots,P_n\in \CC(x_1,\ldots,x_n)^{m\times m}$ such that 
\begin{align}\label{eq:Pfaffsys}
\partial_i \bullet F \,=\, P_i \cdot F \qquad \text{for }\ i=1,\ldots,n \, .
\end{align}
The matrices $P_i$ can be computed via a Gr\"{o}bner basis reduction (cf.~\cite[p.~23]{SatStu19}).
The system in \eqref{eq:Pfaffsys} is the {\em Pfaffian system of~$I$} for the chosen term ordering on the Weyl algebra. The matrices $P_i$ obey the integrability condition, which translates as 
\begin{align}\label{eq:intmatrices}
P_i \cdot P_j - P_j \cdot P_i \ =\ \partial_i \bullet P_j - \partial_j \bullet P_i \qquad \text{for all } i,j \, ,
\end{align}
On the right hand side of \eqref{eq:intmatrices}, entry-wise differentiation the matrices $P_i$ is meant.

\subsection{Indicial ideals and the Nilsson ring}
We first recall some results from \cite[Sections 2.2--2.6]{SST00} in Theorems~\ref{thm:rankinw}--\ref{thm:liftalgo} below, in order to present the overall strategy to compute solutions of our $D$-ideals. 
Here and throughout the rest of the file, weight vectors for the Weyl algebra are allowed to be taken from 
\begin{align}\label{eq:weightsD}
W \,\coloneqq \, \left\{ (u,v)\in \RR^{2n} \, |\, u_i+v_i \geq 0, \, i=1,\ldots,n \right\}\,.
\end{align}
Among others, $W$ contains the set $\{(-w,w) | w\in \RR^n\}$.
For a weight vector $(u,v)\in \RR^{2n}$ and $P\in D_n$, the {\em initial form} $\init_{(u,v)}(P)$ of $P$ denotes the part of $P$ of {\em maximal} $(u,v)$-weight. For weights of the form $(-w,w)$, one denotes the initial form simply by $\init_w(P)$. Each weight vector $(u,v)$ induces an increasing filtration $F_{(u,v)}^\bullet (D_n)$ of the Weyl algebra by the $(u,v)$-weight. For weights of the form $(-w,w)$, the associated graded ring \begin{align}
\operatorname{gr}_{(-w,w)}\left(D_n\right) \,=\, \bigoplus_k \left( F_{(-w,w)}^k\left(D_n\right)/F_{(-w,w)}^{k-1}\left(D_n\right)\right) 
\end{align}
is isomorphic to the Weyl algebra itself, cf.~\cite[p.~4]{SST00}. For $(u,v)=(0,1)\in \RR^{2n}$, the associated graded ring $\operatorname{gr}_{(0,1)}(D_n)$ is the polynomial ring $\CC[x_1,\ldots,x_n][\partial_1,\ldots,\partial_n]$, in which case one typically replaces $\partial_i$ by $\xi_i$ to highlight 
the commutativity of the variables. The initial ideal $\init_{(u,v)}(I)$ of a $D_n$-ideal $I$ is the ideal generated by $\init_{(u,v)}(P)$ of all $P\in I$. 
It naturally lives in 
the graded ring \mbox{$\operatorname{gr}_{(u,v)}(D_n)$}. 
The initial ideal for the weight $(0,1)\in \RR^{2n}$ is exactly the characteristic ideal of the $D_n$-ideal, which was introduced in \Cref{sec:Didealsols}.

\medskip 
In what follows, we focus on weights of the form $(-w,w)$. The following theorem summarizes Theorems~2.2.1 and~2.5.1 of \cite{SST00} about the holonomic rank of the~initial~ideal.

\begin{theorem}\label{thm:rankinw}
Let $I$ be a holonomic $D_n$-ideal and $w$ any weight vector in $\RR^n$. Then also the initial ideal $\init_{(-w,w)}(I)$ is holonomic and $\rank(\init_{(-w,w)}(I))\leq \rank(I).$
If $I$ is regular holonomic, then $\rank(I)=\rank(\init_{(-w,w)}(I)).$
\end{theorem}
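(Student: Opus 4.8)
The plan is to follow Saito, Sturmfels and Takayama (their Theorems~2.2.1 and~2.5.1) and to deduce the statement from the fact that $\init_{(-w,w)}(I)$ is a flat Gr\"obner degeneration of $I$. We may assume $w\in\ZZ^n$, which does not change $\init_{(-w,w)}(I)$. Concretely, I would introduce an auxiliary variable $t$ and the $\CC[t,t^{-1}]$-algebra automorphism $\rho_t$ of $D_n[t,t^{-1}]$ determined by $x_i\mapsto t^{w_i}x_i$ and $\pt_i\mapsto t^{-w_i}\pt_i$ (which respects $[\pt_i,x_i]=1$), form the $D_n[t]$-ideal $\widetilde I\coloneqq \big(D_n[t,t^{-1}]\cdot\rho_t(I)\big)\cap D_n[t]$, and record its two ``fibres'': $\widetilde I|_{t=1}=I$, whereas $\widetilde I|_{t=0}=\init_{(-w,w)}(I)$ as an ideal of $\operatorname{gr}_{(-w,w)}(D_n)\cong D_n$. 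The first technical step is to verify these two identities together with flatness of $D_n[t]/\widetilde I$ over $\CC[t]$; this is where one uses that $(-w,w)$ lies in the admissible cone $W$ of \eqref{eq:weightsD}, so that a good Gr\"obner theory for $\init_{(-w,w)}$ is available.

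Granting the degeneration, holonomicity of $\init_{(-w,w)}(I)$ is an estimate on characteristic varieties: a Gr\"obner degeneration cannot raise the dimension of the characteristic variety, so iterating the two filtrations $(-w,w)$ and $(0,1)$ and comparing with the ordinary Gr\"obner degeneration of the commutative characteristic ideal $\init_{(0,1)}(I)$ gives $\dim\Char(\init_{(-w,w)}(I))\le\dim\Char(I)=n$; Bernstein's inequality supplies the reverse inequality (when $\init_{(-w,w)}(I)\neq D_n$), so $\init_{(-w,w)}(I)$ is again holonomic. For the rank inequality, recall from \Cref{sec:intPfaffian} that $\rank$ counts the standard monomials $\pt^b$ of a Gr\"obner basis in the rational Weyl algebra. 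Localizing the family at the $x_i$ (but not at $t$) and specializing $t\to0$, semicontinuity of initial ideals shows that the $\prec$-initial ideal of $R_n\cdot\init_{(-w,w)}(I)$ contains the $t=0$ limit of the $\prec$-initial ideal of $R_n[t]\cdot\widetilde I$; since that limit leaves as many standard monomials as the generic fibre $R_n\cdot I$, we obtain $\rank(\init_{(-w,w)}(I))\le\rank(I)$. A subtlety to keep track of here is that over $\CC(x)[t]$ the leading coefficients of a Gr\"obner basis of $\widetilde I$ need not be units, so $R_n[t]/R_n[t]\widetilde I$ need not be finitely generated, let alone locally free, over $\CC(x)[t]$ --- which is precisely what makes room for the rank to drop at $t=0$.

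For the equality in the regular holonomic case I would argue on solutions. By \Cref{thm:CKK} together with the construction of canonical series (the series of the shape \eqref{eq:Fkintro} produced by the SST algorithm), a regular holonomic $I$ admits, on a suitable region, a basis of $\rank(I)$ solutions of Nilsson type. Applying $\init_{(-w,w)}$ and retaining only the leading part of such a series yields a solution of the initial system $\init_{(-w,w)}(I)$, and the assignment ``solution $\mapsto$ leading part'' is injective. The heart of the matter --- and the only place regularity is genuinely used --- is to show that the leading parts of an entire basis stay linearly independent, equivalently that every canonical series solution of $\init_{(-w,w)}(I)$ lifts to one of $I$; this gives $\rank(\init_{(-w,w)}(I))\ge\rank(I)$ and hence, combined with the previous paragraph, equality. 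It can fail for irregular $I$: there the first inequality can be strict, because the exponential behaviour responsible for irregularity is washed out as $t\to0$. I expect this independence/lifting step to be the main obstacle and the part requiring the most care; it is the substance of \cite[\S2.5]{SST00}.
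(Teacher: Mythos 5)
The paper does not prove this theorem; it is stated explicitly as a summary of Theorems~2.2.1 and~2.5.1 of \cite{SST00}, and no proof appears in the text. Judged against what those theorems in fact establish, your sketch follows the same strategy in outline: the Rees-type family $\widetilde I$ interpolating between $I$ at $t=1$ and $\init_{(-w,w)}(I)$ at $t=0$, a dimension estimate for characteristic varieties obtained by composing the $(-w,w)$- and $(0,1)$-degenerations to get holonomicity, a semicontinuity argument for the rank inequality, and an injection ``Nilsson-class solution $\mapsto$ leading part'' for the rank equality in the regular case. Your reduction to $w\in\ZZ^n$ is legitimate (the small Gr\"obner fan is a rational polyhedral fan), though it deserves a sentence of justification.

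That said, the sketch is not self-contained: the steps that carry the mathematical weight---flatness of the degeneration, the semicontinuity behind $\rank(\init_{(-w,w)}(I))\le\rank(I)$, and above all the independence and lifting of leading parts behind the equality---are delegated to \cite[\S\S2.2, 2.5]{SST00}. You acknowledge this, and it is indeed the crux. One further caution: attributing the existence of a full basis of Nilsson-type solutions to ``the series produced by the SST algorithm'' (\Cref{thm:liftalgo}) puts the cart before the horse, since in the paper's and SST's logical order that algorithm appears \emph{after} the present theorem and relies on the rank equality to know that the starting monomials enumerate a basis. What the regular case actually needs is the prior existence and convergence theory for solutions at a regular singular point, developed in \cite[\S2.5]{SST00} before Theorem~2.6.1; the mathematical content you intend is correct, but the citation should point there to avoid a circularity.
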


\begin{theorem}[{\cite[Theorem 2.3.3]{SST00}}]\label{thm:torusfixed}
Let $I$ be a $D_n$-ideal.
\begin{enumerate}[(1)]
\item $I$ is torus-fixed if and only if $\init_{(-w,w)}(I)=I$ for all $w\in \RR^n$.
\item If $w$ is generic for $I$, then $\init_{(-w,w)}(I)$ is torus-fixed.
\end{enumerate}
\end{theorem}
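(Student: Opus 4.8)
The plan is to translate ``torus-fixed'' into a statement about gradings. The torus $(\CC^*)^n$ acts on $D_n$ by $x_i\mapsto t_i x_i$ and $\partial_i\mapsto t_i^{-1}\partial_i$, so a monomial $x^a\partial^b$ is sent to $t^{\,a-b}x^a\partial^b$; by a Vandermonde argument on the characters $t^c$, a $D_n$-ideal is torus-fixed if and only if it is homogeneous for the $\ZZ^n$-grading $\deg(x^a\partial^b)=a-b$, i.e. generated by operators all of whose monomials share a common value of $a-b$. The link to the theorem is that the $(-w,w)$-weight of $x^a\partial^b$ is $w\cdot(b-a)$, so $\init_{(-w,w)}(P)$ is obtained from $P$ by retaining exactly the $\ZZ^n$-graded pieces $P_c$ for which $w\cdot c$ is maximal. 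With this dictionary in hand the proof splits naturally.

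For part~(1), the implication $(\Rightarrow)$ is immediate: if $I$ is $\ZZ^n$-homogeneous, every graded piece $P_c$ of every $P\in I$ already lies in $I$; since $\init_{(-w,w)}(P)$ is a sum of such pieces we get $\init_{(-w,w)}(I)\subseteq I$, and since each $P_c$ coincides with its own $(-w,w)$-initial form we get $P=\sum_c P_c\in\init_{(-w,w)}(I)$, hence equality. For $(\Leftarrow)$, suppose $\init_{(-w,w)}(I)=I$ for all $w\in\RR^n$ and fix $P\in I$ with finite support $C=\{c\in\ZZ^n : P_c\neq 0\}$. For each vertex $c$ of $\conv(C)$ choose $w$ with $w\cdot c>w\cdot c'$ for all $c'\in C\setminus\{c\}$; then $\init_{(-w,w)}(P)=P_c\in\init_{(-w,w)}(I)=I$. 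Subtracting off all the vertex pieces leaves an operator in $I$ with strictly smaller support, so induction on $|C|$ shows $P_c\in I$ for every $c$; thus $I$ is generated by $\ZZ^n$-homogeneous operators and is torus-fixed.

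For part~(2), I would invoke two facts from the preceding sections of \cite{SST00}, both proved via the homogenized Weyl algebra. First, the Gröbner fan of $I$ for weights of the form $(-w,w)$ is finite, so ``$w$ generic for $I$'' means $w$ avoids the finitely many walls and hence lies in a full-dimensional cone of constancy: $\init_{(-(w+\epsilon u),\,w+\epsilon u)}(I)=\init_{(-w,w)}(I)$ for every $u\in\RR^n$ and all small $\epsilon>0$. Second, the transitivity identity for iterated initial ideals, $\init_{(-u,u)}\big(\init_{(-w,w)}(I)\big)=\init_{(-(w+\epsilon u),\,w+\epsilon u)}(I)$ for small $\epsilon>0$. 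Combining the two gives $\init_{(-u,u)}\big(\init_{(-w,w)}(I)\big)=\init_{(-w,w)}(I)$ for all $u\in\RR^n$, and applying part~(1) to the $D_n$-ideal $\init_{(-w,w)}(I)$ yields that it is torus-fixed.

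The main obstacle is not the grading bookkeeping above but the two structural inputs used in part~(2): finiteness of the Gröbner fan in the Weyl algebra and the transitivity formula for iterated initial ideals. Both genuinely require the homogenization $D_n^{(h)}$ and the associated Gröbner-basis machinery, with care about how the $(-w,w)$-filtrations interact; these are exactly the results developed in the earlier sections of \cite{SST00} that this theorem builds on, so in the write-up I would cite them rather than reprove them and keep the focus on the elementary reduction of ``torus-fixed'' to $\ZZ^n$-homogeneity, which is what makes part~(1) feed cleanly into part~(2).
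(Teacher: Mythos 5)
The paper does not prove this theorem; it cites it directly from \cite[Theorem~2.3.3]{SST00} without reproducing the argument. Your reconstruction is correct and matches the standard proof in \cite{SST00}: part~(1) via the translation of ``torus-fixed'' into $\ZZ^n$-homogeneity for the grading $\deg(x^a\partial^b)=a-b$ (the Vandermonde argument) together with the vertex-of-Newton-polytope induction, and part~(2) via finiteness of the (small) Gr\"obner fan and the transitivity formula $\init_{(-u,u)}\bigl(\init_{(-w,w)}(I)\bigr)=\init_{(-(w+\epsilon u),\,w+\epsilon u)}(I)$ for small $\epsilon>0$, both of which genuinely rest on the homogenized Weyl algebra $D_n^{(h)}$ developed earlier in \cite{SST00}, as you correctly flag. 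One small bookkeeping slip: with $c=a-b$ the $(-w,w)$-weight of $x^a\partial^b$ is $w\cdot(b-a)=-w\cdot c$, so $\init_{(-w,w)}(P)$ retains the pieces $P_c$ with $w\cdot c$ \emph{minimal}, not maximal; since every step of your argument quantifies over all $w\in\RR^n$ (or is symmetric under $w\mapsto -w$), this does not affect the validity of the proof, but the sign should be corrected in the write-up.
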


\noindent For the precise definition of what ``generic'' means here, we refer to the later \Cref{sec:Grfan}.

We are going to exploit that distractions of torus-fixed ideals (such as indicial ideals) take on a very special form.
The {\em indicial ideal} of $I$ with respect to $w$ is the distraction of the initial ideal, i.e., the $\CC[\theta_1,\ldots,\theta_n]$-ideal
\begin{align}\label{eq:indicial}
\ind_w(I) \, \coloneqq \,  R_n \cdot \init_{(-w,w)}(I)\cap \CC[\theta_1,\ldots,\theta_n],
\end{align}
where $\theta_i=x_i\partial_i$ denotes the $i$th Euler operator. 

\begin{definition}\label{def:exponents}
The zeros 
\begin{align}
    V\left(\ind_w(I)\right) \,=\, \left\{ A \in \CC^n \,|\, q(A)=0 \text{ for all } q\in \ind_w(I) \right\} 
\end{align}
of the indicial ideal are called the {\em exponents of $I$ with respect to~$w$}. 
\end{definition}

A $D_n$-ideal $F$ which is generated by elements of $\CC[\theta_1,\ldots,\theta_n]$ is called {\em Frobenius ideal}. Frobenius ideals hence are of the form $F=D_nJ$ with $J$ an ideal in $\CC[\theta_1,\ldots,\theta_n]$. A Frobenius ideal $F=D_n J$ is holonomic if and only if $J$ is Artinian, i.e., if $\CC[\theta]/J$ is a finite-dimensional $\CC$-vector space. Now assume $J$ is Artinian. The primary decomposition of $J$ is of the form
\begin{align}\label{eq:primdecFrob}
J \,=\, \bigcap_{A\, \in\, V(J)} Q_A(\theta-A) \, ,
\end{align}
where $Q_A$ is an Artinian ideal, and $Q_A(\theta-A)$ denotes the ideal obtained from replacing each $\theta_i$ by $\theta_i-A_i$ in $Q_a$.
Solutions of Frobenius ideals take on the very special form $x^A\cdot g(\log(x))$, which can be read from the primary 
decomposition of $J$. Namely, $A\in V(J)$ and $g$ runs over the finite-dimensional $\CC$-vector space 
\begin{align}\label{eq:orthcompl}
\begin{split}
Q_A^{\perp} \,=\, \left\{ p \in \CC[x_1,\ldots,x_n] \,| \, f(\partial_1,\ldots,\partial_n) \bullet p(x_1,\ldots,x_n)=0 \right. \\
\left. \text{ for all } f=f(\theta_1,\ldots,\theta_n )\in Q_A\right\} ,
\end{split}
\end{align}
the {\em orthogonal complement}
of the Artinian ideal $Q_A$.
Since we will make use of that strategy regularly later on, we summarize what was said above in the following proposition.

\begin{proposition}\label{prop:solsind}
Let $F=D_n J,$ where $J\subset \CC[\theta]$, be a holonomic Frobenius ideal.  
The solution space of $F$ is spanned by the functions $x^A\cdot g(\log(x))$, where $A$ runs over the points of the variety~$V(J)$, and $g$ runs over the orthogonal complement~$Q_A^\perp$ from~\eqref{eq:orthcompl}.
\end{proposition}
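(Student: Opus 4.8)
The plan is to reduce the statement to two independent facts: (i) that solutions of a Frobenius ideal $F = D_n J$ are built out of the monomials $x^A$ for $A \in V(J)$ with polynomial-in-$\log(x)$ corrections, and (ii) that for a single primary component $Q_A(\theta - A)$, the space of polynomial solutions of the associated Frobenius ideal $D_n \cdot Q_A(\theta-A)$ is exactly $x^A \cdot Q_A^\perp$. The final count then follows because $J$ is Artinian, so $V(J)$ is finite and $\dim_\CC \CC[\theta]/J = \sum_A \dim_\CC \CC[\theta]/Q_A = \rank(F)$ by \Cref{thm:CKK} applied to the holonomic ideal $F$, matching the total dimension of $\bigoplus_{A} x^A Q_A^\perp$ once we know each $\dim_\CC Q_A^\perp = \dim_\CC \CC[\theta]/Q_A$.

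First I would record the key computational identity: for $g \in \CC[\log x_1, \ldots, \log x_n]$ and $f \in \CC[\theta]$, one has $f(\theta) \bullet \bigl(x^A g(\log x)\bigr) = x^A \cdot \bigl(\tilde f \bullet g\bigr)$, where $\tilde f(\theta) = f(\theta + A)$ and, on polynomials in $\log x$, the operator $\theta_i = x_i \partial_i$ acts as $\partial/\partial(\log x_i)$. Thus $x^A g(\log x)$ is annihilated by all of $Q_A(\theta - A)$ precisely when $g$, viewed as a polynomial in the variables $y_i := \log x_i$, is annihilated by $Q_A$ acting as constant-coefficient differential operators $\partial_{y_i}$ — which is the defining condition of $Q_A^\perp$ after renaming $x_i \leftrightarrow \log x_i$ in \eqref{eq:orthcompl}. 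This establishes that each $x^A g(\log x)$ with $g \in Q_A^\perp$ lies in $\Sol(F)$, since $J \subseteq Q_A(\theta - A)$ for every $A$ by \eqref{eq:primdecFrob}. Conversely, to see these span all of $\Sol(F)$, I would invoke \Cref{thm:CKK}: $F$ is holonomic (given), so $\dim_\CC \Sol(F) = \rank(F) = \dim_\CC \CC[\theta]/J$ on a suitable simply connected domain; it then suffices to check that the functions $\{x^A g(\log x) : A \in V(J),\ g \in Q_A^\perp\}$ are linearly independent and number exactly $\dim_\CC \CC[\theta]/J$.

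For the dimension bookkeeping, the primary decomposition \eqref{eq:primdecFrob} gives $\CC[\theta]/J \cong \bigoplus_{A \in V(J)} \CC[\theta]/Q_A$ by the Chinese Remainder Theorem (the components are supported at distinct points), so $\dim_\CC \CC[\theta]/J = \sum_A \dim_\CC \CC[\theta]/Q_A$. The standard Macaulay-inverse-systems duality for Artinian ideals gives $\dim_\CC Q_A^\perp = \dim_\CC \CC[\theta]/Q_A$; this is the classical apolarity pairing between $\CC[\theta]/Q_A$ and the space of polynomials annihilated by $Q_A$ under the contraction action, and it is exactly the perfect pairing underlying the definition \eqref{eq:orthcompl}. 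Linear independence across different $A$'s follows because the $x^A$ are linearly independent as multivalued functions (distinct exponent vectors), and within a fixed $A$ it reduces to linear independence of the polynomials $g \in Q_A^\perp$, which is immediate. Hence the spanning set has the right cardinality and, being contained in the $\rank(F)$-dimensional space $\Sol(F)$, must be a basis.

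The main obstacle I anticipate is not any single deep step but making the $x_i \leftrightarrow \log x_i$ dictionary in \eqref{eq:orthcompl} airtight: one must be careful that $\theta_i$ acts on $\CC[\log x]$ as $\partial/\partial(\log x_i)$ with the correct constant-coefficient behavior, that the shift $\theta \mapsto \theta - A$ in the primary decomposition corresponds precisely to multiplying by $x^A$, and that products $x^{A} \cdot (\text{polynomial in } \log x)$ genuinely exhaust the local solutions — i.e., that no solution of $F$ can have an essential $\log$-dependence beyond polynomial degree, which is where holonomicity (finiteness of $\rank$) and the Artinian hypothesis on $J$ are doing the real work. Once that translation is fixed, the rest is the standard inverse-systems duality plus a dimension count.
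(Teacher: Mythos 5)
Your proposal is correct and follows essentially the same route as the reference \cite[Section~2.3]{SST00} that the paper cites for this proposition: verify via the commutation identity $\theta_i\bullet(x^A g(\log x)) = x^A\bigl((\partial_{\log x_i}+A_i)g\bigr)$ that $x^A g(\log x)\in\Sol(F)$ whenever $g\in Q_A^\perp$, and then obtain spanning by matching $\dim_\CC\Sol(F)=\rank(F)=\dim_\CC\CC[\theta]/J$ against the dimension of $\bigoplus_A x^A Q_A^\perp$, using the Chinese Remainder Theorem for the primary decomposition and Macaulay inverse-systems duality $\dim_\CC Q_A^\perp=\dim_\CC\CC[\theta]/Q_A$. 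The one step you state without justification, $\rank(D_nJ)=\dim_\CC\CC[\theta]/J$ for Artinian $J$, is a separate fact about Frobenius ideals (it is not a consequence of Cauchy--Kovalevskaya--Kashiwara), but it is the same ingredient used in the cited source.
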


\begin{proposition}
Let $I$ be a holonomic ideal and $w\in \RR^n$generic for $I$. Then
$\ind_w$(I) is a holonomic Frobenius ideal whose rank equals $\rank(\init_{(-w,w)} (I))$.
\end{proposition}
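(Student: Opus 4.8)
Write $B \coloneqq \init_{(-w,w)}(I)$. The plan is to reduce the statement to a linear-algebra computation of the image of the evaluation map $\CC[\theta]\to R_n/R_nB$, and to carry that computation out using the torus symmetry of $B$. First I would record the two structural inputs: since $w$ is generic for $I$, \Cref{thm:torusfixed}(2) gives that $B$ is torus-fixed, and since $I$ is holonomic, \Cref{thm:rankinw} gives that $B$ is holonomic with $r \coloneqq \rank(B) \le \rank(I) < \infty$. Fix a term order on $R_n$ and let $S$ be the associated (finite, since $r<\infty$) set of standard monomials of $B$, so that, writing $[\,\cdot\,]$ for the class in $R_n/R_nB$, the set $\{\,[\partial^b] : \partial^b\in S\,\}$ is a $\CC(x)$-basis of $R_n/R_nB$ with $|S|=r$ (cf.\ \Cref{sec:intPfaffian}, \Cref{def:holrank}). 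Since $\ind_w(I) = R_nB\cap\CC[\theta]$ is exactly the kernel of the $\CC$-linear map $\varphi\colon\CC[\theta]\to R_n/R_nB$ induced by $\CC[\theta]\subset D_n\subset R_n$, one has $\CC[\theta]/\ind_w(I)\cong\im\varphi$ as $\CC$-vector spaces. The $D_n$-ideal $D_n\cdot\ind_w(I)$ is Frobenius by construction, so everything reduces to proving $\dim_\CC\im\varphi = r$: this makes $\ind_w(I)$ Artinian, hence $D_n\cdot\ind_w(I)$ holonomic, and its holonomic rank is then $\dim_\CC\CC[\theta]/\ind_w(I) = r$, because by \Cref{prop:solsind} and the primary decomposition \eqref{eq:primdecFrob} the solution space of a holonomic Frobenius ideal $D_nJ$ has dimension $\sum_{A\in V(J)}\dim_\CC Q_A^\perp = \dim_\CC\CC[\theta]/J$, which equals the holonomic rank by \Cref{thm:CKK}.

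For the computation of $\im\varphi$ I would exploit the action of $T=(\CC^*)^n$ on $R_n$ by the $\CC$-algebra automorphisms $x_i\mapsto t_ix_i$, $\partial_i\mapsto t_i^{-1}\partial_i$; torus-fixedness of $B$ means precisely that $B$ is $T$-stable. Since $B$ is a left $D_n$-ideal one has $R_nB=\CC(x)\cdot B$, and because $\CC(x)$ is $T$-stable this forces $R_nB$ to be $T$-stable, so $T$ acts $\CC(x)$-semilinearly on $R_n/R_nB$ with $t\cdot[\partial^b] = \big(\prod_i t_i^{-b_i}\big)[\partial^b]$ for $\partial^b\in S$ (the monomial $\partial^b$ being its own normal form). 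Each $\theta^c$ is $T$-invariant, so writing $[\theta^c]=\sum_{\partial^b\in S}q_b(x)\,[\partial^b]$ with $q_b\in\CC(x)$ and comparing coordinates in the basis $\{[\partial^b]\}$ forces $q_b(t_1x_1,\dots,t_nx_n)=\big(\prod_i t_i^{b_i}\big)q_b(x)$, i.e.\ $q_b$ is a multihomogeneous rational function of multidegree $b$. Since the only $T$-invariant rational functions are the constants (in lowest terms numerator and denominator must both be multihomogeneous, hence monomials, hence---being coprime of equal multidegree---constants), we get $q_b=\lambda_b\, x^b$ with $\lambda_b\in\CC$, so $[\theta^c]=\sum_b\lambda_b\,[x^b\partial^b]$. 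Using the identity $x^b\partial^b=\prod_i\theta_i(\theta_i-1)\cdots(\theta_i-b_i+1)\in\CC[\theta]$, this shows $\im\varphi$ is contained in the $\CC$-span of $\{\,[x^b\partial^b]:\partial^b\in S\,\}$; conversely each $[x^b\partial^b]$ visibly lies in $\im\varphi$, so $\im\varphi$ equals that span. Finally $[x^b\partial^b]=x^b[\partial^b]$, and since the $[\partial^b]$ form a $\CC(x)$-basis, the elements $[x^b\partial^b]$ with $\partial^b\in S$ are $\CC$-linearly independent (indeed $\CC(x)$-linearly independent), whence $\dim_\CC\im\varphi = |S| = r$, as needed.

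The step I expect to be the main obstacle is the inclusion $\im\varphi\subseteq\{\text{$\CC$-span of }[x^b\partial^b]\}$: one has to see that the coordinates of a pure-$\theta$ operator in the standard-monomial basis are pinned down to be the single monomials $\lambda_b x^b$, which is exactly where the torus symmetry and the triviality of $\CC(x)^T$ do the real work---equivalently, where the genericity hypothesis enters, through the torus-fixedness of $\init_{(-w,w)}(I)$ supplied by \Cref{thm:torusfixed}. A secondary point requiring care is the $T$-stability of $R_nB$, which rests on the elementary identity $R_nB=\CC(x)\cdot B$ valid for any left $D_n$-ideal $B$, together with the routine observation that normal forms of scalar multiples of standard monomials behave transparently with respect to the $T$-weights.
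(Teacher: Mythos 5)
Your proof is correct, and self-contained. The paper itself does not prove this proposition---it is recalled from Saito--Sturmfels--Takayama (essentially \cite[Cor.~2.3.5, Thm.~2.3.9]{SST00})---so there is no in-paper argument to compare against, only the one in SST00. Your route recasts the problem as identifying the image of the evaluation map $\varphi\colon\CC[\theta]\to R_n/R_nB$ (with $B=\init_{(-w,w)}(I)$, torus-fixed by \Cref{thm:torusfixed}(2) and of finite rank $r$ by \Cref{thm:rankinw}), and then pins that image down by torus equivariance: $R_nB=\CC(x)\cdot B$ is $T$-stable, the induced semilinear $T$-action on $R_n/R_nB$ has weight $-b$ on the standard-monomial class $[\partial^b]$, and $T$-invariance of $[\theta^c]$ then forces its $b$-th coordinate to be a scalar multiple of $x^b$, giving $\im\varphi=\bigoplus_{\partial^b\in S}\CC\,[x^b\partial^b]$ of dimension $r$. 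The delicate steps check out: the identification $\ker\varphi=\ind_w(I)$, the identity $R_nB=\CC(x)\cdot B$ (valid because $B$ is a left $D_n$-ideal), the triviality of $T$-invariant rational functions, the falling-factorial identity $x^b\partial^b\in\CC[\theta]$, and the closing chain ``Artinian $\Rightarrow$ holonomic Frobenius of rank $\dim_\CC\CC[\theta]/J$'' via \Cref{prop:solsind}, \eqref{eq:primdecFrob}, Macaulay duality for $Q_A^\perp$, the Chinese Remainder Theorem, and \Cref{thm:CKK}. Compared with SST00, which arrives at the same conclusion by analyzing Gröbner bases of torus-fixed ideals and their distractions explicitly, your argument is a leaner linear-algebra proof that isolates the exact role of torus-fixedness; what it does not give (and SST00 does) is a concrete generating set for $\ind_w(I)$ read off from a Gröbner basis of $B$, which the paper's algorithm ultimately relies on at \eqref{eq:gfh}.
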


\medskip

By $N$, we denote the ring of functions of the {\em Nilsson class}, i.e., those functions which can be represented by an element of \begin{align}\label{eq:defNilsson} 
N \,\coloneqq \, \CC\llbracket x^{u^1},\ldots,x^{u^n}\rrbracket[x^{\beta^1},\ldots,x^{\beta^n},\log(x_1),\ldots,\log(x_n)]
\end{align}
for suitable vectors $u^1,\ldots,u^n,\beta^1,\ldots,\beta^n\in \CC^n$ (see \cite[(2.31)]{SST00}). The coefficients lie in the ring $\CC\llbracket x^{u^1},\ldots,x^{u^n} \rrbracket$ of formal power series in the $x^{u^i}$'s.

\begin{definition}\label{def:wweightmon}
The {\em $w$-weight} of a monomial $x^A\log(x)^B$ is the real part $\Re(w\cdot A)$ of~$w\cdot A$.
\end{definition}
The {\em initial series} of a function $f=\sum_{A,B} c_{AB}x^A\log(x)^B\in N$, denoted $\init_w(f)$, is defined to be
 the finite subsum of all terms of {\em minimal} \mbox{$w$-weight}.
Note that a weight vector $w\in \RR^n$ induces a partial order on functions of the Nilsson class as follows:
\begin{align}
x^A\log(x)^B  \leq x^c\log(x)^d \ \Leftrightarrow \ \Re(w \cdot A ) \leq \Re(w \cdot c) \, .
\end{align}
Since the $w$-weight does not give a monomial ordering, one needs a monomial order $\prec$ as a tie breaker and denotes the resulting monomial order by~$\prec_w$. We will take $\prec$ to be the lexicographical ordering on $N$ obtained as restriction of the lexicographic ordering~on~$\CC^n \oplus \NN^n$.
 
The set of {\em starting monomials} of $I$ with respect to $\prec_{w}$ is
\begin{align}
\Start_{\prec w}(I) \, \coloneqq \, \left\{ \init_{\prec w}(f) \,|\, f\in N \text{ is a non-zero solution of } I\right\},
\end{align}
where $\init_{\prec w}(f)=x^A\log(x)^B$ for some $A\in \CC^n$ and $B\in \NN^n$. Here, $x^A$ denotes $x_1^{A_1}\cdots x_n^{A_n}$ and $\log(x)^B=\log(x_1)^{B_1}\cdots \log(x_n)^{B_n}$. Moreover,
\begin{align}
\Start_{\prec_w}(I) \,=\, \Start_{\prec_w}(\init_{(-w,w)}(I))\,=\, \Start_{\prec_w}(\ind_{w}(I)) \, .
\end{align}

\begin{proposition}[{\cite[Corollary 2.5.11]{SST00}}]
If $x^A \log(x)^B \in  \Start_{\prec w}(I)$, then $A$ is an exponent of $I$ with respect to~$w$. For each exponent $A$, the number of starting monomials of the form $x^A \log(x)^B$ is the multiplicity of $A$ as a root of the indicial ideal $\ind_w(I)$.
\end{proposition}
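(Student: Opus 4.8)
The strategy is to push the whole question onto the indicial ideal, where \Cref{prop:solsind} gives the solutions in closed form, and then simply read off leading monomials. First I would invoke the identity $\Start_{\prec_w}(I)=\Start_{\prec_w}(\ind_w(I))$ recorded just above the statement, which reduces everything to the case where $I$ is replaced by $\ind_w(I)$. By \Cref{thm:rankinw}, $\init_{(-w,w)}(I)$ is holonomic, so the Frobenius ideal $F\coloneqq D_n\cdot\ind_w(I)=D_nJ$, with $J=\ind_w(I)\cap\CC[\theta]$, has finite holonomic rank; hence $J$ is Artinian and has the primary decomposition $J=\bigcap_{A\in V(J)}Q_A(\theta-A)$ of \eqref{eq:primdecFrob}. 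By \Cref{prop:solsind}, every Nilsson solution $f$ of $F$ has a unique expansion $f=\sum_{A\in V(J)}x^A g_A(\log x)$ with $g_A\in Q_A^{\perp}$, the orthogonal complement \eqref{eq:orthcompl}; conversely each single summand $x^A g(\log x)$ with $g\in Q_A^{\perp}$ is a solution, because $J\subseteq Q_A(\theta-A)$ and, on the span of the monomials $x^A\log(x)^B$, the operator $\theta_i-A_i$ acts as formal differentiation $\partial_{\log x_i}$, so that $q(\theta-A)\bullet(x^A g(\log x))=x^A\,q(\partial_{\log x})\bullet g=0$ for every $q\in Q_A$.

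Next I would extract the leading monomial. The key point is that all monomials $x^A\log(x)^B$ with a fixed exponent vector $A$ carry the same $w$-weight $\Re(w\cdot A)$ (\Cref{def:wweightmon}), and that the tie-breaking order $\prec$, being the lexicographic order on $\CC^n\oplus\NN^n$, compares the $\CC^n$-component $A$ before the $\NN^n$-component $B$. Hence for a nonzero solution $f=\sum_A x^A g_A(\log x)$ one has $\init_w(f)=\sum_{A\in S}x^A g_A(\log x)$, where $S=\{A:g_A\neq 0,\ \Re(w\cdot A)\text{ minimal}\}$, and $\init_{\prec_w}(f)=x^{A^{*}}\log(x)^{B^{*}}$, where $A^{*}$ is the element of $S$ singled out by $\prec$ and $\log(x)^{B^{*}}=\init_{\prec}(g_{A^{*}})$ is the $\prec$-leading logarithmic monomial of the nonzero polynomial $g_{A^{*}}\in Q_{A^{*}}^{\perp}$. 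In particular $A^{*}\in V(J)=V(\ind_w(I))$, so every starting monomial of $I$ has $x$-exponent an exponent of $I$ with respect to $w$; this is the first assertion.

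For the multiplicity count, fix an exponent $A\in V(J)$. The previous paragraph shows that any starting monomial with $x$-part $A$ equals $x^A\log(x)^B$ with $\log(x)^B\in\init_{\prec}(Q_A^{\perp})$, the set of $\prec$-leading monomials of the finite-dimensional space $Q_A^{\perp}\subset\CC[\log x_1,\ldots,\log x_n]$; conversely, taking $f=x^A g(\log x)$ with $g\in Q_A^{\perp}$ gives $S=\{A\}$ and $\init_{\prec_w}(f)=x^A\init_{\prec}(g)$, so every element of $\init_{\prec}(Q_A^{\perp})$ is realized with no interference from other exponents. Therefore the number of starting monomials with $x$-part $A$ equals $|\init_{\prec}(Q_A^{\perp})|=\dim_{\CC}Q_A^{\perp}$, since the leading monomials of a finite-dimensional subspace of a polynomial ring form a basis of its space of initial terms. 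Finally $\dim_{\CC}Q_A^{\perp}=\dim_{\CC}\CC[\theta]/Q_A$ by Macaulay duality (apolarity), and the latter is exactly the multiplicity of $A$ as a zero of $J$, hence of $\ind_w(I)$, by the primary decomposition \eqref{eq:primdecFrob}.

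\textbf{Main obstacle.} The conceptual content is packaged into \Cref{prop:solsind} (Nilsson solutions of a holonomic Frobenius ideal are precisely $x^A g(\log x)$ with $g$ in the orthogonal complement); granting that, the remaining work is bookkeeping, and the one genuinely delicate spot is the handling of the tie-breaker $\prec$: one must be sure that solutions mixing several exponents never produce a leading $x$-monomial not already seen among the pure summands $x^A Q_A^{\perp}$, which is what forces the count to be exact rather than merely an inequality. The identifications $\dim Q_A^{\perp}=\dim\CC[\theta]/Q_A$ and ``that number is the intersection multiplicity at $A$'' are standard commutative-algebra inputs and should be cited rather than reproved.
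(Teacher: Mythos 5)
Your proof is correct, and it follows the route that the paper sets up: since the paper only cites \cite[Corollary~2.5.11]{SST00} and does not prove the proposition, the natural benchmark is the background it provides, and you use exactly that—the reduction $\Start_{\prec_w}(I)=\Start_{\prec_w}(\ind_w(I))$, Proposition~\ref{prop:solsind} together with the primary decomposition~\eqref{eq:primdecFrob}, and the fact that all $\log$-monomials over a fixed exponent $A$ share $w$-weight $\Re(w\cdot A)$, then Macaulay duality for the count $\dim Q_A^\perp=\dim\CC[\theta]/Q_A$. The one place worth tightening in a written-up version is the sentence identifying the starting monomials with $x$-part $A$ with $\init_\prec\bigl(Q_A^\perp\bigr)$: the forward inclusion needs the observation that if $A$ is the $\prec$-minimal element of the minimal-$w$-weight support $S$, then the leading $\log$-monomial of the summand $g_A$ is not cancelled by any other summand (it cannot be, since those lie over different exponent vectors), which you implicitly use but do not quite say; once stated, the two-way inclusion and the echelon-basis argument give the exact count as you claim.
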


\begin{theorem}[{\cite[Theorem 2.5.12]{SST00}}]\label{thm:lift}
For each starting monomial $x^A\log(x)^B$ in $ \Start_{\prec w}(I) = \Start_{\prec_w}(\ind_{w}(I))$, there exists a unique $f\in N$ with the following properties: 
\begin{enumerate}[(1)]
\item $f$ is annihilated by $I$, i.e., $f\in \Sol(I)$.
\item $\init_w(f) = x^A\log(x)^B$.
\item The monomial $x^A\log(x)^B$ is the only starting monomial that appears in $f$ with non-zero coefficient.
\end{enumerate}
\end{theorem}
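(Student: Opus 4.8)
The plan is to build $f$ one $w$-weight layer at a time, with bottom layer $x^A\log(x)^B$, and then to read off uniqueness directly from the definition of $\Start_{\prec_w}(I)$. Fix once and for all a Gröbner basis $g_1,\dots,g_r$ of $I$ for a term order refining the $(-w,w)$-weight, so that $\init_{(-w,w)}(I)=\langle\init_w(g_1),\dots,\init_w(g_r)\rangle$. Since $x^A\log(x)^B\in\Start_{\prec_w}(I)=\Start_{\prec_w}(\ind_w(I))$, the exponent $A$ lies in $V(\ind_w(I))$ with the multiplicity recorded by $B$. Reducing first to $w$ generic if necessary, $\init_{(-w,w)}(I)$ is torus-fixed (\Cref{thm:torusfixed}) and hence has the same solutions as its indicial ideal, so by \Cref{prop:solsind} one may take the bottom layer to be a Frobenius-type solution $f_0=x^A g(\log(x))$ of the leading ideal with $\init_w(f_0)=x^A\log(x)^B$ and with $x^A\log(x)^B$ the only starting monomial occurring in it. In particular $\init_w(g_j)\bullet f_0=0$ for every $j$, the order-zero consistency condition for $f_0$ to be the leading part of an $I$-solution.

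Next I would run the lifting recursion. Write $f=\sum_{p}f_p$, the sum over lattice points $p\in C^*_{\mathbb{Z}}$ with $p\cdot w\ge 0$, where $f_p$ collects the Nilsson monomials of $w$-weight $w\cdot A+p\cdot w$; the layer $f_0$ has just been fixed. Imposing $g_j\bullet f=0$ and sorting by $w$-weight produces, for each $k>0$, an inhomogeneous linear system for the unknown layer $f_k$: the leading operator $\init_w(g_j)$ acts on a monomial $x^{A+q}\log(x)^c$ by substituting $\theta\mapsto A+q$ into the distraction of $\init_w(g_j)$, while the lower-order parts $g_j-\init_w(g_j)$ and the already-constructed layers $f_p$ with $0\le p\cdot w<k$ supply a known right-hand side. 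The key step is that this system is always solvable, and that condition~(3) removes the residual ambiguity: by the primary decomposition~\eqref{eq:primdecFrob} of the indicial ideal, the cokernel of the level-$k$ map at a shift $A+q$ with $q\cdot w=k$ vanishes unless $A+q\in V(\ind_w(I))$, in which case it is spanned exactly by the starting monomials $x^{A+q}\log(x)^c$ of that exponent. Hence one can solve for $f_k$ up to adding starting monomials, and forbidding this (except for the chosen $f_0$) singles out $f_k$; an induction on $k$ tracking which shifts $A+q$ are genuine exponents of $I$ then shows the obstruction actually vanishes. This yields $f\in N$, and since $N$ is a ring of formal Nilsson series no convergence estimate is required here.

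Uniqueness is then immediate. If $f$ and $f'$ both satisfy (1)--(3), put $h=f-f'\in\Sol(I)\cap N$. If $h\ne 0$, then $\init_{\prec_w}(h)\in\Start_{\prec_w}(I)$ by definition of the set of starting monomials; it cannot equal $x^A\log(x)^B$, whose coefficient in $h$ is $1-1=0$ by~(2); so it is a starting monomial different from $x^A\log(x)^B$ occurring with nonzero coefficient in $h$, hence in $f$ or in $f'$, contradicting~(3). Therefore $h=0$ and $f=f'$.

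\textbf{Main obstacle.} The delicate point is the solvability of the level-$k$ recursion in the presence of resonances, i.e.\ pairs of exponents $A,A'$ of $I$ with $w\cdot A'\in w\cdot A+\mathbb{Z}_{>0}$: there the leading map on $f_k$ drops rank precisely along the directions of the would-be starting monomials at $A'$, and restoring consistency of the inhomogeneous system is exactly what forces logarithms into $f$. Bounding their degree by $\rank(I)$ --- via the equality $\rank(I)=\rank(\init_{(-w,w)}(I))$ of \Cref{thm:rankinw} in the regular holonomic case --- is what keeps $f$ inside $N$; the rest is bookkeeping with the Gröbner and indicial data.
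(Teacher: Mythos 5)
The paper does not give its own proof of this statement: it is quoted verbatim from Saito--Sturmfels--Takayama (Theorem~2.5.12 in~\cite{SST00}), and the only place the paper engages with it is in the proof of \Cref{thm:liftalgo}, where the recursive lift is set up in exactly the way you describe but where the solvability of the recursion at each level is explicitly justified by invoking ``the existence of a canonical series solution'', i.e.\ by quoting \Cref{thm:lift} as a black box. So your recursion and your uniqueness argument (that a nonzero $h=f-f'\in\Sol(I)\cap N$ would have an initial monomial in $\Start_{\prec_w}(I)$ distinct from $x^A\log(x)^B$, forcing a forbidden starting monomial into $f$ or $f'$) track the paper closely, and the uniqueness part is complete and correct.

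The existence part, however, has a genuine gap at exactly the place the paper declines to argue. You assert that ``the cokernel of the level-$k$ map \dots is spanned exactly by the starting monomials $x^{A+q}\log(x)^c$'' and then say that ``an induction on $k$ tracking which shifts $A+q$ are genuine exponents of $I$ then shows the obstruction actually vanishes.'' But the cokernel statement only tells you (via $\dim\ker=\dim\operatorname{coker}$ for the square map $L_p\to L_p$) that the obstruction space has the right dimension; it does not tell you that the particular inhomogeneous right-hand side produced by the lower layers lies in the image. That compatibility is the substance of SST's Theorem~2.5.12, and it cannot be deduced from the injectivity of $F\colon L_p'\to L_p^d$ alone, which is the only tool introduced before this point in the paper or in your proposal. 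SST's own proof resolves it via a nontrivial argument using the structure of Gröbner deformations and the rank equality $\rank(I)=\rank(\init_{(-w,w)}(I))$ for regular holonomic ideals; merely invoking that equality, as you do at the end, does not yet show the obstruction vanishes. As written, the step you flag as the ``main obstacle'' is not proved, and that is precisely the step the statement of the theorem is designed to deliver.
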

Solutions to $I$ as in the theorem above are called {\em canonical (series) solutions} of $I$ with respect to~$\prec_w$. The solution functions $f$ in \Cref{thm:lift} can be shown to actually live in the Nilsson ring $N_w\subset N$, which is the content of the next proposition.

\begin{proposition}
If $I$ is regular holonomic and $w$ a generic weight for $I$, there exist $\rank(I)$ many canonical series solutions of $I$ which lie in the Nilsson ring $N_w(I)$ of $I$ w.r.t.\ to $w$,
\begin{align}\label{eq:Nilssonw}
N_w(I) \, \coloneqq \, \CC \llbracket {C_w(I)}_{\ZZ}^*\rrbracket [x^{e^1},\ldots,x^{e^r}, \log (x_1),\ldots ,\log(x_n)] \, .
\end{align}
\end{proposition}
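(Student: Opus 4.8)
The statement has two parts: a counting part, that there are exactly $\rank(I)$ canonical series solutions, and a structural part, that each of them lies in $N_w(I)$. I would treat the counting part first, as a bookkeeping assembly of the results already quoted. By \Cref{thm:lift}, the canonical series solutions with respect to $\prec_w$ are in bijection with the starting monomials in $\Start_{\prec_w}(I)$. Using $\Start_{\prec_w}(I)=\Start_{\prec_w}(\ind_w(I))$ and \cite[Corollary 2.5.11]{SST00}, the starting monomials of the form $x^A\log(x)^B$ with a fixed exponent $A$ are, in number, the multiplicity of $A$ as a zero of $\ind_w(I)$; summing over $A\in V(\ind_w(I))$ gives $\dim_\CC\!\bigl(\CC[\theta]/\ind_w(I)\bigr)$. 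Since $w$ is generic, $\ind_w(I)$ is a holonomic Frobenius ideal (the proposition preceding the statement), so by \Cref{prop:solsind} and \Cref{thm:CKK} this dimension equals $\rank(\ind_w(I))=\rank(\init_{(-w,w)}(I))$. Finally, $I$ is regular holonomic, so \Cref{thm:rankinw} yields $\rank(\init_{(-w,w)}(I))=\rank(I)$, whence exactly $\rank(I)$ canonical series solutions.

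For the structural part, fix a starting monomial $x^A\log(x)^B$ and let $f=\sum_p x^{A+p}g_p(\log x)\in N$ be the canonical solution provided by \Cref{thm:lift}. Plugging $f$ into a $(-w,w)$-Gr\"obner basis $G$ of $I$ and grading the resulting relations by $w$-weight, the maximal-weight part reproduces the indicial-ideal relations, while the lower-weight parts give a recursion expressing each coefficient polynomial $g_p$ as a $\CC$-linear combination of finitely many $g_{p'}$ with $\Re(w\cdot p')<\Re(w\cdot p)$, the admissible shifts $p-p'$ being controlled by the $\partial$-exponents occurring in the elements of $G$. I would then check: (i) these shift vectors generate a rational polyhedral cone $C_w(I)$ on which $w$ is nonnegative and which is pointed, so that $\operatorname{supp}(f)\subseteq A+\bigl(C_w(I)\bigr)_\ZZ$ and the power-series part is well defined for the $w$-weight filtration; (ii) the possible anchors $A$ range over the finite set $V(\ind_w(I))$, taken as $\{e^1,\dots,e^r\}$; and (iii) the $\log$-degrees are bounded, $B_j<\rank(I)$, since the logarithms arise from the nilpotent part of the commuting $\theta$-action on the $\rank(I)$-dimensional space $\CC[\theta]/\ind_w(I)$. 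Combined with regular holonomicity, which both guarantees genuine (convergent) Nilsson-class series and, via \Cref{thm:rankinw}, ensures no solution is lost in passing to $\init_{(-w,w)}(I)$, this places each canonical solution in $N_w(I)=\CC\llbracket(C_w(I))_\ZZ^*\rrbracket[x^{e^1},\dots,x^{e^r},\log(x_1),\dots,\log(x_n)]$.

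The main obstacle is step (i) of the structural part: showing that the recursion never leaves a finitely generated, pointed cone $C_w(I)$ that can be read off the $(-w,w)$-Gr\"obner basis of $I$, and that only finitely many exponent classes $e^i$ occur. This is exactly where the two hypotheses are genuinely used: regularity of $I$ rules out exponents ``at infinity'' and secures convergence of the formal series, while genericity of $w$ forces the Gr\"obner cone of $w$ to be full-dimensional (cf.\ \Cref{thm:torusfixed}), so that $C_w(I)$ is pointed and $w$ is strictly positive on $C_w(I)\setminus\{0\}$; without this the ``power series in $C_w(I)_\ZZ^*$'' description breaks down. Everything else — the bijection with starting monomials, the multiplicity count, and the passage $\rank(\ind_w(I))=\rank(\init_{(-w,w)}(I))=\rank(I)$ — is routine given the cited theorems.
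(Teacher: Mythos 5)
The proposition is one the paper itself does not prove: the surrounding paragraph announces that Theorems~\ref{thm:rankinw}--\ref{thm:liftalgo} (and the propositions interleaved with them) ``recall some results from \cite[Sections 2.2--2.6]{SST00}'', and the sentence directly following this proposition points to \cite[Theorem 2.5.14]{SST00}. So there is no in-paper proof to compare against; your task was really to reconstruct the SST argument, and you have done so in outline.

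Your counting part is correct and complete given the surrounding material. The chain $\#\Start_{\prec_w}(I) = \sum_A \operatorname{mult}(A) = \dim_{\CC}\!\bigl(\CC[\theta]/\ind_w(I)\bigr) = \rank(\ind_w(I)) = \rank(\init_{(-w,w)}(I)) = \rank(I)$, with the last equality using regular holonomicity via \Cref{thm:rankinw}, is exactly how SST get $\rank(I)$ canonical solutions, and you cite the right intermediate facts (\Cref{thm:lift}, the Corollary~2.5.11 quotation, and the preceding proposition on Frobenius ideals).

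In the structural part you have a genuine notational slip that should be fixed before the argument is coherent: you write that the shift vectors ``generate a rational polyhedral cone $C_w(I)$'' and that $\operatorname{supp}(f)\subseteq A+ C_w(I)_{\ZZ}$, but $C_w(I)$ is the Gr\"obner cone, living in weight space $\RR^n$, whereas the exponents of the series live in the \emph{polar dual} $C_w(I)^*$. What SST actually prove (their Lemma~2.5.16 and Theorem~2.5.14) is that the shifts $\beta-\alpha$ read from a $(-w,w)$-Gr\"obner basis lie in $C_w(I)^*$, hence $\operatorname{supp}(f)\subseteq A+C_w(I)^*_{\ZZ}$; and the pointedness needed for $\CC\llbracket C_w(I)^*_{\ZZ}\rrbracket$ to be a sensible power-series ring comes from $C_w(I)$ being full-dimensional (which genericity of $w$ guarantees), since the dual of a full-dimensional cone is pointed — not from $C_w(I)$ itself being pointed. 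You reach the right final statement, but the intermediate claims mix the cone and its dual. Finally, the two places where the real work of SST is hidden are precisely the two things you flag and defer: (a) that the recursion from the Gr\"obner basis confines the support to $A+C_w(I)^*_{\ZZ}$, and (b) that regular holonomicity upgrades the formal Nilsson series to an actually convergent one. Both of these are nontrivial and are the content of \cite[Theorem~2.5.14 and Lemma~2.5.16]{SST00}; as written your proposal is an honest and mostly correct plan, but steps (i)--(iii) would need to be carried out (with the $C_w(I)$ versus $C_w(I)^*$ distinction repaired) rather than left as ``I would check''.
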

\noindent Here, $\{e^1,\ldots,e^r\}$ denotes the set of roots of the indicial ideal of $I$, and $C_w(I)_{\ZZ}^*=C_w(I)^*\cap \ZZ^n$, where $C_w(I)$ is the {\em Gr\"{o}bner cone} of $I$ containing $w$,
\begin{align}\label{eq:Grcone} 
C_w(I) \,=\, \left\{ w' \in \RR^n | \init_{(-w,w)}(I) = \init_{(-w',w')}(I) \right\},
\end{align}
and ${C_w(I)}^*=\{ u \in \RR^n \mid u \cdot v \geq 0\text{ for all } v \in C_w \}$ is its dual cone (called ``polar dual'' in~\cite{SST00}). The elements of $\CC \llbracket {C_w(I)}_{\ZZ}^*\rrbracket$ are power series in $x$ whose exponent vectors lie in $C_w(I)_{\mathbb{Z}}^{\ast}$.

More precisely, the canonical solutions to $I$ with respect to $\prec_w$ have the form $x^A\cdot g$, where $A$ is an exponent of $I$ and $g$ is an element of $\CC \llbracket {C_w(I)}_{\ZZ}^*\rrbracket [ \log (x_1),\ldots ,\log(x_n)] ,$ such that the degree of each $\log(x_i)$ in $g$ is at most $\rank(I)-1$ (see \cite[Theorem 2.5.14]{SST00}).

\begin{definition}\label{def:Grweight}
Let $w\in \RR^n$. A finite subset $G$ of a $D$-ideal $I$ is called a {\em Gr\"{o}bner basis with respect to $w$} if $I$ is generated by $G$ and $\init_{(-w,w)}(I)\subset D$ is generated by the initial forms $\init_{(-w,w)}(g)$, where $g$ runs over the set $G$.
\end{definition}

Moreover, if $G$ is a Gr\"{o}bner basis of $I$ with respect to $\prec_w$, where $\prec$ denotes any term order on $D$ (see \cite[p.5]{SST00}), $G$ is also a Gr\"{o}bner basis of $I$ with respect to $w$. 

\subsection{The Saito--Sturmfels--Takayama algorithm}
We now state the theorem which allows us to compute canonical solutions starting from a Gr\"{o}bner basis of $I$ with respect to a generic weight vector $w\in \RR^n$. This is {\cite[Theorem~2.6.1]{SST00}}, and we here recall it together with its proof, which contains the algorithm to lift solutions of the indicial ideal to canonical series solutions of~$I$ up to arbitrary $w$-weight. We are going to refer to this algorithm as {\em SST algorithm}.

\begin{theorem}\label{thm:liftalgo}
Let $I$ be a regular holonomic ideal in $\mathbb{Q}[x_1,\ldots,x_n]\langle \partial_1,\ldots,\partial_n \rangle$ and let \mbox{$w\in \RR^n$} be a generic weight vector for $I$. Let $I$ be given by a Gr\"{o}bner basis $G$ w.r.t.~$w$. There exists an algorithm which computes all terms up to specified $w$-weight in the canonical series solutions to $I$ with respect to $\prec_w$.
\end{theorem}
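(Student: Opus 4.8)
The plan is to turn the abstract existence-and-uniqueness statement of \Cref{thm:lift} into an explicit terminating recursion, exploiting that a truncation at $w$-weight $k$ involves only finitely many exponent shifts. First I would extract the combinatorial data from $G$. Since $G$ is a Gr\"obner basis of $I$ with respect to $w$, the initial ideal $\init_{(-w,w)}(I)$ is generated by the finitely many computable initial forms $\init_{(-w,w)}(g)$, $g\in G$; genericity of $w$ guarantees via \Cref{thm:torusfixed} that $\init_{(-w,w)}(I)$ is torus-fixed, so its distraction into $\CC[\theta_1,\dots,\theta_n]$ produces the indicial ideal $\ind_w(I)$ of \eqref{eq:indicial} by a purely combinatorial manipulation of generators. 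By the two propositions preceding \Cref{thm:lift}, $\ind_w(I)$ is a holonomic Frobenius ideal, and by \Cref{thm:rankinw} its rank equals $\rank(I)$ since $I$ is regular holonomic.

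Next I would solve the Frobenius ideal. Computing the variety $V(\ind_w(I))$ of exponents and, for each exponent $A$, the Artinian component $Q_A$ of the primary decomposition \eqref{eq:primdecFrob} with its orthogonal complement $Q_A^{\perp}$, \Cref{prop:solsind} shows that the functions $x^A g(\log x)$, $g\in Q_A^{\perp}$, span $\Sol(\ind_w(I))$. Reducing a $\CC$-basis of $\bigcup_A Q_A^{\perp}$ against the tie-breaking monomial order $\prec$ yields the $\rank(I)$ starting monomials $x^A\log(x)^B\in\Start_{\prec_w}(I)$, each arising as the $\prec_w$-leading term of a distinguished solution of $\ind_w(I)$; these are exactly the leading terms of the canonical series solutions sought.

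The core of the algorithm is the term-by-term lift. Fix a starting monomial $x^A\log(x)^B$; by \Cref{thm:lift} the canonical solution it determines has the shape $x^A\sum_m x^m g_m(\log x)$ with $m$ ranging over $C_w(I)^{*}_{\ZZ}$ (cf.~\eqref{eq:Grcone}, \eqref{eq:Nilssonw}, so in particular $w\cdot m\ge 0$), with $g_0$ the chosen indicial solution and each $g_m$ of $\log(x_i)$-degree at most $\rank(I)-1$. I would rewrite each $g\in G$ in $\theta$-normal form $g=\sum_v x^v q_{g,v}(\theta)$, using $\partial_i^{\ell}=x_i^{-\ell}\,\theta_i(\theta_i-1)\cdots(\theta_i-\ell+1)$; then $x^v q(\theta)\bullet\bigl(x^{A+m}g_m(\log x)\bigr)=x^{A+m+v}\bigl(q(\theta+A+m)\bullet g_m(\log x)\bigr)$, where on the right $\theta_i$ acts on log-polynomials as $\partial/\partial\log x_i$, and the $v$ of smallest $w$-weight occurring in $g$ reproduces a generator of $\ind_w(I)$. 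Imposing $g\bullet f=0$ for all $g\in G$ and collecting coefficients by $w$-weight and then by monomial in $x$ yields, at each weight level, a finite linear system for the next batch of $g_m$, of the schematic form ``$\ind_w(I)$ with $\theta$ shifted to $\theta+A+m$, applied to $g_m$, equals an expression already determined by the $g_{m'}$ of strictly smaller $w$-weight.'' Processing the levels in increasing order up to $k$, the right-hand side is always known and the shifted indicial operator is invertible on the finite-dimensional space of bounded-degree log-polynomials modulo the span of the other starting monomials, so $g_m$ is uniquely determined. Genericity of $w$ makes the Gr\"obner cone $C_w(I)$ full-dimensional and its polar dual pointed, so only finitely many $m\in C_w(I)^{*}_{\ZZ}$ satisfy $0\le w\cdot m\le k$ and the recursion halts; running it for each of the $\rank(I)$ starting monomials outputs the truncations~\eqref{eq:Fkintro}. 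Effectivity of every ingredient---Gr\"obner bases in the Weyl algebra, distraction, primary decomposition, linear algebra---over $\QQ$ is exactly what the hypothesis on the base field secures.

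I expect the invertibility claim in the lifting step to be the main obstacle: one must prove that the finite linear system at each weight level is consistent and has a unique solution compatible with condition~(3) of \Cref{thm:lift} (no other starting monomial occurs in $f$), which in particular requires handling the resonant case where a shifted exponent $A+m$ coincides with another point of $V(\ind_w(I))$. This is precisely where regular holonomicity and the genericity of $w$ enter---they force $\rank$ to be preserved under $\init_{(-w,w)}(\cdot)$ and the resulting series to lie in the Nilsson ring $N_w(I)$ of \eqref{eq:Nilssonw}---and it is the substance of \cite[Theorem~2.5.12]{SST00} underlying \Cref{thm:lift}; for the present statement one needs in addition only the observation that this construction is carried out by a terminating computation.
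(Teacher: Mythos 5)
Your proposal follows essentially the same route as the paper's own proof: both compute the indicial ideal $\ind_w(I)$ from the Gr\"obner basis, obtain the starting monomials via \Cref{prop:solsind}, rewrite each Gr\"obner basis element in a $\theta$-normal form splitting off the initial part (the paper's $x^\alpha g = f - h$ versus your $g=\sum_v x^v q_{g,v}(\theta)$, which are the same decomposition organized slightly differently), and lift by induction on the $w$-weight, with uniqueness at each step secured by injectivity of the concatenated indicial operators on the complement of the span of the starting monomials (the paper's map $F\colon L_p'\to L_p^d$) together with existence from \Cref{thm:lift}. You also correctly identify the pointedness of $C_w(I)^*$ (via full-dimensionality of the Gr\"obner cone for generic $w$) as the reason only finitely many exponents occur at each truncation level, and the injectivity/resonance issue as the crux, which matches the structure of the paper's argument.
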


For the convenience of our readers, we summarize the main steps of the SST algorithm as a procedure before turning to the proof. We also give an example of the algorithm running on a one-variable hypergeometric system. We here already mention Gröbner fans; the definition is contained in the later \Cref{sec:Grfan}.

\begin{procedure}[Computing canonical series solutions of a $D$-ideal up to a chosen order]\label{proc:SST}

\noindent {\em Input:} A regular holonomic $D_n$-ideal $I$, its small Gr\"{o}bner fan $\Sigma$ in $\RR^n$, a weight vector $w\in \RR^n$ that is generic for $I$, and the desired order $k\in \NN$. 

\begin{enumerate}[{\em Step 1.}]
\item Compute the indicial ideal $\ind_w(I)$ and its $\rank(I)$ many solutions. They are the form $x^A\log(x)^B$ with $A \in V(\ind_w(I))$, and will be the starting monomials of the canonical series solutions. For each starting monomial, carry out Steps 2--5. 
\item Determine a Gr\"{o}bner basis $G$ of $I$ with respect to the weight~$w$. 
\item Create the recurrences. By a recurrence, we mean a way of writing each element $g\in G$ as $x^\alpha g = f-h$ with $\alpha \in \mathbb{Z}^n$ an integer vector such that $f\in \KK[\theta_1,\ldots,\theta_n]$ and $h \in \KK[x_1^{\pm 1},\ldots,x_n^{\pm 1}]\langle \partial_1,\ldots,\partial_n \rangle $ with $\operatorname{ord}_{(-w,w)}(h)<0 $.\footnote{It will become clear in \Cref{eg:hypergeo} why we call such a representation a recurrence.} For each recurrence, write $h = \sum_{j=1}^r x^{\beta(j)}h_j$ where each $h_j$ is in $\KK[\theta_1,\ldots,\theta_n].$ Write $f, h_1, ..., h_r$ as matrices acting on the vector spaces $L_p,L_{p - \beta(1)}, \ldots, L_{p - \beta(r)},$ which contain the coefficient vectors $c_p$ of $x_p\log(x)^b,$ $ 0 \leq b_i < \rank I.$ 
\item Apply the recurrences. Assuming one has the coefficient vectors $c_{p - \beta(1)}, ..., c_{p - \beta(r)},$ this amounts to solving the system of matrix equations to obtain $c_p.$ 
\item If the matrix of $f$ is singular, then one may need to use the condition that for $i \neq j,$ the series expansion $s_i$ must have coefficient $0$ for the starting monomial of $s_j.$ 
\end{enumerate}
\noindent {\em Output:} The canonical series solutions of $I$ with respect to $w$, truncated at $w$-weight $k$.
\end{procedure}

\begin{example}\label{eg:hypergeo}
We apply this procedure to the ideal $I \subset \CC[x]\langle \dx \rangle$ generated by the hypergeometric operator $P  \coloneqq  \theta(\theta- 3) - x(\theta+ a)(\theta+ b).$  Its Gr\"{o}bner fan has only two maximal cones, namely the rays $\pm \mathbb{R}_{\geq 0}$. We choose the weight vector $w = 1.$ The ideal $I$ is univariate and principal, hence holonomic of rank $\ord_{(0,1)}(P)  = 2.$

\begin{enumerate}[{\em Step 1.}]
\item The order of $x$ with respect to $w = 1$ is $-1.$ Thus $\init_{(-w,w)}(I)=\langle \theta(\theta- 3)\rangle.$ The initial ideal is already torus-fixed, so $\ind_w(I) = \init_{(-w,w)}(I).$ Solutions to the indicial ideal are $x^0 =1$ and $x^3.$ We select the starting monomial $x^3.$ Our coefficient vectors~$c_p$ are in $L_p  = \CC \cdot \{x^{p+3}, x^{p+3} \log(x)\}$. 
\item The ideal is principal, hence its generator is a Gr\"{o}bner basis of the ideal.
\item  We write $P = f-h,$ where $f = \theta(\theta- 3)$ and $h = x(\theta+ a)(\theta+ b).$ It suffices to compute the action of $\theta$ on each element of $L_p$ and extend  it $k[\theta]$-linearly. We have
\begin{align*}
\theta \bullet x^{p+3}  \,=\, (p+3)x^{p+3} \quad \text{and}\quad \theta \bullet x^{p + 3}\log(x)  \,=\, x^{p + 3} + (p+3)x^{p + 3}\log(x) \,.
\end{align*}
Thus, the matrix of the operator $\theta$ in the basis $\{ x^{p+3}, x^{p+3} \log(x)\}$ is 
\begin{align*}
    \begin{bmatrix}
    p + 3 & 1\\
    0 & p + 3
    \end{bmatrix}.
\end{align*} 

\item Let $c_{p,1}$ and $ c_{p,2}$ be the coefficients of $x^{p+3}$ and $x^{p+3}\log(x)$ in the power series expansion. Then we can write our operators as matrices, and our recurrence as
\begin{align*}
    \begin{bmatrix}
    p & 1\\
    0 & p
    \end{bmatrix}
    \begin{bmatrix}
    p + 3 & 1\\
    0 & p + 3
    \end{bmatrix}
    \begin{bmatrix}
    c_{p,1}\\
    c_{p,2}
    \end{bmatrix}
    \,=\,
    \begin{bmatrix}
    p - a + 2 & 1 \\
    0 & p - a + 2
    \end{bmatrix}
    \begin{bmatrix}
    p - b + 2 & 1 \\
    0 & p - b + 2
    \end{bmatrix}
    \begin{bmatrix}
    c_{p-1,1}\\
    c_{p-1,2}
    \end{bmatrix}
\end{align*}
with initial values $c_0 = 1, \, d_0 = 0.$ Solving the recurrence yields the explicit~formulae
\begin{align*}
    c_{p,1} = 0 \,  \quad \text{and} \quad c_{p,2} = \frac{(a+3)_p(b+3)_p}{(1)_p(4)_p} \, ,
\end{align*} 
where $(a)_p = a(a + 1)\cdots(a + p -1)$ is the Pochhammer symbol.
\item If we choose the starting monomial $x^0=1$ instead, the matrix of $f$ is singular for $p = 3.$ We leave it to the reader to find the series expansion, or to see \cite[pp. 98--99]{SST00}.
\end{enumerate}
\end{example}

We now turn to the proof of \Cref{thm:liftalgo}, making the steps of \Cref{proc:SST} precise.

\smallskip

{\noindent {\em Proof and algorithm.}} Let $w\in \RR^n$ be generic for $I$. Compute the roots of $\ind_w(I)$ and extend the field of coefficients by them. Denote the resulting, computable field extension of $\QQ$ by $\mathbb{K}$. To compute the canonical solution of $I$ whose starting monomial is $x^A\log(x)^B$, one proceeds as follows. For $p\in \ZZ^n$, denote by $L_p$ the $\KK$-vector space
\begin{align}\label{def:Lp}
L_p \, \coloneqq \, x^A  \cdot \hspace*{-5mm} \sum_{0\leq b_i < \rank(I)} \KK \cdot x^p \, \log(x)^b \, .
\end{align} 
The $\KK$-vector space $L_p$ is finite-dimensional. Every $f\in \KK[\theta]$ induces a $\KK$-linear map \mbox{$f\colon L_p\to L_p$}. The monomials of $L_p$ are a $\KK$-basis of it. They are ordered by the term order~$\prec_w$ on the Nilsson ring, starting with the smallest. The matrix of $f$ in this basis is an upper triangular square matrix. Let $L_p'$ denote the set of monomials in $L_p$ that are not contained in $\Start_{\prec_w}(I)$. 
Now let $\{f_1,\ldots,f_d\}$ be any generating set of $\ind_w(I)$ and restrict $f_i\colon L_p\to L_p$ to $L_p'$; this corresponds to deleting some of the columns in the associated matrix. Denote the resulting matrix by $F_i$. Then the map
\begin{align}\label{eq:mapfi}
F: L_p'\longrightarrow L_p^d \, , \quad v\mapsto (f_1\bullet v,\ldots,f_d \bullet v)^{\top}
\end{align}
is injective and is represented by the matrix obtained as vertical concatenation of $F_1,\ldots,F_d$.

Now let $G = \{g_1, ..., g_d\}$ be a Gr\"{o}bner basis of $I$ with respect to $w$; its Gr\"{o}bner cone in~$\RR^n$ is denoted by $C_w$. 
For each $g \in G$, choose a Laurent monomial $x^{\alpha}$ such that
\begin{align}\label{eq:gfh}
x^{\alpha}g \,=\, f - h \, ,
\end{align}
where $f\in \KK[\theta_1,\ldots,\theta_n]$,  $h\in \KK[x_1^{\pm 1},\ldots,x_n^{\pm 1}]\langle \partial_1,\ldots,\partial_n \rangle $ with $\operatorname{ord}_{(-w,w)}(h)<0 $. Here, $\operatorname{ord}_{(-w,w)}(h)$ denotes the largest $w$-weight of a monomial appearing in $h$. Then the set of operators $f_1, \ldots , f_d$ obtained this way generate $\ind_w(I)$ and, as maps as in~\eqref{eq:mapfi}, they are injective. 
The Laurent monomial $x^\alpha$ for a Gr\"{o}bner basis element $g$ is obtained by taking a highest-weight term $x^{a} \partial^{b} m$ in $g,$ where $m$ is a monomial in  the $\theta_i$'s and for all $k$, at least one of $\{a_k, b_k\}$ is zero. Intuitively, this corresponds to pulling out as many $\theta$'s as possible into~$m$. Then $x^\alpha =  x^{b-a}$. The $h_i$ may have terms of different weights; in this case, we get a recurrence that involves $L_p$ for more than two different $p$'s. 
The coefficients of the canonical series solution are now computed by induction on the $w$-weight $k$. We start from a canonical series solution $x^A\log(x)^B+\cdots $ and assume the coefficients $c_{pb}$ are already known for $0\leq p\cdot w\leq k,$  $p\in C^*_{\mathbb{Z}}$.
Let $F_k(x)$ be this partial solution up to $w$-weight~$k$, i.e., 
\begin{align}\label{eq:Fk}
F_k(x) \,=\ x^A \cdot \hspace*{-5mm} \sum _{\substack{0\leq p\cdot w\leq k,\, p\in C^*_{\mathbb{Z}}, \\ 0 \leq b_j < \rank(I)}} c_{pb} \, x^p\log(x)^b \,,
\end{align}
and let $M_k$ be the space of terms with $w$-weight greater than $k$, i.e.\ $M_k = \sum_{p \cdot w > k, p \in C^*_{\ZZ}} L_p$. Then, by definition, we have that
\begin{align} \label{eq:goFk}
g_i \bullet F_k(x) \,\equiv\, 0 \mod M_k \, .
\end{align} 
Assuming we know $F_k(x)$ for some $k$, we are going to construct a recursion which allows us to determine the additional terms which are needed to lift $F_k(x)$ to $F_{k+1}(x)$. The starting point of that recursion will be the starting monomials. 
We hence look for an element $E_{k+1}$ of $\sum_{p \cdot w = k + 1, p \in C^*_{\ZZ}}L'_p$ such that
\begin{align}
 E_{k+1}(x) \, \equiv \, F_{k+1}(x) - F_{k}(x) \mod M_{k+1} \,.
\end{align}
To achieve this, observe that $\operatorname{ord}_{(-w,w)}(h_i)<0$ implies that $h_i \bullet x^{\ell}$ has \emph{higher} $w$-weight than~$x^{\ell}$. We can show this on monomials as follows. Suppose that $h_i = x^q$ for some $q\in \NN^n$ with $q \cdot (-w) < 0.$ Then $q \cdot w >0,$ so $h_i\bullet x^{\ell}=x^{\ell + q}$ has higher $w$-weight than $x^{\ell}.$ Similarly, suppose that $h_i = \partial^r$ where $r \cdot w < 0.$ Then $(-r) \cdot w > 0.$ Thus $h_i \bullet x^{\ell} = Cx^{\ell-r}$ for some constant $C,$ and $x^{\ell-r}$ has higher $w$-weight than $x^{\ell}$. Together with~\eqref{eq:gfh} and~\eqref{eq:goFk}, this implies that $f_i \bullet F_{k + 1}$ = $h_i \bullet F_k \mod M_{k+1}$, which gives the desired recursion relation for $E_{k+1}(x)$ in terms of $F_k(x)$, namely
\begin{equation}\label{eq:remainder}
f_i \bullet E_{k+1}(x) \, \equiv \, (h_i - f_i) \bullet F_k(x) \mod M_{k+1} \,.
\end{equation}
By the injectivity of the map $F$ from \eqref{eq:mapfi} and the existence of a canonical series solution, there exists a unique solution $E_{k+1}(x)$ to \eqref{eq:remainder}, and this lifts $F_k(x)$ to $F_{k+1}(x)$. \hfill \qed

\begin{remark}\label{rem:choicealpha}
The choice of $\alpha$ in \eqref{eq:gfh} is in general not unique, and sometimes there might not be an $\alpha$ at all which satisfies all of the conditions. For example, consider $ w =  (-1, 0)$ and the operator $P=\dx_2 + x_2\dx_2 + x_2^2\dx_2 + \dx_1 + x_2\dx_1$. All its monomials have weight $0$~or~$-1.$ Also, no matter what Laurent monomial we multiply by, only one of $x^\alpha \dx_2, x^\alpha x_2\dx_2,$ and $x^\alpha x_2^2\dx_2$ can be torus-fixed. So $h$ will contain a term which has weight $0,$ and the order of $h$ will always be the same as the order of $f$. Thus no choice of $\alpha$ as required exists. For a case where several choices for the monomial $x^\alpha$ exist, take e.g.\ $\theta_1 + \theta_2 + \theta_3 + 1.$ Then any~$x^\alpha$ with negative $w$-weight will work. 
However, we note that two different monomials having the same weight only occurs for a measure zero subset of the Gr\"obner cone. Thus, if we have picked a generic weight vector $w,$ we can perturb it slightly to $w'$ for which a choice of $\alpha$ exists, while remaining in the same Gr\"obner cone. In the cases we consider later on, there is a unique choice for $\alpha.$
\end{remark}

\section{Computing solutions of \texorpdfstring{$I_3$}{I\_3} with Gr\"{o}bner techniques}\label{sec:solI3Groebner}
In this section, we compute the canonical series solutions of the $D$-ideal $I_3$ generated by the operators in \eqref{eq:p1p2p3tilde} using the SST algorithm from the proof of \Cref{thm:liftalgo}. We begin in \Cref{sec:Grfan} by showing that $I_3$ is holonomic, and by computing some preliminary data, such as its holonomic rank, its singular locus, and its Gr\"obner fan. In \Cref{sec:Groebnerw1}, we discuss thoroughly the computation of the canonical series solutions for one of the cones of the Gr\"obner fan. Thanks to the symmetry of the system, the computations for the remaining two cones work analogously; we give more details for that at the end of the section.
 
\subsection{The Gr\"{o}bner fan of \texorpdfstring{$I_3$}{I\_3}}\label{sec:Grfan}
In order to employ the SST algorithm for computing the canonical series solutions, we first need to certify that the ideal is holonomic. Let $I_3(c)$ be the $D$-ideal generated by the operators in \eqref{I3system}, with $c=(c_0,c_1,c_2,c_3)$. A computation in {\sc Singular}~\cite{Singular,Plural} approves that $I_3(4,2,2,2)$ is holonomic.
The output of the {\sc Singular} code below proves that $I(c)$ is holonomic also for generic~$c$. For that, we use the Weyl algebra $\QQ (c_0, c_1, c_2, c_3) [x_1, x_2, x_3]\langle \partial_1, \partial_2, \partial_3\rangle$ with the degree reverse lexicographical order for\linebreak $x_1 > x_2 > x_3>\partial_1 > \partial_2 >  \partial_3.$ 
The code uses the $D$-module libraries~\cite{ABLMS}, and can be carried out by running the following lines, which also compute the holonomic rank, the singular locus, and the characteristic variety~of~$I_3(c)$. 

{\small
\begin{verbatim}
LIB "dmod.lib";
ring r = (0,c0,c1,c2,c3),(x1,x2,x3,d1,d2,d3),dp; setring r;
def D3 = Weyl(r); setring D3;
poly P1 = 4*x1*d1^2 - 4*x3*d3^2 + (2*c0-4*c1+4)*d1 + (-2*c0+4*c3-4)*d3;
poly P2 = 4*x2*d2^2 - 4*x3*d3^2 + (2*c0-4*c2+4)*d2 + (-2*c0+4*c3-4)*d3;
poly P3 = 2*x1*d1 + 2*x2*d2 + 2*x3*d3 + (2*c0-c1-c2-c3);
ideal I = P1,P2,P3;
isHolonomic(I); holonomicRank(I); DsingularLocus(I);
def CV = charVariety(I); setring CV; charVar;
\end{verbatim}
}
\noindent
Since we compute over the field of rational functions in the parameters $c,$ the  results are valid for generic~$c$.
The characteristic ideal $\init_{(0,1)}\left(I_3(c)\right)$ is the  $\CC[x_1,x_2,x_3][\xi_1,\xi_2,\xi_3]$-ideal generated by the $8$ operators
\begin{align}
\begin{split}
x_1\xi_1 + x_2\xi_2 + x_3\xi_3, \ x_2\xi_2^2- x_3\xi_3^2,\ x_2\xi_1\xi_2 + x_3\xi_1\xi_3 + x_3\xi_3^2, \\ 2x_2 x_3  \xi_2  \xi_3-x_1 x_3  \xi_3^2+x_2 x_3  \xi_3^2+x_3^2  \xi_3^2, \ 
x_3  \xi_1  \xi_2  \xi_3+x_3  \xi_1  \xi_3^2+x_3  \xi_2  \xi_3^2, \\
2x_1 x_3  \xi_2  \xi_3^2-2x_3^2  \xi_2  \xi_3^2-x_1 x_3  \xi_3^3+x_2 x_3  \xi_3^3-3x_3^2  \xi_3^3, \\
2x_2 x_3  \xi_1  \xi_3^2-2x_3^2  \xi_1  \xi_3^2+x_1 x_3  \xi_3^3-x_2 x_3  \xi_3^3-3x_3^2  \xi_3^3, \\
x_1^2 x_3  \xi_3^3-2x_1 x_2 x_3  \xi_3^3+x_2^2 x_3  \xi_3^3-2x_1 x_3^2  \xi_3^3-2x_2 x_3^2  \xi_3^3+x_3^3  \xi_3^3 \, ,
\end{split}
\end{align}
all of which are independent of $c$. The singular locus is
\begin{equation}
\operatorname{Sing}\left(I_3(c)\right) \,=\, V\left( x_1x_2x_3\cdot  \lambda \right) \, \subset \, \CC^3,
\end{equation}
where $\lambda$ is the homogeneous degree-$2$ polynomial defined in \Cref{eq:lambda}. 
Note that the singular locus is symmetric in each of the~$x_i$'s, and is in agreement with what was observed for the known solutions~\eqref{eq:SingIfs}.

Our computations in {\sc Singular} show that $I_3(c)$ has holonomic rank $4.$ Therefore, by the theorem of Cauchy, Kovalevskaya, and Kashiwara, the $\CC$-vector space of holomorphic solutions to $I_3(c)$ on a simply connected domain in $\CC^3\setminus\Sing\left(I_3(c)\right)$ is $4$-dimensional. We now focus on $I_3 \coloneqq I_3(4,2,2,2)$ and compute its Gr\"obner fan. Again, denote by~$W$ the set of weight vectors for~$D_n$ from \Cref{eq:weightsD}, i.e., $W =\left\{ (u,v) \,|\, u+v \geq 0 \right\}  \subset  \RR^{2n}.$
The {\em Gr\"{o}bner fan} of a $D_n$-ideal $I$ is a finite polyhedral fan in $\RR^{2n}$ with support $W$ such that the initial ideal $\init_{(u,v)}(I) \subset \operatorname{gr}_{(u,v)}(D_n)$ is constant as $(u,v)$ ranges over any of its cones. The {\em small Gr\"{o}bner fan} is the restriction of the Gr\"{o}bner fan to $\{ (u,v) \,|\, u+v=0\} \simeq \RR^n.$ Hence, the cones of the small Gr\"{o}bner fan of $I$ are in one-to-one correspondence with the initial ideals $\init_{(-w,w)}(I)$, with $w\in \RR^n$.
\begin{definition}\label{def:genericw}
A weight vector  is {\em generic for $I$} if it lies in an open cone of the small Gr\"{o}bner fan of $I$.
\end{definition}
To compute the small Gr\"{o}bner fan of our system $I_3$, our strategy is to first approximate the fan by computing the initial ideals for various lattice vectors, and finding where they change. Once we have a good enough approximation to guess our fan, we can verify our guess by computing initial ideals along the candidate rays. The system $I_3$ is homogeneous with respect to the vector $(1, 1, 1).$ Thus, our small Gr\"{o}bner fan really lives in $\RR^3 / (1, 1, 1).$ More specifically: if our Gr\"{o}bner fan is~$\Sigma$, then $\Sigma  = \Sigma' \times (1, 1, 1)$ where $\Sigma'$ is a fan in $\RR^2.$ We choose the orthogonal basis $v_1 = (1, 0, -1)$ and $v_2 = (-1, 2, -1)$ of $\RR^3/ (1,1,1)$. By explicitly computing the initial ideals of $I_3$, we obtain $\Sigma'$ as depicted in \Cref{fig:grobnerfan}. Note that this picture is not canonical; it depends on the choice of basis of $\RR^3/(1,1,1).$

\medskip

The $D$-ideal $I_3$ is moreover regular singular, as we will read later from the system~\eqref{eq:By}.

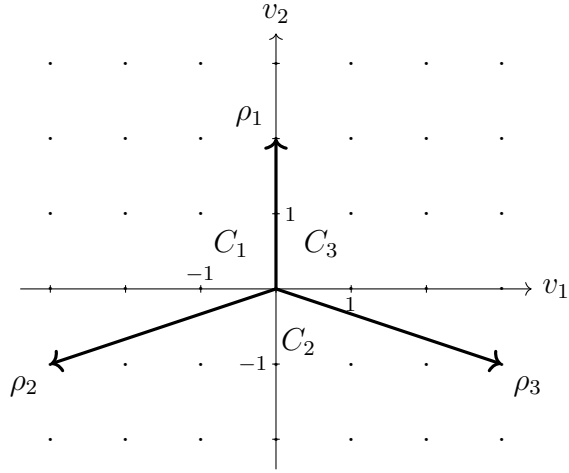
\begin{figure}
\begin{tikzpicture}[scale=1.0, transform shape]
\tikzstyle{grid lines}=[black,line width=0]
\foreach \r in {-3,...,2} \draw[black] (\r,-0.05) -- (\r,0.05);
\foreach \r in {-2,...,3} \draw[black] (-0.05,\r) -- (0.05,\r);
\foreach \r in {-3,...,3}
\foreach \s in {-2,...,3}
\node[fill=none,draw=none] at (\r,\s) {.};
\draw[black,->] (0,-2.4)  -- (0,3.4) node[above] {$v_2$};
\draw[black,->]  (-3.4,0)  -- (3.4,0) node[right] {$v_1$};
\draw[black,->,very thick] (0,0)  -- (0,2) node[above left] {$\rho_1$};
\draw[black,->,very thick] (0,0)  -- (-3,-1) node[below left ] {$\rho_2$};
\draw[black,->,very thick] (0,0)  -- (3,-1) node[below right] {$\rho_3$};
\node at (-0.6,0.6) {$C_1$};
\node at (0.3,-0.7) {$C_2$};
\node at (0.6,0.6) {$C_3$};
\node at (1.0,-0.2) {${}_1$};
\node at (0.2,1) {${}_1$};
\node at (-0.3,-1) {${}_{-1}$};
\node at (-1.0,0.2) {${}_{-1}$};
\end{tikzpicture}
\caption{The fan $\Sigma'$ in the basis $(v_1,v_2)$, where $v_1 = (1,0,-1)$ and $v_2 = (-1,2,-1)$. 
Its rays are $\rho_1=\RR_{\geq 0}\cdot (0,1)$, $\rho_2=\RR_{\geq 0}\cdot(-3, -1)$, and $\rho_3=\RR_{\geq 0}\cdot(3,-1)$. 
In $\RR^3$, the corresponding hyperplanes are $\rho_1 = [\{w_1 - w_3 = 0\}],$ $ \rho_2 =[ \{w_1 - w_2 = 0\}],$ and $\rho_3 = [\{w_2 - w_3 = 0\}].$ The cones $C_1$, $C_2$, and $C_3$ correspond to the following cones in $\RR^3$: $ C_1 =[\{w_1<w_2,w_3\}]$, $C_2=[\{ w_2<w_1,w_3\}]$, and $C_3 =[\{w_3<w_1,w_2\}]$.}
\label{fig:grobnerfan}
\end{figure}

\begin{remark}
The triangle integral---read as a function of the coefficients of the graph polynomial---is a solution of a GKZ system~\cite[Section~3.5]{Cruz19}, in which case the SST algorithm simplifies substantially. The conformal ideal $I_3(4,2,2,2)$ is a restriction of that GKZ system to a subspace of special values of some of the coefficients. Our further investigations will not rely on that fact; we here provide a general implementation of the SST algorithm.
\end{remark}

\subsection{Solutions to the indicial ideal}\label{sec:Groebnerw1}
We will compute the initial and indicial ideal of the $D$-ideal $I_3$ from \eqref{eq:indicial} with respect to the weight vector $w = (-1, 0, 1)\in C_1$. In the basis $v_1=(1,0,-1),$ $v_2=(-1,2,-1)$ of \Cref{sec:Grfan}, $w$ is $(-1,0)$. From \Cref{fig:grobnerfan}, we read that $w$ is contained in the relative interior of the cone~$C_1$ in the  Gr\"{o}bner fan, hence $w$ is generic for~$I_3$.
The initial ideal $\init_{(-w,w)}(I_3)$ is generated by the three~operators
\begin{align}\label{eq:initialwI}
\begin{split}
x_1\dx_1 + x_2\dx_2 + x_3\dx_3 + 1 \, , \quad
x_2\dx_2^2 + \dx_2 \, ,\quad
x_3\dx_3^2 + \dx_3 \, .
\end{split}
\end{align}
We compute the indicial ideal
\begin{align}
\ind_w(I_3) \,=\, R_3 \cdot \init_{(-w, w)}(I_3) \cap \CC [\theta_1, \theta_2, \theta_3] \, .
\end{align}
Users of {\sc Singular} may compute the initial ideal and a Gr\"{o}bner basis with respect to the weight vector $w$ with the following code.
{\small 
\begin{verbatim}
LIB "dmod.lib";
int c0 = 4; int c1 = 2; int c2 = 2; int c3 = 2;
ring r = 0,(x1,x2,x3,d1,d2,d3),dp; setring r;
def D3 = Weyl(r); setring D3;
poly P1 = 4*x1*d1^2 - 4*x3*d3^2 + (2*c0-4*c1+4)*d1 + (-2*c0+4*c3-4)*d3;
poly P2 = 4*x2*d2^2 - 4*x3*d3^2 + (2*c0-4*c2+4)*d2 + (-2*c0+4*c3-4)*d3;
poly P3 = 2*x1*d1+2*x2*d2 + 2*x3*d3 + (2*c0-c1-c2-c3);
ideal I = P1,P2,P3;
intvec w = -1,0,1; GBWeight(I,-w,w);
ideal inwI = initialIdealW(I,-w,w); inwI;
\end{verbatim}
}
This returns the initial ideal as in \eqref{eq:initialwI}. We deduce that the indicial ideal $\ind_{w}(I_3)$ is generated by the three operators
\begin{align}\label{eq:indwI}
\theta_1 + \theta_2 + \theta_3 + 1 \, , \quad
\theta_2^2\, , \quad
\theta_3^2 \, .
\end{align}
They are a Gr\"{o}bner basis of $\init_{(-w,w)}(I_3)$, and in this case $\init_{(-w,w)}(I_3)=\ind_w(I_3)$. One reads~that 
\begin{align}
V(\ind_w(I_3)) \,=\, \{ (-1,0,0) \},
\end{align}
i.e., the only zero of the indicial ideal is the point $(-1,0,0)$.
For the next step of the algorithm, we need to compute solutions to $\ind_w(I_3)$. These will become the starting monomials of the canonical series solutions to~$I$.
Recall that $\ind_w(I_3)=DJ$ for $J$ the $\CC[\theta_1,\theta_2,\theta_3]$-ideal generated by the operators in \eqref{eq:indwI}. We need to compute the primary decomposition of~$J$. In {\sc Singular}, one obtains the primary decomposition with the library {\tt primdec} by running the following code, where we encoded the Euler operators $\theta_1,\theta_2,\theta_3$ as variables {\tt th1},~{\tt th2},~{\tt th3}.
{\small 
\begin{verbatim}
LIB "primdec.lib";
ring r = 0,(th1,th2,th3),dp;
ideal J = th1 + th2 + th3 + 1, th2^2, th3^2;
list pr = primdecGTZ(J); pr;
\end{verbatim}
}
\noindent From the output, one reads the ideals
\begin{align}\label{eq:primaryI}
 \langle \theta_2^2\rangle \, , \quad \langle \theta_3^2 \rangle \, , \quad \langle \theta_1+\theta_2+\theta_3+1\rangle \, ,
\end{align}
coinciding with the Gr\"{o}bner basis in \eqref{eq:indwI}.
Recalling that $V(J)=\{A\}$ with $A=(-1,0,0)$, we read that $Q_A$ from \eqref{eq:primdecFrob} is the ideal generated by $\theta_1+\theta_2+\theta_3,$ $\theta_2^2,$ and $\theta_3^2$.
The orthogonal complement $Q_A^\perp$ from \eqref{eq:orthcompl} hence is
\begin{align}
Q_A^\perp \,=\, \left\{ p \in \CC[x_1,x_2,x_3] \, | \, \partial_2^2\bullet p = \partial_3^2\bullet p = (\partial_1+\partial_2+\partial_3)\bullet p \, = \, 0  \right\} \, .
\end{align}
This is a finite-dimensional vector space spanned by the $4$ polynomials
\begin{align}
\{1,\ x_1-x_2, \ x_1-x_3,\ (x_1-x_2)(x_1-x_3)\} \, .
\end{align}
Hence, by \Cref{{prop:solsind}}, the solution space to $\ind_w(I_3)$ is spanned by the $4$ functions
\begin{align}\label{eq:solsindw}
\begin{split}
m_1(x_1,x_2,x_3) &\,=\, x_1^{-1} \, , \\
m_2(x_1,x_2,x_3) &\,=\,x_1^{-1}\log\left(\frac{x_1}{x_2}\right)  , \\  
m_3(x_1,x_2,x_3) &\,=\,x_1^{-1}\log\left(\frac{x_1}{x_3}\right)  , \\
m_4(x_1,x_2,x_3) &\,=\,x_1^{-1} \log\left(\frac{x_1}{x_2}\right)\log\left(\frac{x_1}{x_3}\right)  .
\end{split}
\end{align}

In {\tt Macaulay2}~\cite{M2}, the generators of the indicial ideal in 
\eqref{eq:indwI} and its solutions are obtained conveniently using the commands {\tt distraction} and {\tt solvePDE} as follows.
{\small 
\begin{verbatim}
needsPackage "Dmodules"; needsPackage "NoetherianOperators";
D3 = QQ[x1,x2,x3,d1,d2,d3,WeylAlgebra=>{x1=>d1,x2=>d2,x3=>d3}]
c0 = 4; c1 = 2; c2 = 2; c3 = 2;
P1 = 4*x1*d1^2 - 4*x3*d3^2 + (2*c0-4*c1+4)*d1 + (-2*c0+4*c3-4)*d3;
P2 = 4*x2*d2^2 - 4*x3*d3^2 + (2*c0-4*c2+4)*d2 + (-2*c0+4*c3-4)*d3;
P3 = 2*x1*d1+2*x2*d2 + 2*x3*d3 + (2*c0-c1-c2-c3);
I = ideal(P1,P2,P3);
inwI = inw(I,{1,0,-1,-1,0,1}); indwI = distraction(inwI,QQ[th1,th2,th3])
solvePDE(indwI)
\end{verbatim}
}

\noindent Running this code outputs the following:
{\small 
\begin{verbatim}
      {{ideal (th3, th2, th1 + 1), {| 1 |, | dth1-dth2 |, | dth1-dth3 |, 
           | dth1^2-dth1dth2-dth1dth3+dth2dth3 |}}}
\end{verbatim}
}
\noindent As is explained in \cite{solvePDE}, the solutions to $\ind_w(I_3)$ are then obtained by replacing {\tt dthi} by $\log(x_i)$ and multiplying the resulting functions by $x^A=x_1^{-1}.$ 

\medskip

We will take advantage of the homogeneity of the system to write it in fewer variables. Consider the change~of~variables 
\begin{align}\label{eq:changecord}
y_1 = x_1\,,\quad y_2 = \frac{x_2}{x_1}\,, \quad y_3 = \frac{x_3}{x_1} \,.
\end{align}
\begin{lemma}\label{lem:twovarprop}
Every solution $f(x_1,x_2,x_3)$ to $I_3$ can be written in the form 
\begin{align}
    f(x_1,x_2,x_3) \,=\, x_1^{-1}\tilde{f}(y_2, y_3) \,,
\end{align}
where $\tilde{f}(y_2, y_3)$ denotes the function $\tilde{f}(y_2,y_3)= f(1,y_2,y_3).$
\end{lemma}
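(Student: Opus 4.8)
The plan is to derive \Cref{lem:twovarprop} purely from the third operator $\tilde P_3$ in \eqref{eq:p1p2p3tilde}, using Euler's characterization of homogeneous functions. Write $\theta_i = x_i\partial_i$. Any $f\in\Sol(I_3)$ satisfies in particular $\tilde P_3\bullet f = 0$, which reads $(\theta_1+\theta_2+\theta_3)\bullet f = -f$; equivalently, $f$ is annihilated by $\theta_1+\theta_2+\theta_3+1$. This is exactly the infinitesimal form of homogeneity of degree $-1$ in $(x_1,x_2,x_3)$, as was already anticipated in the discussion following \eqref{I3system}. The remaining work is to upgrade this to the finite scaling statement and then specialize the scaling parameter.

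To do so, fix a base point $x^0=(x_1^0,x_2^0,x_3^0)$ with $x_1^0\neq 0$ in a simply connected domain $U\subset\CC^3\setminus\Sing(I_3)$, together with a branch of $f$ on $U$. Along any path $t\mapsto t x^0$ in $\CC^{*}$ starting at $t=1$ — such a path remains in $\CC^3\setminus\Sing(I_3)$, since $x_1,x_2,x_3$ and $\lambda$ are homogeneous and hence vanish at $t x^0$ only if they already vanish at $x^0$ — continue $f$ analytically and set $\phi(t)\coloneqq t\, f(t x^0)$. The chain rule together with $(\theta_i\bullet f)(y) = y_i\,(\partial_i f)(y)$ gives
\begin{align}
\phi'(t) &\,=\, f(t x^0) + t\sum_{i=1}^{3} x_i^0\,(\partial_i f)(t x^0) \\
&\,=\, f(t x^0) + \bigl((\theta_1+\theta_2+\theta_3)\bullet f\bigr)(t x^0) \,=\, f(t x^0) - f(t x^0) \,=\, 0 \,.
\end{align}
Thus $\phi$ is constant along the path, so $t\,f(t x^0) = \phi(1) = f(x^0)$, i.e.\ $f(t x^0) = t^{-1} f(x^0)$. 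Specializing to $t = 1/x_1^0$ (along a path in $\CC^{*}$ from $1$ to $1/x_1^0$) and recalling the change of variables \eqref{eq:changecord} yields $\tilde f(y_2^0,y_3^0) = f(1, x_2^0/x_1^0, x_3^0/x_1^0) = x_1^0\, f(x^0)$, that is, $f(x^0) = (x_1^0)^{-1}\tilde f(y_2^0,y_3^0)$. Since $x^0\in U$ was arbitrary, the claim follows.

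I do not expect a genuine obstacle here; the only point requiring care is the bookkeeping with multivaluedness, i.e.\ interpreting $f(t x^0)$ consistently via analytic continuation along a fixed path avoiding $0$ and $\Sing(I_3)$, and checking that the resulting $\tilde f$ is a well-defined multivalued function of $(y_2,y_3)$. An alternative, monodromy-free route is to note that $\tilde P_1,\tilde P_2,\tilde P_3$ are quasi-homogeneous, so the $\CC^{*}$-scaling $x\mapsto tx$ preserves $\Sol(I_3)$; on this finite-dimensional space the scaling acts, and the relation $\tilde P_3\bullet f = 0$ forces every solution to transform with the single weight $t^{-1}$, which is precisely the content of the lemma. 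The direct computation above is, however, the most economical.
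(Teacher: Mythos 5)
Your proof is correct and uses the same approach as the paper: both reduce the claim to the observation that $\tilde P_3\bullet f = 0$ says $f$ is annihilated by $\theta_1+\theta_2+\theta_3+1$, hence homogeneous of degree $-1$ by the converse of Euler's theorem, and then set $\tilde f(y_2,y_3) = f(1,y_2,y_3)$. The only difference is that you spell out the Euler argument via the auxiliary function $\phi(t)=t\,f(tx^0)$ and handle the path/monodromy bookkeeping explicitly, where the paper simply cites the converse of Euler's homogeneous function theorem.
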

\begin{proof}
Since $I\bullet f(x_1, x_2, x_3) = 0$, $f$ is a solution of  $\theta_1 + \theta_2 + \theta_3 + 1$, in particular. Thus, by the converse of Euler's homogeneous function theorem, $f$ is homogeneous of degree $-1$. 
This yields $f(1, y_2, y_3)= x_1 f(x_1, x_2, x_3)$. Setting $\tilde{f}(y_2, y_3) \coloneqq f(1, y_2, y_3)$ proves the claim. 
\end{proof}
Therefore, we can write each canonical series in the form
\begin{equation}
x_1^{-1} \, \hspace*{-2mm} \sum_{m, n \in \mathbb{Z}} \sum_{0 \leq i, j \leq 3} c_{mnij} \, y_2^m y_3^n(\log y_2)^i(\log y_3)^j.
\end{equation}

We will call an element of the form $(\log y_2)^i(\log y_3)^j$ a \emph{monomial}.
We now encode our $D$-ideal in the new variables.
Now $D$
denotes the Weyl algebra in the $y$-variables, i.e., $D=\CC[y_1,y_2,y_3]\langle \dy{1},\dy{2},\dy{3}\rangle$, where $\dy{i}=\partial/\partial y_i.$
Denote $\thy{i} = y_i \dy{i}.$ Then
\begin{align}\label{eq:thetaxy}
\begin{split}
\theta_1=\thy{1}-\thy{2}-\thy{3}, \quad \th_2 = \thy{2}, \quad \th_3 = \thy{3} \,.
\end{split} \end{align}

\noindent In the $y$-variables, we thus obtain the following basis of the solutions space of $\ind_w(I_3)$:
\begin{align} \label{eq:indsol1}
\begin{split}
m_1(y_1,y_2,y_3) &\,=\, y_1^{-1} \, ,\\
m_2(y_1,y_2,y_3) & \,=\,  y_1^{-1}\log(y_2) \, ,\\
m_3(y_1,y_2,y_3) &\,=\, y_1^{-1}\log(y_3) \, ,\\
m_4(y_1,y_2,y_3) &\,=\, y_1^{-1}\log(y_2)\log(y_3) \, . 
\end{split}
\end{align}
After substituting back to the $x_i$ variables, these functions are---up to reordering and a sign---exactly the functions found in~\eqref{eq:solsindw}.
Since each $m_i$ is entirely contained within a single~$L_p,$ we have $\text{Start}_{\prec_w}(m_i) = m_i$ (cf.~\cite[Lemma 2.5.10]{SST00}).

\subsection{Lifting the initial monomials to solutions of \texorpdfstring{$I_3$}{I\_3}}
We proceed using the algorithm described in \Cref{thm:liftalgo}. The space $L_p$ is 16-dimensional and defined as the span of elements of the form $x_1^{-1}y^p(\log y_2)^m (\log y_3)^n$ where $p \in \ZZ^2$ is fixed and $ 0 \leq m, n \leq 3.$  In order to execute the algorithm, we must first find the matrix of the transformation from $L_p'$ to $(L_p)^d.$ As we will see below, this matrix consists of $3$ vertically concatenated blocks, where each block is an upper triangular $16\times 16$ matrix.

\noindent First, we compute that a Gr\"{o}bner basis $G_w$ of $I_3$ with respect to~$w$ is given by
\begin{align}\label{eq:GrIw}\begin{split}
g_1 & \,=\, (\theta_2+\theta_3+1)\partial_1+(\theta_2+1)\partial_2 \, , \\g_2 & \,=\, (\theta_2+\theta_3+1)\partial_1+(\theta_3+1)\partial_3 \, , \\
g_3 & \,=\, \theta_1+\theta_2+\theta_3+1 \, .
\end{split}
\end{align}
Switching to the $y$-variables, the operators in \eqref{eq:GrIw} turn into
\begin{align}\label{eq:g1g3g3y}
\begin{split}
    y_1y_2g_1 &\,=\, \thy{2}^2 +y_2(\thy{1}\thy{2}+\thy{1}\thy{3}-\thy{2}^2-\thy{2}-2\thy{2}\thy{3} -\thy{3}^2-\thy{3}+\thy{1}) \, , \\
    y_1y_3g_2 & \,=\, \thy{3}^2 + y_3( \thy{1}\thy{2}+\thy{1}\thy{3}-\thy{2}^2-\thy2-2\thy{2}\thy{3}-\thy{3}^2-\thy{3}+\thy{1}) \, , \\
    g_3 & \,=\, \thy{1} + 1 \, .
\end{split}
\end{align}
Replacing $\thy{1}$ by $-1$ gives
\begin{align}\label{eq:g1g2fh}
\begin{split}
    y_1y_2g_1 &\,=\, \thy{2}^2 - y_2(\thy{2} + \thy{3} + 1)^2 \, , \\
    y_1y_3g_2 & \,=\, \thy{3}^2 - y_3(\thy{2} + \thy{3} + 1)^2\, .
\end{split}
\end{align}
Observe that $y_1 g_1=Q_2-Q_1$ and $y_1 g_2=-Q_1$ for $Q_1,Q_2$ from \eqref{eq:I3y} from \Cref{sec:Pfaffian}.

\begin{remark}
Above, we pass to the $D$-module $M=D/I_3$. In $M$, $\thy{1}\equiv -1$. This justifies to substitute $\thy{1}$ by $-1$. We point out that the ideals generated by the operators in \eqref{eq:g1g3g3y} and \eqref{eq:g1g2fh}, respectively, are not the same ideals; but the quotients by them are isomorphic as $D$-modules on the algebraic torus, and hence their solution spaces are isomorphic. 
\end{remark} 

We now check that the order condition on the $h_i$'s is satisfied. Indeed, for $ w = (-1,0,1),$ the variable $y_2 = x_2/x_1$ has $w$-weight $-w \cdot (-1, 1, 0) = -1.$ Similarly, $y_3$ has $w$-weight $-2.$ Thus $\text{ord}_{(-w, w)}(h_i) = \text{ord}_{(-w, w)}(y_i)<0$ for $i = 2, 3,$ as desired. In other words, $g_1$ gives rise to a recurrence relation between $L_p$ and $L_{p + (1, 0)},$ while $g_2$ gives a recurrence relation between the spaces $L_p$ and $L_{p + (0, 1)}.$ 

\medskip

We have implemented the SST algorithm from the proof of \Cref{thm:liftalgo} in {\tt Sage} for PDEs in two variables. Running it produces the canonical series solutions that are shown in~\Cref{eq:canSeriesC1} at the very end of this section.

\medskip

In summary, we achieved a basis of holomorphic solutions to $I_3$ purely from the Gr\"{o}bner data of $I_3$. We carried the computations out for a weight vector $w$ from the cone $C_1$ of the Gr\"{o}bner fan of $I_3$. For our computations, we changed to $y$-variables. To refine the weight-ordering of $w$ to a term ordering on the Weyl algebra, we used the degree reverse lexicographical ordering for $x_1>x_2>x_3> \partial_1>\partial_2 > \partial_3$.
In order to compute a basis of the solution space to $I_3$, one could equally have started from cone~$C_2$ or~$C_3$, resulting in different orderings. One would change the refining order to $x_2 > x_1, x_3 > \partial_2 > \partial_1, \partial_3$ for~$C_2$, and to $ x_3 > x_1, x_2 > \partial_3 > \partial_1, \partial_2 $ for~$C_3$. The Gr\"{o}bner bases of~$I_3$ w.r.t.\ weights from $C_2$ and $C_3$ are then obtained by a relabelling of the indices, and accordingly for the starting monomials and canonical series solutions. In \Cref{lem:twovarprop}, one would factor out $x_2^{-1}$ for $C_2$, and $x_3^{-1}$ for~$C_3$. 

\bigskip
We end this section by giving the canonical series solution of $I_3$ for $w$ from~$C_1$ obtained by our implementation of the SST algorithm. 
We display them from $w$-weight $0$ to $4$ on the next page.\pagebreak
{\small
\begin{align}\label{eq:canSeriesC1}
\begin{split}
\tilde{f}_1(y_2, y_3) &\,=\, 1 + y_2 + y_3 + y_2^{2} + 4y_2y_3 + y_3^{2} + y_2^{3} + 9y_2^{2}y_3 + y_2^{4} \,+\, 
 \cdots \, , \\
\tilde{f}_2(y_2, y_3) &\,=\, \log(y_2) + \log(y_2)y_2 + (2 + \log(y_2))y_3 + \log(y_2)y_2^{2}  + (4 + 4\log(y_2))y_2y_3  \\
& \quad  \ + (3 + \log(y_2))y_3^{2} + (\log(y_2))y_2^{3} + (6 + 9\log(y_2))y_2^{2}y_3 + \log(y_2) y_2^{4} \,+\,  \cdots \, , \\
\tilde{f}_3(y_2, y_3) &\,=\, \log(y_3) + (2 + \log(y_3))y_2 + \log(y_3)y_3 + (3 + \log(y_3))y_2^{2} \\
  & \quad \ + (4 + 4\log(y_3))y_2y_3 + 
 \log(y_3)y_3^{2} + \left(\frac{11}{3} + \log(y_3)\right)y_2^{3} \\
  & \quad \ + (15 + 9\log(y_3))y_2^{2}y_3 + \left(\frac{25}{6} + \log(y_3)\right)y_2^{4}  \,+\,  \cdots \, ,\\
\tilde{f}_4(y_2, y_3) &\,=\, \log(y_2)\log(y_3) + (-2 + 2\log(y_2) + \log(y_2)\log(y_3))y_2 \\
&\quad  \ + \left(-2 + 2\log(y_3) + \log(y_2)\log(y_3)\right)y_3
 \\ & \quad \
+  \left(-\frac{5}{2} + 3\log(y_2) + 
 \log(y_2)\log(y_3) \right)y_2^{2} \\
& \quad \ + \left(-6 + 4\log(y_2) + 4\log(y_3) + 4\log(y_2)\log(y_3)\right)y_2y_3  \\
 & \quad \ + \left(-\frac{5}{2} + 3\log(y_3) + \log(y_2)\log(y_3)\right)y_3^{2} 
 \\ & \quad \  + \left(-\frac{49}{18} + \frac{11}{3}\log(y_2) + 
 \log(y_2)\log(y_3)\right)y_2^{3}  \\
  & \quad \   + \left(-\frac{29}{2} + 15\log(y_2) + 6\log(y_3) +
 9\log(y_2)\log(y_3)\right)y_2^{2}y_3  \\
 & \quad \ + \left(-\frac{205}{72} + \frac{25}{6}\log(y_2) + \log(y_2)\log(y_3)\right)y_2^{4} \,+\, \cdots \, . \\
\end{split}\end{align}
}

\noindent We recall that these four functions are a basis of the 
space of solutions to the \mbox{$D$-ideal}~$I_3$. In order to pick out a specific solution, such as the one corresponding to the triangle Feynman integral from which we started, one needs to prescribe $\operatorname{rank}(I_3)$-many initial conditions. This can be done in a number of ways. 
For example, we may numerically evaluate the triangle Feynman integral $J_{4;1,1,1}^{\rm triangle}$ at four arbitrary kinematic points.\footnote{The triangle Feynman integral as well as the known solutions to $I_3$ in~\eqref{eq:f1234} are multivalued functions. We consider them in the so-called ``Euclidean region'', i.e., we assume that $x_1,x_2, x_3<0\, .$}
For each cone, the kinematic points $x=(x_1,x_2,x_3)$ must be chosen so that consecutive truncations of the canonical series evaluated at $x$ converge. 
In this way, the truncation of the canonical series at \mbox{$w$-weight} $k$ provides an approximation of the solutions, with an error comparable to the size of the omitted terms of $w$-weight $k+1$.
For example, in the cone $C_1$ with weight vector $w=(-1,0,1)$, we must have that $|x_1| \gg |x_2| \gg |x_3|$.
Comparing the numerical evaluations obtained using \textsc{AMFlow}~\cite{Liu:2022chg} to a linear combination of the canonical series in~\eqref{eq:canSeriesC1} allows us to determine that
\begin{align}
J_{4;1,1,1}^{\rm triangle}(x_1,x_2,x_3) \,=\, x_1^{-1} \biggl[ -\tilde{f}_4\left(\frac{x_2}{x_1},\frac{x_3}{x_1} \right) - \frac{\pi^2}{3} \tilde{f}_1\left(\frac{x_2}{x_1},\frac{x_3}{x_1} \right) \biggr]\,.
\end{align}
We validated this against further numerical evaluations, as well as against the analytic solution in~\eqref{eq:triangle}.

\section{Series solutions using Wasow's method}\label{sec:solWasow}
In this section, we discuss how to construct series solutions of the ideal $I_3$ using the toolkit developed for computing Feynman integrals in high energy physics. We provide an alternative approach for constructing the canonical series solutions discussed in the previous sections without resorting to Gr\"{o}bner basis techniques, and give a dictionary between the two methods. 
We also touch base with the problem of solving a first-order Pfaffian system of PDEs, namely a system of PDEs of the form
\begin{align} \label{eq:dF}
\frac{\partial}{\partial x_i} \bullet F(x) \,=\,  P_{i}(x) \cdot F(x) \qquad \text {for } \ i \,=\, 1,\ldots,n  \,,
\end{align}
where $F(x)$ is a vector-valued function of $x=(x_1,\ldots,x_n)$. This is a typical problem that arises in the evaluation of Feynman integrals. Often, one aims to approximate the solution $F(x)$ for small values of some variable, say of~$x_1$. 
If the Pfaffian system~\eqref{eq:dF} is in a \textit{manifestly Fuchsian form at $x_1=0$}, i.e., if the matrices $P_i(x)$ have at most simple poles at $x_1=0$, there is an algorithm~\cite{Wasow} to construct the asymptotic series of the solution $F(x)$ around $x_1=0$; see e.g.~\cite{Zoia:2021zmb} for a discussion in the context of Feynman integrals. They will be of the form
\begin{align} \label{eq:asymexp_example}
F(x) \,=\, \sum_{k\ge 0} \sum_{m=0}^{m_{\text{max}}} c_{k,m}(x_2,\ldots, x_n) \, x_1^k \log(x_1)^m \,,
\end{align} 
up to an arbitrary power of $x_1$, with $m_{\text{max}}$ a positive integer which depends on the specific Pfaffian system, and $c_{k,m}$ are Laurent series in $x_2,\ldots,x_n$. 
The series in \eqref{eq:asymexp_example} is a particular case of \eqref{eq:Fkintro}: for the  weight $w=(1,0,\ldots,0)$, the series in \eqref{eq:Fkintro} is the truncation of~\eqref{eq:asymexp_example}.

A number of issues needs to be addressed in order to apply this method to $D$-ideals. First of all, we need to construct a Pfaffian system associated with the considered \mbox{$D$-ideal}. In \cite[p.~23]{SatStu19}, it is explained how to achieve this by Gr\"obner basis computations. In \Cref{sec:Pfaffian}, we present an alternative algorithm which relies on linear algebra only, and apply it to the $D$-ideal $I_3$. This method is inspired by the problem of integration by parts (IBP) reduction for Feynman integrals~\cite{Chetyrkin:1981qh,Tkachov:1981wb}. We spell this analogy out, and provide a dictionary of the relevant concepts. As a by-product, this method allows us to determine the holonomic rank and the singular locus of the $D$-ideal without computing Gr\"{o}bner bases.
The resulting Pfaffian system is in general not in a manifestly Fuchsian form; double or higher-order poles may appear in the DEs, preventing us from applying the algorithm from~\cite{Wasow} to construct the asymptotic series of the solutions. Finding a gauge transformation which puts the Pfaffian system into Fuchsian form is a well known problem in Feynman integrals~\cite{henn2013multiloop}, and a number of strategies 
 have been developed in that context, see e.g.~\cite{Henn:2014qga,Lee:2014ioa}. In \Cref{sec:Fuchsian}, we argue that, if a $D$-ideal is regular holonomic, 
for every singular point $x^*$ it is possible to find a gauge transformation such that the associated Pfaffian system has at most simple poles at~$x=x^*$. In the case of~$I_3$, we show that we can even construct a Pfaffian system which is manifestly Fuchsian everywhere, namely which has at most simple poles at any singular point. The resulting system is particularly simple; we can solve it analytically, and by doing so, we will recover the solutions to~$I_3$ given in~\eqref{eq:f1234}.
Finally, we need to fill an important conceptual gap. The canonical series computed by Gr\"obner bases are truncated by the $w$-weight of the terms. The asymptotic series expansions, on the other hand, rely on the notion of a ``small parameter'', e.g.\ $x_1$ in \Cref{eq:asymexp_example}, and are truncated at a given power of it. To link the two notions, in \Cref{sec:Asymptotic}, we introduce an auxiliary variable~$t$, which captures the $w$-weight of the monomials. We compute the asymptotic solutions of the Pfaffian system around $t=0$, and verify that they coincide with the canonical series computed in \Cref{sec:solI3Groebner}. 

\subsection{Pfaffian systems}\label{sec:Pfaffian} 
In this section, we present a method for constructing a Pfaffian system of PDEs associated with a given $D$-ideal using linear algebra. We are going to apply that method to the conformal $D$-ideal $I_3$. A closely related method to efficiently construct Pfaffian matrices is discussed in~\cite{MacaulayFeynman}. Our presentation builds on the analogy with IBP reduction, to the benefit of the readers who are familiar with Feynman integrals.

In \Cref{sec:intPfaffian}, we have seen that the vector-valued function $F(x)$, which satisfies a Pfaffian system~\eqref{eq:dF} associated with a $D_n$-ideal $I$ with $\rank(I)=m$ is given by
\begin{align} \label{eq:Fgendef}
F(x) \,=\, \begin{bmatrix} \partial^{p(1)} \bullet f(x) \\ \vdots \\ \partial^{p(m)} \bullet f(x) \end{bmatrix} \,,
\end{align}
where $f(x)$ is a general holomorphic solution of $I$, and $p(1),\ldots,p(m)\in \NN^n$ such that $\{\partial^{p(1)}, \ldots, \partial^{p(m)}\}$ are the standard monomials of the left  $R$-ideal $RI$ for a given monomial ordering. We may assume $\partial^{p(1)} = 1$, so that the first entry of $F(x)$ is a holomorphic solution of~$I$. The standard monomials can be determined by Gr\"obner bases in $R$, and the matrices $P_i(x)$ in the Pfaffian system~\eqref{eq:dF} can be obtained through 
reduction in $R$,
\begin{align}
\partial_i  \partial^{p(j)} \,=\, \sum_{k=1}^m \left(P_i(x)\right)_{jk} \partial^{p(k)} \ \text{mod} \, RI \,.
\end{align}

Our goal is to construct a Pfaffian system of $I$ using linear algebra only. Let $Q_1,\ldots,Q_N$ be generators of $I$. First of all, the $R$-monomials $\partial^{p(i)}$ defining $F(x)$ as in~\eqref{eq:Fgendef} need not to be standard monomials for $F(x)$ to satisfy a first-order Pfaffian system of PDEs. It suffices that they are a $\CC(x)$-basis of $R/RI$. One way to find such a basis is to write down the relations among the monomials $\partial^a$ in $R/RI$, and solve them. There are infinitely many such relations in $R/RI$, obtained by multiplying the generators $Q_i$ of $I$ by any $\partial^a$, 
\begin{align} \label{eq:daQi}
\left\{ \partial^a Q_i  \equiv 0 \ \operatorname{mod}\, RI  \, | \,  i=1,\ldots,N, \ a \in \mathbb{N}_0^n \right\} \,.
\end{align}
We view this as a linear system of equations in the ``variables'' $\partial^a$. In \Cref{sec:intPfaffian}, we have seen that, if $I$ has finite holonomic rank $m$, the linear system~\eqref{eq:daQi} has $m$ independent variables. In other words, any $\partial^a$ can be expressed as a linear combination of $m$ $\partial^a$'s in $R/RI$ by solving the linear system~\eqref{eq:daQi}. The issue is that there are infinitely many equations. We cannot solve them all, nor do we need to. We instead adopt an approach inspired by Laporta's algorithm~\cite{Laporta:2000dsw} for solving integration-by-parts relations in the context of Feynman integrals.

We introduce an ordering $>$ of the unknowns $\partial^a$ graded by the total degree of the derivatives. The ordering among unknowns with the same degree is instead arbitrary. It will lead to a different basis of $R/RI$.
For instance, we choose the graded lexicographic order. Explicitly, for $n=2$, we have
\begin{align}
\cdots \,>\, \pt_1^2 \,>\, \pt_1 \pt_2 \,>\, \pt_2^2 \,>\, \pt_1 \,>\, \pt_2 \,>\, 1\, .
\end{align}
Next, we multiply
the generators $Q_i\in D$ by $\partial$'s from the left, up to a certain degree $d_{\text{max}}$.\footnote{If the generators $Q_i$ have very different degrees in the various variables, it may be more efficient to seed up to a different degree for each variable.} This yields the equations 
\begin{align}
 \partial^a Q_i  \equiv 0 \quad \text{mod} \, RI\,,  \quad i = 1,\ldots, N , 
\end{align} 
where $a\in \NN^n$ runs over all vectors of natural numbers with $ a_1 + \cdots + a_n \le d_{\text{max}}$. We denote the resulting finite system of equations by $S_{d_{\text{max}}}$.
We dub this operation \textit{seeding}, in analogy with the equivalent step in constructing IBP relations for Feynman integrals.

Since we have truncated the system of equations, some unknowns cannot be solved for with the equations in $S_{d_{\text{max}}}$. The key idea is to solve them only for a subset of the unknowns, in particular those with total degree up to some value $d_{\text{needed}}$ below $d_{\text{max}}$ (e.g.\ $d_{\text{needed}}=d_{\text{max}}-1$), and to eliminate the other unknowns through Gaussian elimination. We denote by $\mathbb{D}$ the vector containing all unknowns $\partial^a$ appearing in $S_{d_{\text{max}}}$ sorted according to the ordering defined above, and by $\mathbb{D}_{\text{needed}}$ the needed unknowns,
\begin{align}
\mathbb{D}_{\text{needed}} \,=\, \left\{ \partial^a \in \mathbb{D} \,|\, a_1+\cdots+a_n \le d_{\text{needed}} \right\} \,.
\end{align} 
By construction, $\mathbb{D} $ has the block form
\begin{align}
\mathbb{D} \,=\, \begin{bmatrix} \mathbb{D}_{\text{extra}}  \\ \hline \mathbb{D}_{\text{needed}} \\ \end{bmatrix} \,,
\end{align}
with $ \mathbb{D}_{\text{extra}}  = \mathbb{D}  \backslash \mathbb{D}_{\text{needed}} $.
We represent the equations in $S_{d_{\text{max}}}$ in matrix form as
\begin{align}
M(x) \cdot \mathbb{D} \,\equiv\, 0 \quad \text{mod} \, RI \,,
\end{align}
where $M(x)$ is a $[S_{d_{\text{max}}}] \times [\mathbb{D}]$ matrix whose entries depend on $x$. 
We row-reduce $M(x)$ and drop the null rows. The resulting system of equations has the form
\begin{align}
\left[\begin{array}{c|c} A(x)  & B(x) \\ \hline 0  & C(x) \end{array} \right] \cdot \begin{bmatrix}  \mathbb{D}_{\text{extra}}  \\ \hline \mathbb{D}_{\text{needed}} \\ \end{bmatrix} \,\equiv\, 0 \quad \text{mod} \, RI \,.
\end{align}
Finally, we perform back substitution, but only for the needed unknowns. In other words, we solve the subset of equations
\begin{align}
C(x) \cdot  \mathbb{D}_{\text{needed}} \, \equiv \, 0 \quad \text{mod} \, RI \,.
\end{align}
As a result, we obtain the expressions of the needed unknowns in terms of a subset of independent monomials $\partial^a$ in $R/RI$. Let us make a few comments about this algorithm.
\begin{enumerate}[(1)]
\item The value of $d_{\text{max}}$ must be higher than the maximal degree of the standard monomials. In practice, we iterate the algorithm for various increasing values of $d_{\text{max}}$ until we see that the resulting independent monomials stay the same. These are a basis of $R/RI$.
\item If the $D$-ideal $I$ has infinite holonomic rank, the algorithm will not terminate. In practice, as we increase $d_{\text{max}}$, we will obtain more and more independent monomials. Thus, this procedure is a linear algebra approach for computing the holonomic rank of a $D$-ideal.
\item Let $\{\pt^{p(1)},\ldots ,\pt^{p(m)} \}$ be a $\CC(x)$-basis of $R/RI$. If the needed unknowns include $\pt_i \pt^{p(j)}$ for any $i=1,\ldots,n$ and $j=1,\ldots,m$, the matrices $P_i(x)$ of the Pfaffian system~\eqref{eq:dF} can be read off from the solution of the system of equations discussed above.
\item This algorithm does not prove that the identified monomials $\partial^a$ are a basis of $R/RI$. However, the resulting Pfaffian system of DEs must satisfy the integrability conditions~\eqref{eq:intmatrices}. If the latter are not satisfied, either a higher value of $d_{\text{max}}$ must be used, or the ideal has infinite holonomic rank. If they are satisfied, the solution space has a $\mathbb{C}$-dimension equal to the length of $F(x)$ (see \Cref{sec:intPfaffian}). Since the first component of $F(x)$ is by construction a general solution of the original ideal $I$, we can conclude that the algorithm has terminated, and that a basis of $R/RI$ has been identified.
\item The common denominator of the entries of the matrices $P_i$ in the Pfaffian system~\eqref{eq:dF} defines a hypersurface which contains the singular locus of $I$.
\item There is a strong analogy with IBP reduction in Feynman integrals, which we summarized in \Cref{tab:analogy}.
\begin{table}[t!]
\begin{center}
\begin{tabular}{|c|c|}
\hline
Construction of a Pfaffian system & IBP reduction with Laporta's algorithm \\
\hline
$\partial^a $ & Feynman integrals \\
$a$ in $\partial^a$  & propagator powers \\
$\partial^a Q_i = 0$ in $ R/RI$ & IBP identities \\
$\CC(x)$-basis of $R/RI$ & a set of master integrals \\
\hline
\end{tabular}
\end{center}
\caption{Analogies between the construction of a first-order Pfaffian system of DEs associated with a given $D$-ideal and IBP reduction of Feynman integrals.}
\label{tab:analogy}
\end{table}
\end{enumerate}

\smallskip

We now apply this algorithm to the conformal ideal $I_3$. It is convenient to use the variables $y$ defined in \eqref{eq:changecord}, and to eliminate $y_1$ from the problem as follows. \Cref{lem:twovarprop} guarantees that every solution of $I_3$ has the form $\tilde{f}(y_2,y_3)/y_1$. Any function of such form is annihilated by the generator $\tilde{P}_3$ in \eqref{eq:p1p2p3tilde}. The action of the other two generators can be expressed as follows:
\begin{align}
x_1^2 \, \tilde{P}_i \bullet \left[\frac{1}{x_1} 
\tilde{f}\left(\frac{x_2}{x_1},\frac{x_3}{x_1}\right) \right] \,=\, Q_i \bullet \tilde{f}(y_2, y_3) \quad \text{for }\ i\,=\,1,2\, .
\end{align}
 Here, $Q_1$ and $Q_2$ are differential operators which depend on $y_2$ and $y_3$ only:
\begin{align}\begin{split}\label{eq:I3y}
Q_1 & \,=\, y_2^2 \pt_{y_2}^2 + 2 y_2 y_3 \pt_{y_2} \pt_{y_3} + (y_3-1) y_3 \pt_{y_3}^2  + 3 y_2 \pt_{y_2} + (3 y_3-1) \pt_{y_3} + 1 \,, \\
Q_2 &\,=\, y_2 \pt_{y_2}^2 - y_3 \pt_{y_3}^2  + \pt_{y_2} - \pt_{y_3}\,.
\end{split}\end{align}
We denote by $I_3^y$ the $D$-ideal generated by $Q_1$ and $Q_2$. It is holonomic, it has holonomic rank $4$, and its singular locus is
\begin{align} \label{eq:singI3y}
\Sing \left( I_3^y \right) \,=\, V(y_2 y_3 \tilde{\lambda}) \,,
\end{align}
where $\tilde{\lambda}$ is obtained from the polynomial $\lambda$~\eqref{eq:lambda} as
\begin{align}
\tilde{\lambda} \,=\, \lambda|_{x_1=1, \, x_2=y_2, \, x_3=y_3} \, .
\end{align}

The general solution of $I_3$ is given by the general solution of $I_3^y$ divided by $y_1$. Hereinafter, we will thus focus on $I_3^y$. Since the latter depends on $y_2$ and $y_3$ only, we use the shorthand notations $y = (y_2,y_3)$, $y^a = y_2^{a_2} y_3^{a_3}$, and $\partial_y^a = \dy{2}^{a_2} \dy{3}^{a_3}$.
We compute a basis of $R/RI_3^y$ by running the algorithm described above. We use $d_{\text{needed}}=d_{\text{max}}-1$. For $d_{\text{max}}=1$, we find only one basis element, namely $1$. Starting from $d_{\text{max}}=2$ and onwards, we identify four basis elements. We define 
\begin{align} \label{eq:Fdef}
\tilde{F}(y) \, \coloneqq \, \begin{bmatrix} \tilde{f}(y) \\ \pt_{y_2} \bullet \tilde{f}(y) \\ \pt_{y_3} \bullet \tilde{f}(y) \\ \pt_{y_3}^2 \bullet \tilde{f}(y) \end{bmatrix}\,,
\end{align}
where $\tilde{f}(y)$ is a holomorphic solution of $I_3^y$. It satisfies a system of first-order Pfaffian DEs
\begin{align}
\pt_{y_i} \bullet \tilde{F}(y) \,=\, \tilde{P}_{i}(y) \cdot \tilde{F}(y) \,,
\end{align}
and the $4 \times 4$ matrices $\tilde{P}_{i}$ satisfy the integrability conditions. For example, $\tilde{P}_2$ is the~matrix 
\begin{align} \label{eq:A1}
\tilde{P}_2 \,=\, \begin{bmatrix}
0   & 0 &  \frac{-1}{2  y_2  y_3} &  \frac{\tilde{\lambda}-2 y_3 (1 + y_2 - y_3)}{2  y_2  y_3^2 \, \tilde{\lambda}}\\
1   & -\frac{1}{y_2} &  -\frac{1}{y_3}  & \frac{ \tilde{\lambda} - y_3 (1 + 3 y_2 - y_3) }{ y_3^2 \, \tilde{\lambda}} \\  
0 & \frac{1}{y_2} &  \frac{1 -  y_2 - 3  y_3}{2  y_2  y_3} &  \frac{ (y_2 + y_3 - 1) \left(\tilde{\lambda} - 2 y_3 (2 + 3 y_2 - 4 y_3) \right) }{2  y_2  y_3^2 \, \tilde{\lambda}}\\
 0 & \frac{y_3}{y_2} & \frac{1-y_2-y_3}{2 y_2} & \frac{3}{2 y_2} + \frac{y_2-1}{2 y_2 y_3} + {\scriptstyle 5} \frac{1 - y_2 + y_3}{\tilde{\lambda}}
\end{bmatrix}^{\top} \,.
\end{align} 
From \Cref{eq:A1}, we see that the common denominator is $y_2 y_3^2 \tilde{\lambda}$, which is in agreement with the singular locus computed via Gr\"obner bases. Furthermore, some entries have double poles at $y_3 = 0$. The system is thus not in Fuchsian form. 
In the next section, we will perform a gauge transformation to obtain a Pfaffian system which is in a manifestly Fuchsian form.

\subsection{Writing the system in Fuchsian form}\label{sec:Fuchsian}
If a $D$-ideal $I$ is regular holonomic, all solutions have moderate growth when approaching the singular locus (cf.~\Cref{sec:Weylalgebra}). This implies that the solutions cannot have essential singularities. This has implications for the form of the associated first-order Pfaffian system of DEs. 
A first-order ODE having poles of order higher than~$1$ implies that its solution has an essential singularity. The moderate growth condition on the solutions to regular holonomic $D$-ideals thus implies that the associated Pfaffian system has at most simple poles. 
The Pfaffian systems may however have ``spurious'' double poles. For example, the system
\begin{align} \label{eq:spurious_example}
\partial_x \bullet F(x) \,=\, \begin{bmatrix} -\frac{1}{x} & \frac{1}{x^2} \\ 0 & 0 \\ \end{bmatrix} \cdot F(x)
\end{align}
exhibits a double pole at $x=0$, yet the solution
\begin{align}
 F(x) \,=\, \begin{bmatrix}  c_1 \frac{\log(x)}{x} + \frac{c_2}{x} \\  c_1 \end{bmatrix}
\end{align}
has no essential singularity at $x=0$. Such spurious poles can be removed by a suitable gauge transformation,
\begin{align}
F(x) \,=\, T(x) \cdot G(x) \,.
\end{align}
The goal is to construct the transformation matrix $T(x)$ such that the new vector-valued function $G(x)$ satisfies a first-order system of DEs which is manifestly Fuchsian at $x=0$. For this simple rank-two example this can be achieved by 
\begin{align}
T(x) \,=\, \begin{bmatrix} \frac{1}{x} & 0 \\ 0 & 1 \end{bmatrix} \,,
\end{align}
which leads to
\begin{align}
\partial_x \bullet G(x) \,=\, \begin{bmatrix} 0 & \frac{1}{x} \\ 0 & 0 \\ \end{bmatrix} \cdot G(x) \,.
\end{align}

The problem of finding a gauge transformation which puts a Pfaffian system into a manifestly Fuchsian form is central in the modern methodology for computing Feynman integrals~\cite{henn2013multiloop}. 
For our purposes, it suffices to put the Pfaffian system in a form which is manifestly Fuchsian at the singular point around which we compute the asymptotic expansion. 
If the entries of the required transformation matrix are in the field of rational functions, this problem is solved (see~\cite{Henn:2014qga,Lee:2014ioa} and the references therein).
For the Pfaffian system associated with the $D$-ideal $I_3^y$ constructed in the previous subsection, we perform the gauge transformation 
\begin{align} \label{eq:G2F}
\tilde{F}(y) \,=\, \tilde{T}(y) \cdot \tilde{G}(y)
\end{align}
with the following transformation matrix:\footnote{This transformation matrix contains the square root $\sqrt{\tilde{\lambda}}$, which can be rationalized by changing variables from $y$ to $(z,\bar{z})$, defined through $y_2 = z \bar{z}$, $y_3 = (1-z)(1-\bar{z})$, so that $\tilde{\lambda} = (z-\bar{z})^2$.}
\begin{align} \label{eq:T}
\tilde{T}(y) \,=\, \begin{bmatrix}
\frac{1}{\sqrt{\tilde{\lambda}}} & 0 & 0 & 0 \\
\frac{1-y_2+y_3}{\tilde{\lambda}^{\frac{3}{2}}} & - \frac{1}{\tilde{\lambda}}  & \frac{y_2+y_3-1}{2 y_1 \tilde{\lambda}} & 0 \\
\frac{1+y_2-y_3}{\tilde{\lambda}^{\frac{3}{2}}} & \frac{y_2+y_3-1}{2 y_3 \tilde{\lambda}} & -\frac{1}{\tilde{\lambda}} & 0 \\
\frac{2(\tilde{\lambda} + 6 y_2)}{\tilde{\lambda}^{\frac{5}{2}}} & 
  \frac{6 y_2 y_3 (1 + y_3)  -6 y_3 (1-y_3)^2  + \tilde{\lambda} (1 - y_2 + 3 y_3) }{2 y_3^2 \tilde{\lambda}^2} & 
  \frac{3 (y_3-y_2-1)}{\tilde{\lambda}^2} & 
  \frac{-1}{y_3 \tilde{\lambda}} \\
\end{bmatrix} \,.
\end{align}
As a result, $\tilde{G}(y)$ satisfies a linear system of first-order DEs which is in 
Fuchsian form. It can be elegantly expressed as
\begin{align} \label{eq:fuchsian}
\d \tilde{G}(y) \,=\, \d \tilde{B}(y) \cdot \tilde{G}(y) \,,
\end{align} 
where
\begin{align}\label{eq:By}
\tilde{B}(y) \,=\, \begin{bmatrix}
0 & \frac{1}{2} \log\bigl(\tilde{l}_4\bigr) & -\frac{1}{2} \log\bigl(\tilde{l}_3\bigr)-\frac{1}{2}\log\bigl(\tilde{l}_4\bigr) & 0 \\
0 & 0 & 0 &  \log\bigl(\tilde{l}_1\bigr) \\
0 & 0 & 0 &  \log\bigl(\tilde{l}_2\bigr) \\
0 & 0 & 0 & 0 \\ 
\end{bmatrix} \,,
\end{align} 
and
\begin{align} \label{eq:alphabet}
\tilde{l}_1 \,=\, y_2 \,, \quad
\tilde{l}_2 \,=\, y_3 \,, \quad
\tilde{l}_3 \,=\, \frac{1-y_2-y_3 - \sqrt{\tilde{\lambda}}}{1-y_2-y_3+\sqrt{\tilde{\lambda}}} \,, \quad
\tilde{l}_4 \,=\, \frac{y_2-y_3-1-\sqrt{\tilde{\lambda}}}{y_2-y_3-1+\sqrt{\tilde{\lambda}}} \,. 
\end{align}
In the Feynman integral literature, the arguments $\tilde{l}_i$ of the logarithms in the Fuchsian Pfaffian system~\eqref{eq:fuchsian} are called \textit{symbol letters}, and their ensemble $\{\tilde{l}_i\}$ \textit{symbol alphabet}~\cite{Goncharov:2010jf}. They encode the possible singular points of the solution. Indeed, the letters in \Cref{eq:alphabet} vanish exactly on the singular locus of $I_3^y$ given in \eqref{eq:singI3y}. The system~\eqref{eq:fuchsian} is therefore in a manifestly Fuchsian form, i.e., it has at most simple poles at every singular point.
In~\eqref{eq:alphabet}, the readers with some experience in Feynman integrals may recognize the alphabet of the three-mass triangle Feynman integrals~\cite{Chavez:2012kn}, minus an additional letter equal to~$\sqrt{\tilde{\lambda}}$. The relation between this alphabet and conformal symmetry was already observed in~\cite{CHZ23}, and is a promising hint that this alphabet may be valid at any loop order.
The matrix in~\eqref{eq:By} exhibits the block triangular structure observed in~\cite{Caron-Huot:2014lda} for the differential equations satisfied by finite Feynman integrals. We expect that differential equations corresponding to~\eqref{eq:fuchsian} can be obtained using the method of~\cite{Caron-Huot:2014lda}.

Before we move on to constructing the asymptotic series solutions in the next section, we point out that the form of the first-order Pfaffian system in \eqref{eq:fuchsian} is particularly well-behaved; its solutions can be written down explicitly in terms of logarithms and dilogarithms. The matrix $\tilde{B}(y)$~\eqref{eq:By} is upper block-triangular, which allows us to solve the system~\eqref{eq:fuchsian} iteratively.
We obtain
\begin{align}
\frac{1}{\sqrt{\tilde{\lambda}}} \cdot \tilde{G}(y_2,y_3) \, = \,  \begin{bmatrix} 
 k_1 f_1(x) + k_2 f_2(x) + k_3 f_3(x) + k_4 f_4(x) \\ 
 k_1 \frac{\log\left(\frac{x_2}{x_1}\right)}{\sqrt{\lambda}} + 2 (k_3-k_2) f_4(x) \\ 
 k_1 \frac{\log\left(\frac{x_3}{x_1}\right)}{\sqrt{\lambda}} - 2 k_2 f_4(x) \\ 
 k_1 f_4(x) \\
\end{bmatrix}_{x_1 = 1,\,  x_2 = y_2 ,\, x_3 = y_3} \,,
\end{align}
where $f_1(x), \ldots, f_4(x)$ are the solutions to $I_3$ from~\eqref{eq:f1234}, and $k_1,\ldots, k_4\in \CC$ are arbitrary constants.
Recalling the relation~\eqref{eq:G2F} between $\tilde{F}(y)$ and $\tilde{G}(y)$ with $\tilde{T}(y)$ as in~\eqref{eq:T}, and that the first component of $\tilde{F}(y)$ divided by $y_1$ is a general solution to $I_3$, we have proven that $f_1(x), \ldots, f_4(x)$ are indeed a basis of the space of solutions to~$I_3$. 

\subsection{Asymptotic series solutions}\label{sec:Asymptotic}
\subsubsection{Capturing the weight via an auxiliary variable}
We now compute the asymptotic solutions to the first-order Pfaffian system~\eqref{eq:fuchsian}. Since the latter is in Fuchsian form, we can use Wasow's algorithm~\cite{Wasow} to expand around any singular point without resorting to Gr\"obner bases.
In order to build a bridge to the canonical series solutions,
which are defined with respect to a weight vector~$w\in \RR^n$, we introduce an auxiliary variable~$t$ as follows. Let $f(x)$ be the general solution to the ideal under consideration. We define a new function~$f_w$, which depends both on the original variables $x$ and on $t$ as
\begin{align}\label{eq:auxfun}
f_w\left(t, x\right) \, \coloneqq \, f\left(t^w x\right) \,,
\end{align}
where $t^w x$ is a short-hand notation for $(t^{w_1} x_1,\ldots, t^{w_n} x_n)$.

\begin{remark}
To motivate the construction in \eqref{eq:auxfun}, consider the toy example where the function is a monomial, say $f(x) = x^a$ for some $a\in \NN^n$. The exponent of $t$ in the auxiliary function $f_w(t,x) = t^{w\cdot a} x^a$ gives the $w$-weight of the monomial, namely $w\cdot a$. In this sense, the exponent of the auxiliary variable $t$ captures the notion of weight.
\end{remark}
\noindent We then compute the asymptotic expansion of $f_w(t,x)$ around~$t=0$,
\begin{align} \label{eq:ftildet}
{f}_w(t,x) \,=\, \sum_{k \ge 0} \sum_{m=0}^{m_{\text{max}}} c_{k,m}(x) \, t^k \log(t)^m \,,
\end{align}
where $m_{\text{max}}$ is a
natural number which depends on the specific ideal. By construction, the monomials in $c_{k,m}(x)$ have $w$-weight~$k$. By definition, we have that $f_w|_{t=1} \equiv f.$ Hence, we expect that the asymptotic expansion~\eqref{eq:ftildet} around $t=0$, truncated at $t^k$ and evaluated at $t=1$ equals the canonical series expansion truncated at $w$-weight~$k$.

\begin{example}
Consider the simplest of the  solutions to the ideal $I_3$ from \Cref{eq:f1234},
\begin{align}
f_4(x) \, = \, \frac{1}{\sqrt{\lambda}} \,,
\end{align}
where $\lambda$ is the homogeneous polynomial of degree $2$ defined in \Cref{eq:lambda}.
For the weight vector $w=(-1,0,1)$ in cone $C_1$, we introduce the auxiliary variable $t$ via
\begin{align}
f_{4,w}(t,x) \, = \, f_4\left(\frac{x_1}{t}, \, x_2, \, t \, x_3 \right) \,.
\end{align}
The asymptotic expansion around $t=0$ is given by a Taylor expansion in $t$,
\begin{align} \label{eq:f4exp}
{f}_{4,w}(t,x)  \, = \, \frac{t}{x_1} + t^2 \frac{x_2}{x_1^2} + t^3 \left( \frac{x_2^2}{x_1^3} + \frac{x_3}{x_1^2}\right) + t^4 \left( \frac{x_2^3}{x_1^4} + 4 \frac{x_2 x_3}{x_1^3} \right) + \mathcal{O}\left(t^5 \right) \,.
\end{align}
We see that the monomials $x^a$ appearing as coefficients of $t^k$ have $w$-weight $w \cdot a = k$. Truncating the expansion~\eqref{eq:f4exp} at order $k$ and setting $t=1$ gives the canonical series solution truncated at $w$-weight $k$, e.g.,\
\begin{align} 
f_4(x)  \, = \, \frac{1}{x_1} + \frac{x_2}{x_1^2} + \frac{x_2^2}{x_1^3} + \frac{x_3}{x_1^2}+  \frac{x_2^3}{x_1^4} + \frac{4 x_2 x_3}{x_1^3}  + \cdots \,,
\end{align}
where the dots denote terms of $w$-weight $5$ or higher. Indeed, this coincides with the series $\tilde{f}_1(y_2,y_3)$ in~\eqref{eq:canSeriesC1}, upon substituting $y$ in terms of $x$ in the latter, and dividing it by $x_1$ to obtain a solution of $I_3$ through \Cref{lem:twovarprop}.
\end{example}

The derivatives of $f_w\left(t, x\right)$ with respect to $x$ and $t$ can be obtained from the derivatives of $f(x)$ through the chain rule, so that we can straightforwardly construct a system of PDEs to which $f_w(t,x)$ is a solution, starting from that for~$f(x)$. In the next section, we will use this approach to compute the canonical series solutions to the Fuchsian system derived in \Cref{sec:Fuchsian}, and verify that they reproduce those computed in \Cref{sec:solI3Groebner}.

\subsubsection{Computations for the cone \texorpdfstring{$C_1$}{C\_1}}\label{sec:WasowC1}
We now turn to the asymptotic solution of the Fuchsian system~\eqref{eq:fuchsian}. We choose the weight vector $w=(-1,0,1)$ from the cone $C_1$ of the small Gr\"obner fan. The corresponding weight for the $y$-variables is $w_y = (-1,1,2)$. We thus introduce the auxiliary variable $t$ via
\begin{align} \label{eq:PDEsty}
\tilde{G}_w(t,y_2,y_3) \, = \, \tilde{G}\left(t \, y_2, t^2 y_3 \right) ,
\end{align}
where $\tilde{G}(y_2,y_3)$ is a holomorphic vector-valued function which satisfies the Fuchsian Pfaffian system~\eqref{eq:fuchsian}. The auxiliary function $\tilde{G}_w(t,y)$ satisfies a Pfaffian system in Fuchsian form; for $i=y_2,y_3,t$, 
\begin{align} 
\partial_{i} \bullet \tilde{G}_w(t,y) \, = \, \tilde{B}_{i}(t,y) \cdot \tilde{G}_w(t,y) \, .
\end{align} The matrices $\tilde{B}_{i}(t,y)$ are obtained from the derivatives of $\tilde{G}(y)$ through the chain rule,
\begin{align}
\tilde{B}_i(t,y) \, = \, \pt_i \bullet \tilde{B}\left(t \, y_2, t^{2} y_3 \right) \,.
\end{align}
In order to compute the asymptotic expansion of the solutions around $t=0$, we need to study the behavior of the matrices $\tilde{B}_{i}(t,y)$ around $t=0$. The Laurent expansion of $\tilde{B}_t(t,y)$~is
\begin{align} \label{eq:tildeBtexp}
\tilde{B}_t(t,y) \ =\ \frac{\tilde{B}_t^{\text{res}}}{t} \,+\, \sum_{k\ge 0} t^k \, \tilde{B}_t^{(k)}(y)  \,.
\end{align}
Since the system is in Fuchsian form, there can be no pole of higher order.
The residue at \mbox{$t=0$}, i.e., the constant matrix $\tilde{B}_t^{\text{res}}$, has the 
eigenvalue~$0$ only. The matrices $\tilde{B}_{y_2}(t,y)$ and $\tilde{B}_{y_3}(t,y)$ are instead non-singular at~$t=0$.

The first step of the algorithm in~\cite{Wasow} is to perform a gauge transformation  $\tilde{G}_w \rightarrow \tilde{G}_w'$,
\begin{align}
\tilde{G}_w(t,y) \,=\, \tilde{U}(t,y) \cdot \tilde{G}_w'(t,y)  \,.
\end{align}
In the new basis, our vector-valued function satisfies the system
\begin{align}
\partial_i \bullet \tilde{G}_w'(t,y) \,=\, \tilde{B}'_i(t,y) \cdot \tilde{G}_w'(t,y) \,,
\end{align}
for $i=y_2,y_3,t$, with 
\begin{align} \label{eq:tildeBprime}
\tilde{B}'_i(t,y) \, = \, \tilde{U}^{-1}(t,y) \cdot \tilde{B}_i(t,y) \cdot \tilde{U}(t,y) - \tilde{U}^{-1}(t,y) \cdot \partial_i\bullet \tilde{U}(t,y)  \,.
\end{align}
The key idea is to choose the gauge transformation such that the DEs for $\tilde{G}_w'(t,y)$ are simpler than those for $\tilde{G}_w(t,y)$ in view of the asymptotic solution around $t=0$. In particular, we wish to simplify $\tilde{B}'_t(t,y)$ so that it has a simple pole only, i.e., is of the form
\begin{align} \label{eq:desiredform}
\tilde{B}'_t(t,y) \, = \, \frac{\tilde{B}_t^{\text{res}}}{t} \, .
\end{align}
We can construct the transformation matrix $\tilde{U}(t,y)$ which achieves~\eqref{eq:desiredform} as a series in $t$,\footnote{The construction assumes that the matrix $\tilde{B}_t^{\text{res}}$ has no eigenvalues that differ from each other by positive integers \cite[Theorem 5.1]{Wasow}. If this condition is not satisfied, a less specific result holds \cite[Theorem 5.6]{Wasow}.}
\begin{align} \label{eq:Uexp}
\tilde{U}(t,y) \, = \, \sum_{k\ge 0} t^k \, \tilde{U}_k(y) \,.
\end{align}
We plug this expansion into \Cref{eq:tildeBprime} and impose that the resulting $\tilde{B}'_t(t,y) $ satisfies~\eqref{eq:desiredform}. The first term in the expansion, $\tilde{U}_0(y)$, must commute with $\tilde{B}_t^{\text{res}}$.
The simplest way to achieve this is to choose $\tilde{U}_0(y) = \mathbb{I}_4$ to be the $4 \times 4$ identity matrix.
The terms of expansion~\eqref{eq:Uexp} of higher order are determined recursively in terms of the lower ones through the solution of a linear system of equations,
\begin{align}
\tilde{U}_k(y) \cdot \tilde{B}_t^{\text{res}} - \tilde{B}_t^{\text{res}} \cdot \tilde{U}_k(y) + k \, \tilde{U}_k(y) \  = \ \sum_{r=0}^{k-1} \tilde{B}_t^{(k-r-1)}(y) \cdot \tilde{U}_r(y)  \,.
\end{align}
As an example, we spell out the first few terms of the resulting transformation matrix:
\begin{align}
\tilde{U}(t,y) \, = \, \mathbb{I}_4 + t \begin{bmatrix}
0 & - y_2 & 0 & y_2 \\
0 & 0 & 0 & 0 \\
0 & 0 & 0 & 0 \\
0 & 0 & 0 & 0 \\
\end{bmatrix} + t^2 \begin{bmatrix}
0 & -\frac{1}{2} y_2^2 & - y_3 & \frac{1}{4} y_2^2 + y_3 \\
0 & 0 & 0 & 0 \\
0 & 0 & 0 & 0 \\
0 & 0 & 0 & 0 \\
\end{bmatrix} + \mathcal{O}\left(t^3\right) \,.
\end{align}
As expected, each order in $t$ involves only monomials whose $w$-weight matches the power of~$t$.
As a result, $\tilde{G}_w'(t,y)$ satisfies the following simplified system of PDEs,
\begin{align} \label{eq:GprimePDEs}
\begin{cases}
\partial_{y_2} \bullet \tilde{G}_w'(t,y) &\, = \ \tilde{B}'_{y_2}\left(t,y\right) \cdot \tilde{G}_w'(t,y) \,, \\
\partial_{y_3} \bullet \tilde{G}_w'(t,y) &\, = \ \tilde{B}'_{y_3}\left(t,y\right) \cdot \tilde{G}_w'(t,y) \,, \\
\partial_t \bullet \tilde{G}_w'(t,y) &\, = \ \dfrac{\tilde{B}_t^{\text{res}}}{t} \cdot \tilde{G}_w'(t,y) \,, \\
\end{cases}
\end{align}
with $\tilde{B}'_{y_2}\left(t,y\right)$ and $\tilde{B}'_{y_3}\left(t,y\right)$ determined by~\eqref{eq:tildeBprime}, and non-singular at $t=0$.
We can solve the simplified system~\eqref{eq:GprimePDEs} using the path-ordered exponential formalism (see e.g.~\cite{Brown:2013qva} and the references therein). We integrate this system along a path in the $(t,y)$-space of the form
\begin{align}
\left( 0 \,, y^{(0)} \right) \longrightarrow \left( 0 \,, y \right) \longrightarrow \left( t, y \right) ,
\end{align} 
for some arbitrary values $y^{(0)}$ of $y$ which do not belong to the singular locus of the ideal. In other words, we pick $(t=0, y=y^{(0)})$ as boundary point,\footnote{Since $t=0$ is a regular singular point of the DEs, this is a \textit{tangential} boundary point. We refer the interested readers to \cite[Section~2.5]{Brown:2013qva}.} restore the dependence on $y$ by integrating along the first piece of the path, and then on $t$ by integrating along the second.
The result is
\begin{align} \label{eq:GprimeSol}
\tilde{G}_w'(t,y) \, = \, \text{e}^{\tilde{B}_t^{\text{res}} \log(t)} \cdot \tilde{h}(y) \,,
\end{align} 
where 
\begin{align}
\text{e}^{\tilde{B}_t^{\text{res}} \log(t)} \, = \, \begin{bmatrix}
1 & -\log(t) & -\frac{1}{2} \log(t) & -\log^2(t) \\
0 & 1 & 0 & \log(t) \\
0 & 0 & 1 & 2 \log(t) \\
0 & 0 & 0 & 1\\
\end{bmatrix} \,,
\end{align}
and $\tilde{h}(y)$ is the vector-valued holomorphic function which solves the system
\begin{align}
\begin{cases} 
\partial_{y_2} \bullet \tilde{h}(y) \, = \, \tilde{B}'_{y_2} ( t=0,y) \cdot \tilde{h}(y) \,, \\
\partial_{y_3} \bullet \tilde{h}(y) \, = \, \tilde{B}'_{y_3} ( t=0,y) \cdot \tilde{h}(y) \,. \\
\end{cases}
\end{align}
Explicitly, we have
\begin{align} \label{eq:h}
\tilde{h}(y) \, = \, \begin{bmatrix}
 1 &  -\frac{1}{2} \log(y_3) & -\frac{1}{2}\log(y_2) & -\frac{1}{2} \log(y_2) \log(y_3) \\
 0 & 1 & 0 & \log(y_2) \\
 0 & 0 & 1 & \log(y_3) \\
 0 & 0 & 0 & 1 \\
\end{bmatrix}  \cdot b  \,,
\end{align}
with $b = (b_1,b_2,b_3,b_4)^{\top}\in \mathbb{C}^4$.

We now have all the ingredients to compute the asymptotic expansion around $t=0$ of the solution of the first-order Pfaffian and Fuchsian system of DEs~\eqref{eq:fuchsian}. It is given by
\begin{align} \label{eq:Gexp}
\tilde{G}_w(t,y) \, = \, \tilde{U}(t,y) \cdot \text{e}^{\tilde{B}_t^{\text{res}} \log(t)} \cdot \tilde{h}(y) \,.
\end{align}
The gauge transformation matrix, $\tilde{U}(t,y)$, is given by a Taylor series around $t=0$ starting with the identity matrix, and can be computed algorithmically up to any order in $t$. The matrix exponential, $\text{e}^{\tilde{B}_t^{\text{res}} \log(t)}$, is irrelevant for our purposes, as we will eventually set $t=1$, and $ \text{e}^{\tilde{B}_t^{\text{res}} \log(1)} = \mathbb{I}_4$. The matrix-valued function $\tilde{h}(y)$ given in \Cref{eq:h} is therefore the only source of powers of $\log(y)$ in this approach. 

\smallskip

We can now write down the asymptotic expansions of the solution to the ideal $I_3^y$. We recall that a solution to the latter, $\tilde{f}(y)$, is by construction given by the first component of the vector-valued function $\tilde{F}(y)$ (see \eqref{eq:Fdef}). The latter is related to the vector-valued function $\tilde{G}(y)$ through the gauge transformation~\eqref{eq:G2F} with transformation matrix $\tilde{T}(y)$. Putting everything together, we have
\begin{align}
\tilde{f}(y) \,=\, \left[ \tilde{T}\left(y \right) \cdot \tilde{G}(y) \right]_1 \,,
\end{align} 
where the subscript $1$ denotes the first component of the vector. The series expansion of $\tilde{G}(y)$ truncated at $w$-weight $k$ is given by the asymptotic expansion of $\tilde{G}_w(t,y)$ around $t=0$ up to order $k$, given by~\eqref{eq:Gexp}, setting $t=1$. Similarly, the series expansion of the transformation matrix is obtained by the Taylor expansion of $\tilde{T}_w(t,y) = \tilde{T}(t^w y)$ around $t=0$, truncated at~$t^k$ and evaluated at $t=1$.
Then, the four linearly independent series solutions to $I^y_3$ truncated at $w$-weight~$3$~are
{
\begin{align}\label{eq:wasowexpansions}\begin{split}
\tilde{\mathfrak{h}}_1(y) &\,=\,  \left[1  \right] + \left[y_2\right] + \left[ y_2^2 
 + y_3\right] + \cdots  \,, \\
\tilde{\mathfrak{h}}_2(y) &\,=\,  \left[  \log(y_2) \right] + \left[y_2 \log(y_2)  
 \right] + \left[ 2 y_3 + (y_2^2 + y_3)  \log(y_2) \right] + \cdots  \,, \\
\tilde{\mathfrak{h}}_3(y) &\,=\, 
 \left[\log(y_3)\right] + \left[ 2 y_2 + y_2 \log(y_3) \right] + 
 \left[ 3 y_2^2 + (y_2^2 + y_3) \log(y_3) \right] +\cdots  \,, \\
\tilde{\mathfrak{h}}_4(y) &\,=\,    
 \left[\log(y_2) \log(y_3) \right] + \left[ -2 y_2 + 2 y_2 \log(y_2) + y_2 \log(y_2) \log(y_3) \right]  \\ 
& \quad \ + \left[ -\frac{5}{2} y_2^2 - 2 y_3 + 3 y_2^2 \log(y_2) + 2 y_3 \log(y_3) + (y_2^2 + y_3) \log(y_2) \log(y_3) \right]\\ 
& \quad \ + \cdots \,.
\end{split} 
\end{align}}

In the above functions, the square brackets highlight terms of different weight, and the dots denote terms of $w$-weight $4$ or higher. From these, we see that the starting monomials coincide with the solutions to the indicial ideal in~\eqref{eq:indsol1}. The series expansions in~\eqref{eq:wasowexpansions} match those computed by Gr\"obner bases in~\eqref{eq:canSeriesC1} (with $\tilde{\mathfrak{h}}_i(y) = \tilde{f}_i(y)$),
and are related to the known solutions $f_i$ from~\eqref{eq:f1234} via
\begin{align}
\begin{split}
&f_1(1,y_2,y_3) \, = \, - \tilde{\mathfrak{h}}_4(y) - \frac{\pi^2}{3} \tilde{\mathfrak{h}}_1(y) \,, \\
&f_3(1,y_2,y_3)  \, = \, \tilde{\mathfrak{h}}_3(y)  \,, 
\\ 
&f_2(1,y_2,y_3)  \, = \, -\tilde{\mathfrak{h}}_2(y)-\tilde{\mathfrak{h}}_3(y) \,,  \\ 
& f_4(1,y_2,y_3) \, = \, -\tilde{\mathfrak{h}}_1(y) \,. \\
\end{split}
\end{align}
We thus find agreement among the known solutions, the canonical series solutions computed via the SST algorithm, and those computed by solving the Pfaffian system asymptotically. The computations for the remaining cones of the Gr\"{o}bner fan follow the same strategy. The results can be obtained by relabelling the indices as discussed at the end of \Cref{sec:solI3Groebner}.

\section{Conclusion and outlook}\label{sec:outlook}
In this paper, we studied algorithms for obtaining solutions of linear partial differential equations relevant for particle physics. In principle, this applies to all Feynman integrals, as the latter are known to satisfy systems of differential equations (see e.g.~\cite{Muller-Stach:2012tgj}), as well as to further special functions appearing as their solutions, such as multiple polylogarithms. 
We implemented a method, originally proposed by Saito, Sturmfels, and Takayama~\cite{SST00}, to compute canonical Nilsson series expansions. The SST algorithm had been in use already~\cite{Cruz19,TH22} for GKZ systems. To our knowledge, our paper is the first one to implement the algorithmic ideas of SST without requiring hypergeometricity. Using the particular example of conformal differential equations satisfied by a one-loop triangle Feynman integral with general propagator powers, we implemented the SST method and compared it to a method of Wasow~\cite{Wasow} that is more commonly used in the context of differential equations for Feynman integrals~\cite{Henn:2014qga,Lee:2014ioa}.

Using the example of the triangle Feynman integral, we showed how methods \cite{SST00,SatStu19} from the theory of $D$-modules can be used to determine key features of the differential equations, as well as canonical series expansions of the solutions. In particular, for holonomic systems, the holonomic rank corresponds to the number of master integrals in the physics literature. The singular locus determines the set of (possible) singularities of the corresponding Feynman integrals. The solutions to the indicial ideal are the starting terms of the series. Following SST, we use Gr\"obner basis methods to obtain canonical series expansions. Leveraging the techniques developed for computing Feynman diagrams, we present alternative methods which allow us to extract this precious information from the PDEs without relying on the computation of Gr\"obner bases.

It is interesting to ask about the scope of the method. Proxies for the complexity of the differential equations are: the number of variables; the order of the differential operators; the number of differential equations. A potential bottleneck in the SST approach could be the reliance on Gr\"obner bases. 
However, compared to other situations where Gr\"obner methods are used (e.g.\ for analyzing equations of high polynomial degree), one would expect the polynomials appearing in the PDEs of physical interest to have moderate degree.
 (As a proof of principle, we showed in Appendix C an application to a four-loop ladder integral.)
Moreover, our comparison with Wasow's method shows that, in principle, these methods can be replaced by linear algebra operations, which could potentially scale better.
Then again, Wasow's method relies on the Pfaffian system being in Fuchsian form at the regular singular point around which we wish to expand. While several strategies to achieve this have been developed especially in the context of Feynman integrals, they have limitations.
This makes it even more interesting to have two complementary approaches.

There are various interesting practical and conceptual questions for follow-up work. On the practical side, the physics literature provides many important classes of Feynman integrals that are relevant for the phenomenology of elementary particles, in the context of gravitational wave physics, or in cosmology, for example. A prerequisite of our method is the knowledge of the relevant differential equations. Obtaining the latter is an interesting active area of research in itself, see e.g.~\cite{LV22} and the references therein. It would be interesting to study the scope of the SST algorithm for the Picard--Fuchs equations obtained in~\cite{LV22}.

On the conceptual side, let us mention the following questions. Firstly, when dealing with equations in multiple variables, one may ask how one can restrict the differential equations to lower-dimensional submanifolds. Physical examples include on-shell limits, or restrictions to submanifolds corresponding to (spurious) singularities. Secondly, there are important situations where only a subset of the relevant differential equations is available (e.g.\ because they are easier to obtain than the complete equations, as e.g.\ in~\cite{Drummond:2010cz}). In other words, in these situations, the holonomic rank of the $D$-ideal is not finite, and hence there is functional freedom in the solutions. This happens for example for the differential equations that follow from conformal or Yangian symmetry beyond the three-particle case. It is then interesting to study how holonomic techniques can be used to obtain useful information, such as which additional constraints may be added to make the system holonomic.
Finally, the triangle Feynman integral considered here has a finite value, but Feynman integrals typically exhibit divergences in the infrared and ultraviolet regions of the loop integration. 
 In dimensional regularization, these divergences are regulated by analytically continuing to generic $d\in \mathbb{C}$ spacetime dimensions. The divergences are then manifested as poles in the Laurent expansion around $d=d_0$ for some integer $d_0$ (typically $d_0=4$). The coefficients of this Laurent expansion are solutions to regular holonomic $D$-ideals in the kinematic variables and, as such, can in principle be treated using the $D$-module techniques discussed here. It would therefore be important to find a way to set up the expansion around $d=d_0$ within the $D$-module approach. 

\medskip

Beyond these specific research questions, we believe that the exchange of methods promoted in this work will be fruitful for both the communities of mathematicians and theoretical physicists. Given the ubiquity of PDEs, we envision these methods will be beneficial for many further applications. 

\subsection*{Data Availability Statement.} 

Our implementations are made available via the MathRepo~\cite{mathrepo}---a repository website hosted by MPI~MiS---at \url{https://mathrepo.mis.mpg.de/DModulesFeynman}. It follows the FAIR data principles of the mathematical research data initiative \href{https://www.mardi4nfdi.de/about/mission}{MaRDI}~\cite{Mardi}, 
which aim to improve {\bf f}indability, {\bf a}ccessibility, {\bf i}nteroperability, and {\bf r}euse of digital~assets. 

\subsection*{Acknowledgments.}
We thank Bernd Sturmfels and Clément Dupont for several insightful discussions, MPI-MiS Leipzig for their support and hospitality, and the referees for their helpful comments. \mbox{A.-L.~S.} was partially supported by the Wallenberg AI, Autonomous Systems and Software Program (WASP) funded by the Knut and Alice Wallenberg Foundation. This project received funding from the European Union’s Horizon 2020 research and innovation programs {\em Novel structures in scattering amplitudes} (grant agreement No.~725110) and {\em High precision multi-jet dynamics at the LHC} (grant agreement No.~772099), and from the European Union’s Horizon Europe Research and Innovation Programme under the Marie Skłodowska-Curie grant agreement No.~101105486. This research was supported by the Munich Institute for Astro-, Particle and BioPhysics (MIAPbP), which is funded by the Deutsche Forschungsgemeinschaft (DFG, German Research Foundation) under Germany’s Excellence Strategy -- EXC-2094 -- 390783311.

\appendix 

\section{Conformal symmetry}\label{appendix:conformal}
Conformal symmetry underlies many of the quantum field theories which are important to our understanding of fundamental physics: massless quantum electrodynamics, quantum chromodynamics, Yang--Mills theory, scalar $\phi^4$ theory, and more, all have conformal symmetry at the classical level.\footnote{Understanding the interplay between quantum corrections and conformal symmetry is object of active research, see see for instance \cite{Braun:2003rp,CHZ23} and the references therein.}
Despite its ubiquity, little is known about its implications for on-shell scattering amplitudes. The latter are functions of the particles' momenta which play a central role in describing the interactions of fundamental particles. One of the obstacles which has hindered the study of this topic is that the constraints imposed by conformal symmetry have the form of second-order differential operators in momentum space.  Determining the conformally-invariant functions relevant for the scattering of particles thus requires to solve systems of linear second-order PDEs. Given the large number of variables which describe the particles' scattering, this problem is challenging from a technical point of view. As such, it is an optimal testing ground for $D$-module techniques. 

The conformal symmetry group is an extension of the Poincar\'e group $\operatorname{SO}(1, 3) \ltimes \RR$, which is the semi-direct product of the Lorentz group and the Abelian group of translations. The Poincar\'e group plays a central role in fundamental physics, as it is the symmetry group of Einstein's theory of special relativity, and hence of all theoretical models which stem from it. The conformal group is obtained by extending the Poincar\'e group by two classes of transformations: dilatations and conformal boosts (alias ``special conformal transformations''). We refer the interested readers to~\cite{DiFrancesco:1997nk} for a thorough discussion of conformal symmetry in field theory, and present here what is necessary to introduce the system of PDEs addressed in this work.
We begin by introducing some notation. We denote by
\begin{align}
 z  \,=\,\left( z^0 , z^1, \ldots, z^{d-1} \right)^{\top} \,,
\end{align}
the vector of $d$-dimensional spacetime coordinates, where $z^0$ is the time coordinate, and $z^i$ for $i=1,\ldots,d-1$ are the spatial coordinates. We define the scalar product of two coordinate vectors $z_1$ and $z_2$ as
\begin{align}
z_1 \cdot z_2 \, \coloneqq \, z_1^{\top} \cdot g \cdot z_2 \,,
\end{align}
where $g = \text{diag}(1,-1,\ldots,-1)$ is called {\em metric tensor}, and define $|z|^2 \coloneqq z\cdot z$.

The conformal transformations are given by the composition of the coordinate transformations shown in Table~\ref{tab:conformal}.\footnote{A conformal boost can also be viewed as the composition of an inversion ($z \to z/|z|^2$), followed by a translation, and another inversion.} They form a group with respect to composition.
A function is {\em conformally-invariant} if it is annihilated by the generators of the conformal group. The latter are first-order differential operators in the spacetime coordinates which obey the commutation relations of the Lie algebra associated with the conformal group. The generator of the group of dilatations for $n$ points of coordinates $z_1, \ldots,z_n$ is for instance given by
\begin{align}
\mathfrak{D}_n \,=\, - \mathrm{i} \sum_{k=1}^n \left( z_k \cdot \partial_{z_k} + c_k \right) \,,
\end{align}
where the $c_k \in \mathbb{R}$ are parameters called {\em conformal weights}, and $\partial_{z_k}$ denotes the vector of partial derivatives,
\begin{align} \partial_{z_k} = \left( \frac{\partial}{\partial z_k^{\mu}}\right)_{\mu \,=\, 0,\ldots,d-1}^{\top} \,.
\end{align}

\bgroup
\def\arraystretch{1.5}
\begin{center}\begin{table}[t]
\begin{tabular}{|ll|}
\hline
Translations & $z \longrightarrow z + \epsilon$, \  $\epsilon\in\mathbb{R}^d$ \\  
(Proper) Lorentz transformations &
$z \longrightarrow \Lambda \cdot z$, \ $\Lambda \in \text{SO}(1,d-1)$ \\
Dilatations & $z \longrightarrow \text{e}^{\omega} \, z$, \ $\omega \in \mathbb{R}$ \\
Conformal boosts & $z \longrightarrow  \dfrac{z - |z|^2\, \epsilon}{1 - 2 \, z \cdot \epsilon + |z|^2 |\epsilon|^2 }$, \ $\epsilon \in \mathbb{R}^d$ \\
\hline
\end{tabular}
\caption{Conformal transformations of the spacetime coordinate $z$.}
\label{tab:conformal}
\end{table}\end{center}
\egroup
Scattering amplitudes are functions of the particles' momenta. Like the spacetime coordinates $z_k$, the momenta $p_k$ are $d$-dimensional vectors. Their zeroth components, $p_k^0$, give the particles' energy, while the other components, $p_k^{\mu}$ with $\mu=1,\ldots,d-1$, are related to their velocity. Coordinate and momentum space are related by a Fourier transform.

The momentum-space realization of the generators of the Poincar\'e group are given by first-order operators also in the momenta $p_k$. The constraints they impose are particularly simple. The invariance under translations implies the conservation of total momentum, namely that $p_1+\cdots+p_n = 0$. We can use this constraint to eliminate one of the momenta, say $p_n$. The invariance under Lorentz transformations instead implies that the invariant functions depend on the momenta only through the scalar products $p_k \cdot p_{\ell}$, called {\em Mandelstam invariants} in the physics literature. Therefore, a generic Poincar\'e invariant function of $n$ momenta $p_1,\ldots,p_n$ depends only on $n(n-1)/2$ variables, which can be chosen as
\begin{align}
 \mathfrak{s}_n \, \coloneqq \, \left\{ p_k \cdot p_{\ell} \,|\, 1 \le k \le \ell \le n-1 \right\} \,.
 \end{align}
From here onwards, we will thus consider functions of such variables, and focus on dilatations and conformal boosts. The momentum-space realization of the generator
of dilatations is
\begin{align}\label{eq:Dmomentum}
    \widehat{\mathfrak{D}}_n \,=\, \mathrm{i} \sum_{k=1}^n \left( p_k \cdot   \frac{\partial}{\partial p_{k}} + d - c_k \right) \,,
\end{align}
and those of conformal boosts is\footnote{We focus on Lorentz-scalar quantities. The form of the conformal boost generator for functions which transform in a non-trivial representation of the Lorentz group has an additional first-order term (see e.g.~\cite{DiFrancesco:1997nk}).}
\begin{align} \label{eq:Kmu}
  \widehat{\mathfrak{K}}_n \,=\, \sum_{k=1}^n \left[ -p_k \, (\partial_{p_{k}} \cdot \partial_{p_{k}}) + 2 \, (p_k \cdot \partial_{p_k}) \, \partial_{p_k} + 2 \, (d-c_k) \, \partial_{p_{k}} \right] \, .
\end{align}
 While the generator $\widehat{\mathfrak{D}}_n$ of the group of dilatations is a first-order operator also in momentum space, the generator $\widehat{\mathfrak{K}}_n$ of the group of conformal boosts is now given by a second-order operator. Determining the conformally-invariant functions for a given $n$ therefore amounts to solving the system of linear second-order PDEs
\begin{align}
\widehat{\mathfrak{D}}_n \bullet f\left(\mathfrak{s}_n\right) \,=\, 0 \qquad \text{and} \qquad \widehat{\mathfrak{K}}_n \bullet f\left(\mathfrak{s}_n\right) \,=\, 0 \,.
\end{align}
We can rewrite all generators as differential operators with respect to the set of variables $\mathfrak{s}_n$ using the chain rule for any $n$. 

For this first study, we focused on the case of $n=3$ particles with momenta $p_i$. We choose $\mathfrak{s}_3 = \{x_1,x_2,x_3\}$ with $x_i = |p_i|^2$ as the minimal set of variables to implement Poincar\'e symmetry in momentum space, and use the chain rule to express the generators of dilatations and conformal boosts as differential operators in them. The operators $P_1$ and $P_2$ in \Cref{I3system} stem from $\hat{\mathfrak{K}}_3$, while $P_3$ stems from $\hat{\mathfrak{D}}_3$.
The space of conformally-invariant functions is easier to determine in the coordinate space; there, the conformal generators are of order one. It is long known (see e.g.~\cite{Polyakov:1970xd}) that one solution is given by
\begin{align} \label{eq:coordinate_solution}
 \frac{1}{\left(|z_1-z_2|^2\right)^{\frac{c_1+c_2-c_3}{2}} \left(|z_2-z_3|^2\right)^{\frac{c_2+c_3-c_1}{2}} \left(|z_3-z_1|^2\right)^{\frac{c_3+c_1-c_2}{2}}} \,.
\end{align}
The Fourier transform of this function to momentum space can be expressed as the one-loop triangle integral $J^{\text{triangle}}_{d;\nu_1,\nu_2,\nu_3}$ from \Cref{eq:triangle_mom}, with the exponents $\nu_i$ being related to the conformal weights $c_i$ via \Cref{eq:c2nu}. This constitutes the link between the triangle Feynman integrals and conformal symmetry, which is central to \Cref{sec:confdiffeq}.

\section{Integrable connections and Pfaffian systems}\label{appendix:IntegrPfaf}
The presentation of this section follows mainly \cite{HTT08}.
Let $X$ be a smooth algebraic variety. Denote by $\mathcal{O}_X$ the sheaf of regular functions on $X$ and by $\Theta_X$ the tangent sheaf on $X$. The sheaf of linear differential operators on $X$, denoted $\mathcal{D}_X$, is the sheaf of subalgebras of $\mathcal{E}\text{nd}_{\mathbb{C}_X}(\mathcal{O}_X)$ generated by $\mathcal{O}_X$ and $\Theta_X$. Regular functions are differential operators of order~$0$.

\begin{example}\label{ex:Daffine}
Let $X$ be the affine $n$-space $\mathbb{A}_{\CC}^n$. The global sections of its sheaf of regular functions are polynomials, i.e., $\mathcal{O}_X(X)=\CC[x_1,\ldots,x_n]$, and $\mathcal{D}_X(X)=D_n$. Any quasi-coherent $\mathcal{D}_{\mathbb{A}_{\mathbb{C}}^n}$-module $\mathcal{M}$ is generated by its global sections, i.e., by elements of the \mbox{$D_n$-module} $\mathcal{M}(\mathbb{A}_{\mathbb{C}^n})$. In fact, by \cite[Proposition 1.4.4]{HTT08}, quasi-coherent $\mathcal{D}_{\mathbb{A}_{\mathbb{C}}^n}$-modules are in one-to-one correspondence to $D_n$-modules.
\end{example}

Let $\mathcal{M}$ be a vector bundle, i.e., a locally free sheaf of $\mathcal{O}_X$-modules on $X$. In particular, $\mathcal{M}$ is quasi-coherent. 
Giving $\mathcal{M}$ the structure of a $\mathcal{D}_X$-module is equivalent to equipping $\mathcal{M}$ with an {\em integrable connection}, i.e., a $\mathbb{C}_X$-linear morphism
\begin{align}\label{eq:integrconn}
\nabla \colon \, \Theta_X \longrightarrow \mathcal{E}\text{nd}_{\mathbb{C}_X}(\mathcal{M}), \qquad \theta \mapsto \nabla_{\theta}
\end{align}
such that 
\begin{enumerate}[(1)]
    \item $\nabla_{f\theta}(m) = f \nabla_{\theta}(m)$, 
    \item $\nabla_{\theta}(fm)=\theta(f)m+f\nabla_{\theta}(m)$ (Leibniz' rule),
    \item $\nabla_{[\theta_1,\theta_2]}(m) = [\nabla_{\theta_1},\nabla_{\theta_2}](m)$ (integrability),
\end{enumerate}
for all sections $f\in \mathcal{O}_X,$ $m\in \mathcal{M},$ and $\theta,\theta_1,\theta_2 \in \Theta_X$.
\begin{example}[\Cref{ex:Daffine} cont'd]
Let $M$ be the $D_n$-module corresponding to $\mathcal{M}$. Then, the action of $\partial_i$ on an element $m\in M=\mathcal{M}(X)$ is given by $\nabla_{\partial_i}(m),$ i.e., $\partial_i \bullet m = \nabla_{\partial_i}(m).$
\end{example}
\begin{remark}
By the tensor-hom adjunction, an integrable connection can equally be written as
\begin{align}\label{eq:nablaOmega}
\nabla \colon \ \mathcal{M} \longrightarrow \mathcal{M}\otimes \Omega_X^1 \, ,
\end{align}
where $\Omega_X^1$ denotes the cotangent sheaf of~$X$. 
\end{remark}

After choosing a basis of the vector bundle~$\mathcal{M}$, the action of $\partial_i$ is given by a matrix~$A_i$. The flat sections of $\nabla$ in the form \eqref{eq:nablaOmega}, i.e., those local sections $m\in \mathcal{M}$ for which $\nabla(m)=0$, are the solutions to the systems encoded by the~$\partial_i-A_i$'s. In order to compute solutions, one has to consider the analytified connection, since one is typically interested in {analytic} solutions as well. Otherwise, one would ignore solutions involving $\exp$, $\ln$, and so on: they are not regular functions on affine space $\mathbb{A}_{\mathbb{C}^n}$, but they are (multi-valued) analytic functions on (subsets of) $\CC^n$.
If one changes the basis of the vector bundle $\mathcal{M}$ by an invertible $m\times m$ matrix~$U$, the matrices in the new basis become
\begin{align}\label{eq:gauge}
U \cdot A_i \cdot U^{-1} \,+\, \frac{\partial U}{\partial x_i} \cdot U^{-1} \, .
\end{align}
This process is called {\em gauge transformation} and is typically carried out at the stalks of the connection at its singular points.

\section{Reducing the four-loop ladder to solving one ODE}\label{sec:fourloop}

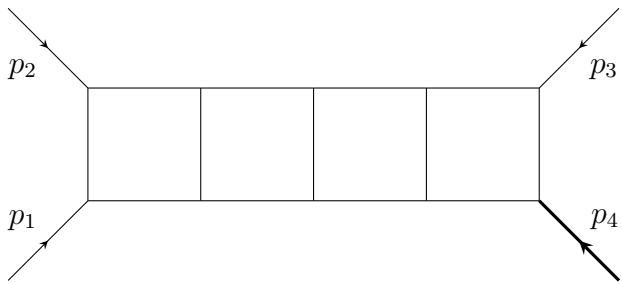
\begin{figure}[t!]
\begin{tikzpicture}[decoration={
    markings,
    mark=at position 0.5 with {\arrow{stealth}}}
    ] 
\coordinate (v1) at (0,0);
\coordinate (v2) at (0,1.5);
\coordinate (v3) at (1.5,1.5);
\coordinate (v4) at (3,1.5);
\coordinate (v5) at (4.5,1.5);
\coordinate (v6) at (6,1.5);
\coordinate (v7) at (6,0);
\coordinate (v8) at (4.5,0);
\coordinate (v9) at (3,0);
\coordinate (v10) at (1.5,0);
\coordinate (p1) at (-1.0607, -1.0607);
\coordinate (p2) at (-1.0607, 1.0607+1.5);
\coordinate (p3) at (6+1.0607, 1.0607+1.5);
\coordinate (p4) at (6+1.0607, -1.0607);
\draw (v1) -- (v2);
\draw (v2) -- (v3);
\draw (v3) -- (v4);
\draw (v4) -- (v5);
\draw (v5) -- (v6);
\draw (v6) -- (v7);
\draw (v7) -- (v8);
\draw (v8) -- (v9);
\draw (v9) -- (v10);
\draw (v10) -- (v1);
\draw (v10) -- (v3);
\draw (v9) -- (v4);
\draw (v8) -- (v5);
\draw[postaction={decorate}] (p1) -- (v1) node[above left,midway,anchor=south east] {$p_1$};
\draw[postaction={decorate}] (p2) -- (v2) node[below left,midway,anchor=north east] {$p_2$};
\draw[postaction={decorate}] (p3) -- (v6) node[below right,midway,anchor=north west] {$p_3$};
\draw[postaction={decorate}, very thick] (p4) -- (v7) node[above right,midway,anchor=south west] {$p_4$};
\end{tikzpicture}
\caption{The Feynman graph 
representing the four-loop ``ladder'' Feynman integral defined in \Cref{eq:ladder4L}. Due to momentum conservation, $p_1+p_2+p_3+p_4=0$.
The momenta corresponding to thin (thick) external legs are on-shell (off-shell), i.e., we have that $|p_i|^2 = 0$ for $i=1,2,3$, and $|p_4|^2 \neq 0$.}
\label{fig:ladder4L}
\end{figure}

In this appendix, we apply the $D$-module methods discussed in the main text to a more complicated Feynman integral:
the four-loop ``ladder'' integral with one off-shell leg,
{\small 
\begin{align}\begin{split}\label{eq:ladder4L}
J^{\text{ladder}}_{d} \,=\, \int
& \left( \prod_{j=1}^4 \frac{\mathrm{d}^d k_j}{\mathrm{i} \pi^{\frac{d}{2}}} \right)
\frac{(|p_1+p_2|^2)^{2 (6-d)} |p_2+p_3|^2}{
|k_1|^2 \, |k_1+p_1|^2 \, |k_1+p_1+p_2|^2 \, |k_2|^2  \, |k_2-k_1|^2 \, |k_2+p_1+p_2|^2 } \, \times \\
& \ \frac{1}{|k_3|^2 \, |k_3-k_2|^2 \, |k_3+p_1+p_2|^2 \, |k_4|^2 \, |k_4-k_3|^2 \, |k_4+p_1+p_2|^2 \, |k_4-p_4|^2}\,,
\end{split}
\end{align}}where the integration domain is the Minkowski spacetime (see \Cref{sec:confdiffeq}).
The corresponding Feynman graph is shown in \Cref{fig:ladder4L}. 
The normalization factor of $(|p_1+p_2|^2)^{2 (6-d)} |p_2+p_3|^2$ is chosen to make the integral dimensionless and simplify the  results.
We take the momentum $p_4$ to be off-shell, i.e.\ $|p_4|^2 \neq 0$.
There are three independent Mandelstam invariants (see \Cref{appendix:conformal}):  $|p_1+p_2|^2$, $|p_2+p_3|^2$,~$|p_4|^2$. 
We combine them in the following variables:
\begin{align} \label{eq:ydef}
    y_1 \,=\, |p_4|^2 \,, \qquad y_2 \,=\, \frac{|p_1+p_2|^2-|p_4|^2}{|p_4|^2} \,, \qquad y_3 \,=\, \frac{|p_4|^2-|p_1+p_2|^2-|p_2+p_3|^2}{|p_4|^2} \, .
\end{align}
Thanks to the chosen normalization, the ladder integral depends only on $y_2$ and $y_3$.

The state of the art for Feynman integrals with these kinematics is three loops, cf.~\cite{DiVita:2014pza,Canko:2021xmn,Henn:2023vbd}. 
Any information about its analytic structure would therefore be of great interest.
For this reason, we focus on its {\em maximal cut} in $d=4$ dimensions.
The maximal cut---which amounts to replacing all propagators,  $1/|q|^2$ for some momentum $q$, with delta functions $\delta(|q|^2)$---captures precious analytic information about the integral. 
In particular, it is central in the ``method of differential equations'' (see e.g.~\cite{Henn:2014qga} for a review)
to obtain the so-called ``canonical form''~\cite{henn2013multiloop}, which then allows for a systematic solution in terms of special functions.

In our presentation here, we take a more general approach to derive differential operators which annihilate the integral.
Instead of using conformal symmetry, we construct the Picard--Fuchs operators~\cite{LV22,Muller-Stach:2012tgj}.
We do this by following a path which is the reverse of what we discussed in \Cref{sec:intPfaffian}.
First, we construct a Pfaffian system of PDEs for the ladder integral and other three integrals with the same propagators but different numerators, by following the method of differential equations.
We generate the required integration-by-parts identities with \textsc{LiteRed}~\cite{Lee:2012cn,Lee:2013mka} and solve them with \textsc{FiniteFlow}~\cite{Peraro:2019svx}.
We then differentiate the PDEs and use Gaussian elimination to obtain operators which annihilate~$J^{\text{ladder}}_{d}$.
The resulting differential operators are
\begin{align}
\begin{split}
P_1 & \, = \, 2 y_2 (y_2 - 7 y_3) (y_3 - 2 y_2) \partial_{y_2} \partial_{y_3}  + y_2 (y_2 - 7 y_3) (2 y_2 - y_3) \partial_{y_2}^2 \\
& \ \phantom{=} \ + y_2 (y_2 - 7 y_3) (2 y_2 - 
y_3) \partial_{y_3}^2  -6 y_2^2 (2 y_2 - 3 y_3) y_3 \partial_{y_2} \partial_{y_3}^2 \\
& \ \phantom{=} \ -2 y_2^2 (2 y_2 - 3 y_3) y_3 \partial_{y_2}^3 + 2 y_2^2 (2 y_2 - 3 y_3) y_3 \partial_{y_3}^3  + 6 y_2^2 (2 y_2 - 3 y_3) y_3 \partial_{y_2}^2 \partial_{y_3}\\
& \ \phantom{=} \ -4 y_2^3 y_3^2 \partial_{y_2}^3 \partial_{y_3} -4 y_2^3 y_3^2 \partial_{y_2} \partial_{y_3}^3 + y_2^3 y_3^2 \partial_{y_2}^4 + y_2^3 y_3^2 \partial_{y_3}^4  + 6 y_2^3 y_3^2 \partial_{y_2}^2 \partial_{y_3}^2\\
& \ \phantom{=} \ - (3 y_2^2 - 8 y_2 y_3 + y_3^2) \partial_{y_3} + (3 y_2^2 - 8 y_2 y_3 + y_3^2) \partial_{y_2} \,, \\
P_2 & \,=\, y_3^2 (y_2 + y_3) \partial_{y_3}^4 
    + 2 y_3 (2 y_2 + 3 y_3) \partial_{y_3}^3 + (2 y_2 + 7 y_3) \partial_{y_3}^2 +  \partial_{y_3}  \,.
\end{split}
\end{align}
Note that both operators admit the constant function as a solution.
This results from the chosen normalization of the ladder integral.

\begin{remark}
The choice of coordinates is highly relevant to the computations---a good choice can significantly speed up the Gröbner basis computations.
\end{remark}
Indeed, the variables defined in~\Cref{eq:ydef} are designed to simplify the solution functions.
We exploit the knowledge of the singular locus and choose the variables so that the polynomial defining the singular locus simplifies to $y_2 y_3 (y_2+y_3)$. 

A Gröbner basis $\{G_1,G_2\}$ of $\langle P_1,P_2\rangle$ in the rational Weyl algebra $R_2=\CC(y_2,y_3)\langle \partial_{y_2},\partial_{y_3}\rangle$ can be computed conveniently with the the command {\tt OreGroebnerBasis} in the {\tt Mathematica} package {\tt HolonomicFunctions}~\cite{HolFun}. The operators are given by
\begin{align}\begin{split}
 G_1 &\, = \, y_2 \partial_{y_2} + y_3 \partial_{y_3} \,, \\
 G_2& \, = \, y_3^2 (y_2 +y_3) \partial_{y_3}^4+ y_3 (4 y_2 +6 y_3) \partial_{y_3}^3+(2 y_2+7 y_3) \partial_{y_3}^2+ \partial_{y_3} \,.
\end{split}\end{align}
The $D_2$-ideal $I=\langle G_1,G_2\rangle$ as well as its Weyl closure $W(I)=R_2I\cap D_2$ are holonomic of rank~$4$. Since holomorphic functions in $y_2$ and $y_3$ form a torsion-free $\CC[y_2,y_3]$-module, a $D$-ideal and its Weyl closure have the same holomorphic functions as solutions. The Weyl closure of $I$ is generated by the three differential~operators
\begin{align}
\begin{aligned}
W_1& \,=\, y_2 \partial_{y_2}+ y_3 \partial_{y_3} \,, \\
W_2& \,=\, y_3^2 \partial_{y_2} \partial_{y_3}^3-y_3^2 \partial_{y_3}^4+3 y_3 \partial_{y_2} \partial_{y_3}^2-4 y_3 \partial_{y_3}^3+\partial_{y_2} \partial_{y_3}-2 \partial_{y_3}^2 \,, \\
W_3&  \,=\, y_2 y_3^2 \partial_{y_3}^4+y_3^3 \partial_{y_3}^4+4 y_2 y_3 \partial_{y_3}^3+6 y_3^2 \partial_{y_3}^3+2 y_2 \partial_{y_3}^2+7 y_3 \partial_{y_3}^2+\partial_{y_3} \,.
\end{aligned}
\end{align}
They may be obtained with the {\sc Singular} code below.
{\small
\begin{verbatim}
LIB "dmod.lib";
ring r = 0,(y2,y3,Dy2,Dy3),dp; def D2 = Weyl(r); setring D2;
poly G1 = y2*Dy2 + y3*Dy3;
poly G2 = (y2*y3^2+y3^3)*Dy3^4+(4*y2*y3+6*y3^2)*Dy3^3+(2*y2+7*y3)*Dy3^2+Dy3;
ideal I = G1,G2;
def WI = WeylClosure(I); WI;
\end{verbatim}
}

Note that, for the given generators, the $D$-ideal $\langle P_1,P_2 \rangle$ is already too complicated for {\sc Singular} to compute a Gröbner basis within a couple of minutes. 
But, one can check that $P_2\in W(I)\setminus I$. Indeed, $I \subsetneq W(I)=W(\langle P_1,P_2\rangle)$ as $D$-ideals. Their singular loci are
\begin{align}
 \Sing(I) \,=\, \Sing(W(I)) \,=\,   V(y_2y_3(y_2+y_3))  \, . 
\end{align}

The operator $G_1$ encodes that the solution functions $f(y_2,y_3)$ of the $D_2$-ideal~$I$ are homogeneous of degree~$0$, and therefore $f(y_2,y_3)=f(y_2/y_3,1)$. 
This is not obvious by looking  at the Feynman integral, and is therefore a powerful insight of the $D$-module~approach.

\begin{remark}
    It is possible to deduce that the system can be rewritten in one variable purely from the geometry of the Gr\"obner fan. The Gr\"obner fan has rays generated by the vectors $(1,1)$ and $(-1,-1)$, and is a one-dimensional fan in $\mathbb{R}^2,$ perpendicular to $(1,-1).$ The only cones in the dual fan are $\RR_{\geq 0}\cdot(1,-1)$ and $\RR_{\geq 0}\cdot(-1,1)$. Thus, the SST algorithm tells us that there are two families of series expansions in $y_2$ and $y_3$, with exponents only in $\NN \cdot (1,-1)$ and $\NN \cdot (-1,1),$ respectively.  From this, we see that we can rewrite these series expansions in the single variable $y = y_2y_3^{-1}.$
\end{remark}

We hence reduce the system to a single variable, namely the ratio $y \coloneqq y_2/y_3$. 
To do so, we change variables from $(y_2,y_3)$ to $(y,z)$ via 
\begin{align}
y \,=\, {y_2}y_3^{-1}, \quad z\,=\,y_3 \,.
\end{align}
Since the $D_2$-module in question is $D/I$, one can identify $z\partial_z =y_2\partial_{y_2}+y_3\partial_{y_3}\equiv 0$, so that we can disregard $G_1$ (keeping the homogeneity of its solutions in mind).
After this change of variables and multiplying by $z/y$, the operator $G_2$ becomes
\begin{align}
\frac{z}{y} \, G_2^y \,=\, y^3 (y+1)\partial_y^4+y^2(6+8y)\partial_y^3
+y (7+14y)\partial_y^2+(1+4y)\partial_y \, .
\end{align}
Denoting $\theta:=y\partial_y$ the Euler operator in $y$, $G_2^y$ becomes
\begin{equation}\label{eq:gop}
    G_2^y \,=\, \frac{1}{z}\left[ \theta^4 \,+\, y \, \theta^2\left(\theta+1\right)^2 \right] .
\end{equation}

For solving the associated differential equation, we can ignore the pre-factor of~$z^{-1}$ and hence are in the ODE case, with a single variable~$y$. 
 
Since the system is in a single variable, we will only need to distinguish between $w$ being positive or negative, and can hence restrict to the weight being $w=\pm  1$. The starting monomials for $w=1$ can be obtained from the initial ideal $\init_{(-1,1)}(G_2^y)=\langle \theta^4\rangle.$ They are 
\begin{align}
    1, \, \log(y),\,  \log(y)^2,\, \log(y)^3.
\end{align}
As in \Cref{eg:hypergeo}, we encode the coefficients of our Puiseux series in a  vector $c_p$ of length $4$ specifying the coefficients of $y^p, y^p \log(y), y^p \log(y)^2,$ and $y^p\log(y)^3.$ As before, we can encode the action of $\theta$ on the series as multiplication of the coefficient vector $c_p$ by the matrix
\begin{align}
    T_p \, = \, \begin{bmatrix}
    p + a & 1 & 0 & 0 \\
    0 & p + a & 2 & 0 \\
    0 & 0 & p + a & 3\\
    0 & 0 & 0 & p + a
\end{bmatrix} .
\end{align}
Here, $a$ is the coefficient of $y$ in the starting monomial, and we omit it from the index  when clear from context. Let $H_p = T_p^4$ and $F_p = -T_p^2(T_p  + \operatorname{Id}_4)^2.$ Then, functions annihilated by $G^y_2$ satisfy the recurrence relation
\begin{equation}
    H_p\cdot c_p \, = \, -F_{p-1}\cdot c_{p-1} \,.
\end{equation}
The first two starting monomials are themselves solutions. Let us start with $\log(y)^2,$ or rather the vector $c_0 = (0,0,1,0).$ Then $c_1 = (-2,0,0,0).$ Let $a_p$ be the coefficient of 
$y^p$. The upper left entry of the matrix $-H_p^{-1}F_{p-1}$ gives us the recurrence
\begin{equation}
    a_p \,=\, -\frac{(p-1)^2}{p^2} \, a_{p-1} \,,
\end{equation}
with $a_1 = -2.$ 
Next, let us start with $\log(y)^3,$ or rather the vector $c_0 = (0,0,0,1).$ Then $c_1 = (12,-6,0,0).$ Let $a_p$ be the coefficient of $y^p$, and $b_p$ be the coefficient of $y^p\log(y).$ Examining entries of $-H_p^{-1}F_{p-1}$ gives the recurrence
\begin{align}
    a_p  \,=\, -\frac{(p-1)^2}{p^2} \, a_{p-1}-\frac{2(p-1)}{p^3} \, b_{p-1} \, , \qquad
    b_p \,=\, -\frac{(p-1)^2}{p^2}\,c_{p-1} \,,
\end{align}
with $a_1 = 12 $ and $b_1 = -6.$ Solving the recurrences gives us the four closed-form solutions: 
\begin{align}\label{eq:series_w=1}
\begin{split}
    f_1(y) & \, = \, 1 \,, \qquad \quad 
    f_2(y)  \, = \, \log(y) \,, \\
    f_3(y) & \,=\, \log(y)^2 + 2\sum_{p=1}^{\infty} \frac{(-y)^p}{p^2}
    \,=\, \log(y)^2 + 2\,\text{Li}_2(-y) \, , \\
    f_4(y) & \,=\, \log(y)^3 -12\sum_{p \,=\, 1}^{\infty}\frac{(-y)^p}{p^3} + 6 \log(y)\sum_{p\,=\, 1}^{\infty} \frac{(-y)^p }{p^2} \\ & \,=\, \log(y)^3 -12 \,\text{Li}_3(-y)+6\log(y)\text{Li}_2(-y) \, . 
\end{split}\end{align}
One can easily verify that these are solutions to the operator $G_2^y$. 

\smallskip

Similarly, for $w = -1$ we obtain the exponents $\{0,-1\}$ and starting monomials
\begin{align}
    1, \, \log(y), \, y^{-1}, \,y^{-1}\log(y) \, .
    \end{align}
We use the recursion $F_p\cdot c_p = -H_{p+1}\cdot c_{p+1}.$ Here, there is a slight complication: with starting monomial $\log(y),$ the matrix $F_p$ has a singularity at $a = 0$ and $p=-1$. However, we use the definition of canonical series to impose that neither of the terms $y^{-1}, \, y^{-1} \log(y)$ appears in the series expansions of the remaining solutions. Explicitly, we set the vector $c_{-1}$ equal to~zero. As before, expressing the recurrences in closed form gives us the Laurent series
\begin{align}\begin{split}
g_1(y) & \,=\, 1 \,, \qquad \quad
g_2(y)  \,=\, \log(y) \,, \\
g_3(y) & \,=\, y^{-1} \,+\, y^{-1} \sum_{p \,=\, -1}^{-\infty}\frac{(-y)^p}{(p-1)^2} \,, \\
g_4(y) & \,=\, y^{-1}\log(y) \,-\, 2y^{-1}\sum_{p \,=\, -1}^{-\infty}\frac{p\cdot (-y)^p}{(p-1)^3}  \, +\, y^{-1}\log(y)\sum_{p \,=\, -1}^{-\infty}\frac{(-y)^p}{(p-1)^2} \, .
\end{split}\end{align}
These correspond to different series expansions of the di- and tri-logarithm in~\eqref{eq:series_w=1}.

\addcontentsline{toc}{section}{References}

\end{document}